






\documentclass{eptcs}
\usepackage{breakurl}
\usepackage{underscore}
\usepackage{amsmath,amsfonts,amsthm,amssymb,stmaryrd,relsize}

\theoremstyle{plain}

\numberwithin{equation}{section}
\newtheorem{thm}{Theorem}[section]
\newtheorem{lem}[thm]{Lemma}
\newtheorem{cor}[thm]{Corollary}

\allowdisplaybreaks  

\newcounter{cond}

\newcommand{\complex}{{\mathbb C}}
\newcommand{\real}{{\mathbb R}}

\newcommand{\ascript}{{\mathcal A}}
\newcommand{\bscript}{{\mathcal B}}
\newcommand{\cscript}{{\mathcal C}}
\newcommand{\dscript}{{\mathcal D}}
\newcommand{\escript}{{\mathcal E}}
\newcommand{\fscript}{{\mathcal F}}
\newcommand{\gscript}{{\mathcal G}}
\newcommand{\hscript}{{\mathcal H}}
\newcommand{\sscript}{{\mathcal S}}

\newcommand{\ahat}{\widehat{a}}
\newcommand{\bhat}{\widehat{b}}
\newcommand{\capahat}{\widehat{A}}
\newcommand{\capbhat}{\widehat{B}}
\newcommand{\chat}{\widehat{c}}
\newcommand{\alphabar}{\overline{\alpha}}
\newcommand{\ascripthat}{\widehat{\ascript}}

\newcommand{\Omegahat}{\widehat{\Omega}}
\newcommand{\atilde}{\widetilde{a}}
\newcommand{\btilde}{\widetilde{b}}
\newcommand{\ctilde}{\widetilde{c}}
\newcommand{\caputilde}{\widetilde{U}}

\newcommand{\ctimes}{\mathrel{\mathlarger\cdot}}

\newcommand{\ab}[1]{\left|#1\right|}
\newcommand{\doubleab}[1]{\left|\left|#1\right|\right|}
\newcommand{\brac}[1]{\left\{#1\right\}}
\newcommand{\paren}[1]{\left(#1\right)}
\newcommand{\sqbrac}[1]{\left[#1\right]}
\newcommand{\sqparen}[1]{{\left[#1\right)}}
\newcommand{\elbows}[1]{{\left\langle#1\right\rangle}}
\newcommand{\ceils}[1]{{\left\lceil#1\right\rceil}}

\errorcontextlines=0

\title{Contexts in Convex and Sequential Effect Algebras}
\author{Stan Gudder
  \institute{Department of Mathematics\\
    University of Denver\\ Denver, Colorado 80208}
  \email{sgudder@du.edu}
}

\begin{document}
\maketitle

\begin{abstract}
A convex sequential effect algebra (COSEA) is an algebraic system with three physically motivated operations, an orthogonal sum, a scalar product and a sequential product. The elements of a COSEA correspond to yes-no measurements and are called effects. In this work we stress the importance of contexts in a COSEA. A context is a finest sharp measurement and an effect will act differently according to the underlying context with which it is measured. Under a change of context, the possible values of an effect do not change but the way these values are obtained may be different. In this paper we discuss direct sums and the center of a COSEA. We also consider conditional probabilities and the spectra of effects. Finally, we characterize COSEA's that are isomorphic to COSEA's of positive operators on a complex Hilbert space. These result in the traditional quantum formalism. All of this work depends heavily on the concept of a context.

\end{abstract}


\section{Introduction}  
We present an axiomatic framework for quantum mechanics in which the basic entities and operations have physical significance. In this framework, the principle concepts are states and effects. The states represent initial preparations that describe the condition of the system, while the effects represent yes--no measurements that probe the system. The effects may be unsharp or fuzzy \cite{dp94,fb94,gud982,kra83}. A state applied to an effect produces the probability that the effect gives a yes value when the system is in that state. The resulting mathematical structure is called a convex sequential effect algebra (COSEA) $\escript$ \cite{gud99,gp98,gud18,wet181,wet182}. The three mathematical operations in $\escript$ are an orthogonal sum $a\oplus b$, a scalar product $\lambda a,\lambda\in [0,1]\subseteq\real$ and a sequential product $a\circ b$. These operations have physical interpretations that we now discuss.

Although this framework is much more general, we can employ the model of an optical bench to visualize what is happening here. A beam of particles (photons, electrons, etc.) is emitted from a source and propagates through a channel on the bench until the beam arrives at a detector at the end of the channel. The particles are initially prepared in a certain state and the effects describe various filters that can be placed in the channel. The beam travels through one or more filters which interact with the beam and can change its properties in certain ways. The detector may count particles or measure different characteristics of the beam. The sum $a\oplus b$ is performed by first splitting the beam into two equal parts, which are directed toward the two filters placed in parallel after which both beams are reunited before being collected at the detector. The scalar product $\lambda a$ corresponds to an attenuation of filter $a$ by the factor $\lambda$. This can be accomplished by placing a gray filter with a certain darkness in front of filter $a$. The gray filter blocks some of the particles but does not otherwise disturb the beam. The sequential product $a\circ b$ is performed by placing the filters in series so that $a$ is first and $b$ is second. In this way, filter $a$ can interfere with the operation of filter
$b$ while $b$ cannot interfere with the operation of $a$. We will find this useful for describing quantum interference.

In this work, an important role will be played by the context under which an effect is observed. A context is a finest sharp measurement and an effect will act differently according to the underlying context with which it is measured. For example, in the optical bench scenario, changing contexts may result from altering the detectors or varying the size, shape or location of the bench. Under a change of context, the possible values of an effect do not change but the way these values are obtained may be different. As far as contexts are concerned, there is a great difference between classical and quantum systems. We shall show that classical systems have exactly one context, while quantum systems have infinitely many.

In Section 2 we define the concepts of COSEA's and contexts. Section~3 discusses direct sums and the center of a COSEA. Section~4 considers conditional probabilities and spectra of effects. Finally, Section~5 characterizes COSEA's that are isomorphic to COSEA's of positive operators on a complex Hilbert space. Of course, these result in the traditional quantum formalism. There is some overlap of this paper and the work in \cite{wet181,wet182}. However, our stress on contexts provides a different approach.

\section{Convex Sequential Effect Algebras} 
Let $\escript$ be the set of effects and $\sscript$ the set of states for a physical system. The connection between $\escript$ and $\sscript$ is given by a \textit{probability function} $F\colon\escript\times\sscript\to [0,1]\subseteq\real$ where $F(a,s)$ is interpreted as the probability that effect $a$ has a yes value when the system is in state $s$. An \textit{effect-state space} is a triple $(\escript ,\sscript ,F)$ where $\escript$ and $\sscript$ are nonempty sets and $F\colon\escript\times\sscript\to [0,1]$ satisfies:
\begin{list} {(ES\arabic{cond})}{\usecounter{cond}
\setlength{\rightmargin}{\leftmargin}}
\item There exist elements $0,1\in\escript$ such that $F(0,s)=0$, $F(1,s)=1$ for every $s\in\sscript$.
\item If $F(a,s)\le F(b,s)$ for every $s\in\sscript$, then there exists a unique $c\in\escript$ such that $F(a,s)+F(c,s)=F(b,s)$ for all
$s\in\sscript$.
\item If $a\in\escript$ and $\lambda\in\sqbrac{0,1}$, then there exists an element $\lambda a\in\escript$ such that
$F(\lambda a,s)=\lambda F(a,s)$ for all $s\in\sscript$.
\end{list}

The elements $0,1$ in (ES1) correspond to the null effect that never occurs and the unit effect that always occurs, respectively. It is shown in
\cite{gud99,gp98} that if $F(a,s)+F(b,s)\le 1$ for every $s\in\sscript$, then there exists a unique $c\in\escript$ such that
\begin{equation*}
F(c,s)=F(a,s)+F(b,s)
\end{equation*}
for all $s\in\sscript$. We then write $a\perp b$ and define $a\oplus b=c$. In this way, $\oplus$ is a partial binary operation on $\escript$.

The structure $(\escript ,0,1,\oplus )$ is called an \textit{effect algebra} and satisfies the following axioms:
\begin{list} {(EA\arabic{cond})}{\usecounter{cond}
\setlength{\rightmargin}{\leftmargin}}
\item If $a\perp b$, then $b\perp a$ and $b\oplus a=a\oplus b$,
\item If $a\perp b$ and $(a\oplus b)\perp c$, then $b\perp c$, $a\perp (b\oplus c)$ and $a\oplus (b\oplus c)=(a\oplus b)\oplus c$,
\item For every $a\in\escript$ there exists a unique $a'\in\escript$ such that $a\perp a'$ and $a \oplus a'=1$,
\item If $a\perp 1$, then $a=0$.
\end{list}

We define $a\le b$ if there is a $c\in\escript$ such that $a\oplus c=b$. The element $c$ is unique and we write $c=b\ominus a$. It can be shown that $(\escript ,0,1,\le )$ is a bounded poset and $a\perp b$ if and only if $a\le b'$ \cite{dp94,fb94}. Moreover, $a''=a$ and $a\le b$ implies $b'\le a'$ for all $a,b\in\escript$. If we incorporate the scalar product $\lambda a$ of (ES3) we obtain the following structure. An effect algebra $\escript$ is
\textit{convex} \cite{gud99,gp98,gud18} if for every $a\in\escript$ and $\lambda\in [0,1]\subseteq\real$ there exists an element $\lambda a\in\escript$ such that
\begin{list} {(CO\arabic{cond})}{\usecounter{cond}
\setlength{\rightmargin}{\leftmargin}}
\item If $\alpha ,\beta\in\sqbrac{0,1}$ and $a\in\escript$, then $\alpha (\beta a)=(\alpha\beta )a$.
\item If $\alpha ,\beta\in\sqbrac{0,1}$ with $\alpha +\beta\le 1$ and $a\in\escript$, then $\alpha a\perp\beta b$ and
$(\alpha +\beta )a=\alpha a\oplus\beta a$.
\item If $a,b\in\escript$ with $a\perp b$ and $\lambda\in\sqbrac{0,1}$, then $\lambda a\perp\lambda b$ and
$\lambda (a\oplus b)=\lambda a\oplus\lambda b$.
\item If $a\in\escript$, then $1a=a$.
\end{list}

We call an effect algebra an EA and a convex effect algebra a COEA, for short. In $\escript$ and $\fscript$ are EA's, a map
$\phi\colon\escript\to\fscript$ is \textit{additive} if $a\perp b$ implies that $\phi (a)\perp\phi (b)$ and
\begin{equation*}
\phi (a\oplus b)=\phi (a)\oplus\phi (b)
\end{equation*}
An additive map $\phi$ that satisfies $\phi (1)=1$ is called a \textit{morphism}. A morphism $\phi\colon\escript\to\fscript$ for which
$\phi (a)\perp\phi (b)$ implies $a\perp b$ is a \textit{monomorphism} and a surjective monomorphism is an \textit{isomorphism}. If $\escript$ and $\fscript$ are COEA's, a morphism $\phi\colon\escript\to\fscript$ is \textit{affine} if $\phi (\lambda a)=\lambda\phi (a)$ for all
$\lambda\in [0,1]$, $a\in\escript$. If there exists an affine isomorphism $\phi\colon\escript\to\fscript$ we say that $\escript$ and $\fscript$ are COEA \textit{isomorphic}.

The simplest example of a COEA is the unit interval $[0,1]\subseteq\real$ with the usual addition (when $a+b\le 1$) and scalar multiplication.
A \textit{state} on an EA $\escript$ is a morphism $\omega\colon\escript\to [0,1]$. Notice that in an effect-state space, the function 
$a\mapsto F(a,s)$ is a state on $\escript$. We denote the set of states on $\escript$ by $\Omega (\escript )$. We say that
$S\subseteq\Omega (\escript )$ is \textit{order-determining} if $\omega (a)\le\omega (b)$ for all $\omega\in S$ implies that $a\le b$. It is shown in \cite{gp98} that every state on a COEA is affine. It is also shown in \cite{gp98} that an effect-state space is equivalent to a COEA with an order-determining set of states.

We now introduce the sequential product $a\circ b$ on a COEA. Because of the series order for $a\circ b$, $a$ may interfere with the $b$ measurement but $b$ will never interfere with the $a$ measurement. If $a\circ b=b\circ a$ we write $a\mid b$ and say that $a$ and $b$ \textit{do not interfere}. We now present our general definition.

A \textit{convex sequential effect algebra} (COSEA) \cite{gud18} is an algebraic system
$(\escript ,0,1,\oplus ,\circ )$ where\linebreak 
$(\escript ,0,1,\oplus )$ is a COEA and $\circ\colon\escript\times\escript\to\escript$ is a binary operation satisfying:
\begin{list} {(S\arabic{cond})}{\usecounter{cond}
\setlength{\rightmargin}{\leftmargin}}
\item $b\mapsto a\circ b$ is additive for all $a\in\escript$,
\item $1\circ a=a$ for all $a\in\escript$,
\item If $a\circ b=0$, then $a\mid b$,
\item If $a\mid b$, then $a\mid b'$ and $a\circ (b\circ c)=(a\circ b)\circ c$ for all $c\in\escript$,
\item If $c\mid a$ and $c\mid b$ then $c\mid a\circ b$ and $c\mid (a\oplus b)$ whenever $a\perp b$,
\item For all $\lambda\in [0,1]\subseteq\real$, $a,b\in\escript$, we have that
\begin{equation*}
(\lambda a)\circ b=a\circ (\lambda b)=\lambda (a\circ b)
\end{equation*}
\end{list}
It is shown in \cite{wet181} that if $\escript$ satisfies an additional continuity property that makes $\escript$ a $\sigma$-COSEA then (S6) is automatically satisfied.

In quantum mechanics, $a\circ b$ is useful for describing quantum interference. It is also needed for defining the important concept of conditional probability. An element $a$ in a COSEA is \textit{sharp} if the greatest lower bound $a\wedge a'=0$. Sharp effects are thought of as effects that are precise or unfuzzy. We denote the set of sharp effects in $\escript$ by $S(\escript )$.

\begin{thm}    
\label{thm21}
{\rm \cite{gg02}}\ The sequential product in a COSEA $\escript$ has the following properties.
{\rm (i)}\enspace $a\circ b\le a$ for all $a,b\in\escript$.
{\rm (ii)}\enspace If $a\le b$, then $c\circ a\le c\circ b$ for all $c\in\escript$.
{\rm (iii)}\enspace $a\in S(\escript )$ if and only if $a\circ a=a$.
{\rm (iv)}\enspace For $a\in\escript$, $b\in S(\escript )$, $a\circ b=0$ if and only if $a\perp b$.
{\rm (v)}\enspace For $a\in\escript$, $b\in S(\escript )$, $a\le b$ if and only if $a\circ b=b\circ a=a$ and $b\le a$ if and and only if
$a\circ b=b\circ a=b$.
\end{thm}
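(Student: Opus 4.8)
The plan is to derive the five properties essentially in the order stated, using only the COSEA axioms (S1)--(S6), the effect-algebra machinery, and the interplay between $\le$, $\perp$, and $\ominus$. The key tool throughout is that for fixed $a$ the map $b\mapsto a\circ b$ is additive (S1), hence order-preserving in its second argument in the weak sense that $b\perp c$ forces $a\circ b\perp a\circ c$ with $a\circ(b\oplus c)=a\circ b\oplus a\circ c$; combining this with (S2) and the decomposition $1=b\oplus b'$ is what makes the argument run.

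First I would prove (i). Since $1=b\oplus b'$, additivity of $c\mapsto\text{(something)}$ is not quite what is wanted; instead apply (S1) with the first slot fixed at $a$: from $b\perp b'$ we get $a\circ b\perp a\circ b'$ and $a\circ b\oplus a\circ b'=a\circ(b\oplus b')=a\circ 1$. Now I need $a\circ 1=a$, which does not follow directly from (S2) (that gives $1\circ a=a$); so I would first establish $a\circ 1=a$ as a preliminary, perhaps by noting $1\mid a$ is immediate from (S2) and then applying (S4) or (S6) with $\lambda$-scaling and a limiting/algebraic identity, or more simply by observing that $1\mid 1$ and $1\mid a$ give $1\mid(a\oplus a')$ etc.; the cleanest route is that (S2) plus (S3)/(S4) yield $a\mid 1$, whence $a\circ 1=1\circ a=a$. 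Granting $a\circ 1=a$, the relation $a\circ b\oplus a\circ b'=a$ exhibits $a\circ b\le a$, proving (i). For (ii), write $b=a_0\oplus c$ where here rename: if $x\le y$ then $y=x\oplus z$ for some $z\in\escript$; apply additivity of $w\mapsto c\circ w$ to get $c\circ y=c\circ x\oplus c\circ z$, so $c\circ x\le c\circ y$.

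Next, (iii): if $a\circ a=a$ then I must show $a\wedge a'=0$. Suppose $b\le a$ and $b\le a'$. From $b\le a$ and part (v)-type reasoning one wants $a\circ b=b$, but (v) is not yet available, so instead I would argue directly: $b\le a'$ means $b\perp a$, and then $a\circ a=a$ together with monotonicity (ii) gives $a\circ b\le a\circ a=a$; meanwhile from $b\perp a$ and (S1) in the first slot, $a\circ a\perp a\circ b$ is wrong-slotted — the subtle point is that additivity is only guaranteed in the \emph{second} argument, so I must be careful to keep $a$ fixed on the left. The honest approach: $b\le a$ gives $a=b\oplus(a\ominus b)$; does $a\circ b=b$? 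We have $a\circ a=a\circ b\oplus a\circ(a\ominus b)$, and $a\circ b\le b$ is what I'd like. I expect the actual proof in \cite{gg02} uses that $a$ sharp $\iff a\circ a=a$ via the characterization of $a\wedge a'$ through the sequential product and a fixed-point/idempotent argument; I would mimic that, using (S3)--(S5) to show $a\circ(a\ominus b)$ and $a\circ b$ behave as an orthogonal decomposition of $b$ forcing $a\ominus b$ and $b$ comparable. For (iv): if $b\in S(\escript)$ and $a\circ b=0$, then by (S3) $a\mid b$, so $a\circ b=b\circ a=0$; since $b$ is sharp, $b\circ b=b$, and I would use (S4) ($b\mid a$ gives associativity) to compute $b\circ a=b\circ(b\circ a')$-type identities and conclude $a\le b'$, i.e.\ $a\perp b$; conversely $a\perp b$ means $a\le b'$, and one checks $b\circ b'=0$ (from $b\wedge b'=0$ and (i)-type bounds), whence by (ii) $a\circ b\le\text{something forcing }0$ — again keeping the sharp element in the commuting slot. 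Finally (v) follows by combining (iv) with additivity: $a\le b$ with $b$ sharp means $a\perp b'$, so $a\circ b'=0=b'\circ a$ by (iv), and since $a=a\circ 1=a\circ(b\oplus b')=a\circ b\oplus a\circ b'=a\circ b$, and symmetrically $b\circ a=a$; the reverse implication is immediate from $a\circ b=a\le b$ by (i) applied with roles swapped, and the second half of (v) is the same argument run with $b'$ in place of $b$.

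The main obstacle I anticipate is the asymmetry of additivity: axiom (S1) only gives additivity of $b\mapsto a\circ b$, not of $a\mapsto a\circ b$, so every step that wants to ``split the left factor'' must instead be routed through a commutation relation ($a\mid b$) obtained from (S3)--(S5), after which (S4) restores associativity and effectively symmetric additivity. In particular the proof of (iii) is delicate precisely because establishing $a\wedge a'=0$ from idempotency requires showing every common lower bound of $a$ and $a'$ interferes with $a$, and that is exactly where one must feed in (S3) to convert an orthogonality $a\circ b=0$ into commutation. I would therefore prove a small lemma first: \emph{if $a\perp c$ and $c\le a$ then $c=0$} is trivial, but the real helper is \emph{$a\circ b\le b$ whenever $a\mid b$}, which follows from (i) plus $a\circ b=b\circ a$, and this is the workhorse for (iii)--(v).
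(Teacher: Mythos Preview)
The paper does not prove Theorem~\ref{thm21}; it is quoted from \cite{gg02} without argument, so there is no in-paper proof to compare against. Your outline for (i), (ii), and (v) is the standard route and is fine once one has $a\circ 1=a$, which follows easily: $a\circ 0=0$ by additivity and cancellation, so $a\mid 0$ by (S3), hence $a\mid 0'=1$ by (S4), giving $a\circ 1=1\circ a=a$.

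The genuine gap is in (iii), and it leaks into the converse of (iv). You only set up the direction $a\circ a=a\Rightarrow a\wedge a'=0$, and even there you defer to \cite{gg02} rather than give an argument; the reverse implication you do not address at all. The missing observation is that $a\mid a'$ holds for \emph{every} $a$: since $a\mid a$ trivially, (S4) yields $a\mid a'$. With this in hand your own helper lemma gives $a\circ a'=a'\circ a\le a'$; combined with $a\circ a'\le a$ from (i), $a\circ a'$ is a common lower bound of $a$ and $a'$, so if $a$ is sharp then $a\circ a'=0$ and hence $a\circ a=a$. Conversely, if $a\circ a=a$ then $a\circ a'=0=a'\circ a$, and for any $c$ with $c\le a$ and $c\le a'$ monotonicity (ii) gives $a\circ c\le a\circ a'=0$ and $a'\circ c\le a'\circ a=0$, whence by (S3) $c\mid a$ and $c\mid a'$, and then $c=c\circ 1=c\circ a\oplus c\circ a'=0$. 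Your line in (iv), ``one checks $b\circ b'=0$ from $b\wedge b'=0$ and (i)-type bounds,'' is precisely this missing step: (i) alone gives only $b\circ b'\le b$, not $b\circ b'\le b'$, and it is the commutation $b\mid b'$ obtained from (S4) that closes the gap.
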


For a COSEA $\escript$, we call $a\in S(\escript )$ \textit{one-dimensional} if $a\ne 0$ and if $b\in\escript$ with $b\le a$, then $b=\lambda a$ for some $\lambda\in [0,1]$. We denote the set of one--dimensional elements of $\escript$ by $S_1(\escript )$. It is shown in \cite{gud18} that if $a\in S_1(\escript )$ then there exists an $\ahat\in\Omega (\escript )$ such that $\ahat (a)=1$. A COSEA is \textit{state-unique} if $\ahat$ is unique. Although it is not known whether an arbitrary COEA is state-unique, it is shown in \cite{wet181,wet182} that every COSEA is state-unique.

A \textit{finite context} in a COEA $\escript$ is a finite set $\brac{a_1,\ldots ,a_n}\subseteq S_1(\escript )$ such that
\begin{equation*}
a_1\oplus a_2\oplus\cdots\oplus a_n=1
\end{equation*}
It follows that $\ahat _i(a_j)=\delta _{ij}$. We denote the set of finite contexts in $\escript$ by $\cscript (\escript )$. We interpret a finite context as a finest sharp measurement. We say that $\escript$ is \textit{finite-dimensional} if there does not exist an infinite sequence
$a_i\in S_1(\escript )$ such that $a_1\oplus\cdots\oplus a_n$ is defined for all $n$. Thus, there are no infinite contexts. For simplicity, we assume that the COEA's (and (COSEA's) we consider in this paper are finite-dimensional. If $\escript$ is state-unique and
$a,b\in S_1(\escript )$, we call $\ahat (b)$ the \textit{transition probability} from $a$ to $b$. We say that $\escript$ is \textit{symmetric} if
$\ahat (b)=\bhat (a)$ for all $a,b\in S_1(\escript )$. It is shown in \cite{wet181,wet182} that every COSEA is symmetric.

\begin{lem}    
\label{lem22}
If $\escript$ is state-unique and symmetric, then all contexts in $\escript$ have the same cardinality.
\end{lem}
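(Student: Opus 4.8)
The plan is to show that any two finite contexts $\{a_1,\dots,a_n\}$ and $\{b_1,\dots,b_m\}$ in $\escript$ satisfy $n=m$ by exploiting the fact that the associated states $\ahat_i$ and $\bhat_j$ each form (in an appropriate sense) a ``resolution of the identity'' and by playing the two families off against each other using symmetry. The starting observation is that, since $a_1\oplus\cdots\oplus a_n=1$ and every state $\omega\in\Omega(\escript)$ is additive with $\omega(1)=1$, we have $\sum_{i=1}^n\omega(a_i)=1$ for \emph{every} state $\omega$; in particular, taking $\omega=\bhat_j$ for each $j$ gives $\sum_{i=1}^n\bhat_j(a_i)=1$. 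Summing this over $j$ yields
\begin{equation*}
m=\sum_{j=1}^m 1=\sum_{j=1}^m\sum_{i=1}^n\bhat_j(a_i).
\end{equation*}
By symmetry, $\bhat_j(a_i)=\ahat_i(b_j)$, so the double sum equals $\sum_{i=1}^n\sum_{j=1}^m\ahat_i(b_j)=\sum_{i=1}^n 1=n$, where the inner sum is $1$ because $b_1\oplus\cdots\oplus b_m=1$ and $\ahat_i$ is a state. Hence $n=m$.

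The only genuine point requiring care is the justification that $\bhat_j(a_i)=\ahat_i(b_j)$, i.e.\ that the symmetry hypothesis applies here. Symmetry is stated for $a,b\in S_1(\escript)$, and by definition each element of a finite context lies in $S_1(\escript)$, so $a_i,b_j\in S_1(\escript)$ and the identity $\ahat_i(b_j)=\bhat_j(a_i)$ is exactly the symmetry axiom applied to the pair $(a_i,b_j)$. State-uniqueness is what makes the transition probabilities $\ahat_i(b_j)$ well defined in the first place (the state $\ahat_i$ with $\ahat_i(a_i)=1$ exists by the result of \cite{gud18} quoted above and is unique by state-uniqueness), so both hypotheses of the lemma are used. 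One should also note that finite-dimensionality guarantees that every context is finite, so the sums above are genuinely finite and the interchange of summation order is trivial.

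I expect no real obstacle here; the argument is a clean ``double counting'' of the matrix of transition probabilities $\bigl(\ahat_i(b_j)\bigr)$, whose row sums are all $1$ (from $\oplus b_j=1$) and whose column sums are all $1$ (from $\oplus a_i=1$), with symmetry identifying the two ways of reading each entry. If anything needs slightly more than a sentence, it is spelling out that $\sum_i\omega(a_i)=\omega(a_1\oplus\cdots\oplus a_n)=\omega(1)=1$ follows from iterated additivity of the state $\omega$ together with the convention for finite $\oplus$-sums, but this is routine given the effect-algebra axioms (EA1)--(EA2).
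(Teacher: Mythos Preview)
Your proposal is correct and follows exactly the same double-counting argument as the paper: the paper's proof is the one-line computation $n=\sum_{i,j}\ahat_i(b_j)=\sum_{i,j}\bhat_j(a_i)=m$, and you have simply spelled out in more detail why each double sum equals the cardinality of the corresponding context and why symmetry justifies the middle equality.
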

\begin{proof}
Let $\ascript ,\bscript\in\cscript (\escript )$ with $\ascript =\brac{a_1,\ldots ,a_n}$, $\bscript =\brac{b_1,\ldots ,b_m}$. Then
\begin{equation*}
n=\sum _{i,j}\ahat _i(b_j)=\sum _{i,j}\bhat _j(a_i)=m\qedhere
\end{equation*}
\end{proof}

We say that a COEA $\escript$ is \textit{spectral} if $\escript$ is state-unique and for every $b\in\escript$ there exists a context
$\ascript =\brac{a_1,\ldots ,a_n}$ such that
\begin{equation*}
b=\lambda _1a_1\oplus\cdots\oplus\lambda _na_n
\end{equation*}
$\lambda _i\in [0,1]$, $i=i,\ldots ,n$. We denote the set of such $b\in\escript$ corresponding to a fixed context $\ascript$ by $\escript (\ascript )$. It can be shown that every COSEA is spectral \cite{wet182}. A subset $\fscript$ of a COEA $\escript$ is a \textit{sub-COEA} if $0,1\in\fscript$, $a\in\fscript$ implies $a'\lambda a\in\fscript$ for all $\lambda\in [0,1]$ and if $a,b\in\fscript$ with $a\perp b$, then
$a\oplus b\in\fscript$. A subset $\fscript$ of a COSEA $\escript$ is a \textit{sub-COSEA} if $\fscript$ is a sub-COEA and if $a,b\in\fscript$ implies $a\circ b\in\fscript$. It is clear that if $\escript$ is a COEA (COSEA) then $\escript (\ascript )$ is a sub-COEA (sub-COSEA) for every $a\in\cscript (\escript )$.

We close this section with some examples of COEA's and COSEA's. The first example comes from the quantum formalism. Let $H$ be a complex Hilbert space and let $\escript (H)$ be the set of operators on $H$ satisfying $0\le A\le I$ where we are using the usual operator order. For $A,B\in\escript (H)$ we write $A\perp B$ if $A+B\le I$ and in this case we define $A\oplus B=A+B$. For $\lambda\in [0,1]$ and
$A\in\escript (H)$, $\lambda A\in\escript (H)$ is the usual scalar multiplication for operators. It is easy to check that
$\paren{\escript (H),0,I,\oplus}$ is a COEA which we call a \textit{Hilbertian} COEA. The sharp elements of $\escript (H)$ are the projections on $H$. For $\phi\in H$ with $\phi\ne 0$, we denote the projection onto the one-dimensional subspace generated by $\phi$ as $P(\phi )$. Of course, $P(\phi )=P(\psi )$ if and only $\phi =\alpha\psi$ for some $\alpha\in\complex$, $\alpha\ne 0$. The elements of
$S_1\paren{\escript (H)}$ are precisely the $P(\phi )$, $\phi\in H$, $\phi\ne 0$ and $\escript (H)$ is finite-dimensional if and only if $H$
finite-dimensional. In this case, the contexts of $\escript (H)$ correspond to the orthonormal bases of $H$ so
$\cscript\paren{\escript (H)}$ is infinite if $\dim H\ge 2$. If $A\in S_1\paren{\escript (H)}$ with $A=P(\phi )$ where $\doubleab{\phi}=1$,
then $\capahat$ is the unique state given by $\capahat (B)=\elbows{\phi ,B\phi}$ for all $B\in\escript (H)$. Hence,
$\escript (H)$ is state-unique. It follows from the spectral theorem that $\escript (H)$ is state-unique. Moreover, if $B=P(\psi )$,
$\doubleab{\psi}=1$, then the transition probability becomes
\begin{equation*}
\capahat (B)=\capbhat (A)=\ab{\elbows{\phi ,\psi}}^2
\end{equation*}
so $\escript (H)$ is symmetric. If $\fscript$ is a sub-COEA of $\escript (H)$ for some $H$, we call $\fscript$ a \textit{sub-Hilbertian} COEA. An example of a sub-Hilbertian COEA is a von~Neumann algebra of operators on $H$. These are also spectral and symmetric. For
$A,B\in\escript (H)$ define the product $A\circ B=A^{1/2}BA^{1/2}$ where $A^{1/2}$ is the unique positive square root of $A$. It is shown in \cite{gg02,gl08} that with the product $A\circ B$, $\escript (H)$ becomes a COSEA. We also have that $A\circ B=B\circ A$ if and only if $AB=BA$ \cite{gn01}; that is, $A$ and $B$ commute. We then call $\escript (H)$ a \textit{Hilbertian} COSEA, and any sub-COSEA of
$\escript (H)$ is a \textit{sub-Hilbertian} COSEA. As before, a von~Neumann algebra on $H$ is an example of a sub-Hilbertian COSEA.

Our next example comes from fuzzy probability theory \cite{bug94,gud981}. Let $\Omega ,(\ascript )$ be a measurable space in which singleton sets are measurable and let $\escript (\Omega ,\ascript )$ be the set of measurable functions on $\Omega$ with values in
$[0,1]\subseteq\real$. If we define the operations $\oplus,\lambda f$ and $f\circ g=fg$ analogously as in the previous example,
$\escript (\Omega ,\ascript )$ becomes a COSEA. The elements of $\escript (\Omega ,\ascript )$ are called fuzzy events and we call
$\escript (\Omega ,\ascript )$ a \textit{classical} COSEA. The elements of $S\paren{\escript (\Omega ,\ascript )}$ are the characteristic functions (or equivalently, the sets in $\ascript$) and $S_1\paren{\escript (\Omega ,\ascript )}$ consists of the characteristic functions of the singleton sets (or equivalently, the elements of $\Omega$). Notice that $\escript (\Omega ,\ascript )$ is finite-dimensional if and only if
$\Omega$ is finite and in this case there is only one context. Also, $\escript (\Omega ,\ascript )$ is symmetric and spectral. Conversely, it is shown in \cite{gud18} that if a finite-dimensional COEA (COSEA) $\escript$ has only one context, then $\escript$ is isomorphic to classical COEA (COSEA). We have seen that a classical COEA contains only one context while a quantum (Hilbertian) COEA possesses an infinite number of different contexts. Is there anything between? That is, can a finite-dimensional spectral COEA $\escript$ have a finite number, greater than one, of disjoint contexts \cite{gud18}? The answer to this question is negative. In fact, if $\escript$ has more than one context, then it has uncountably many \cite{jp18}.

\section{Commutants} 
In this section, $\escript$ will denote a finite-dimensional COSEA. For $\fscript\subseteq\escript$, the \textit{commutant} of $\fscript$ is defined as
\begin{equation*}
\fscript '=\brac{b\in\escript\colon b\mid a\quad\hbox{for\ all\ }a\in\fscript}
\end{equation*}
Notice that $\fscript '$ is a sub-COSEA of $\escript$. If $\fscript\subseteq\gscript\subseteq\escript$ then $\gscript '\subseteq\fscript '$. We also have that $\fscript\subseteq\fscript ''$, $\fscript '=\fscript '''$, $\fscript '\cap\gscript '\subseteq (\fscript\cap\gscript )'$ and
$(\fscript\cup\gscript )'\subseteq\fscript '\cup\gscript '$ for all $\fscript ,\gscript\subseteq\escript$. We say that $\fscript\subseteq\escript$ is \textit{commutative} if $a\circ b=b\circ a$ for all $a,b\in\fscript$. Clearly, $\fscript$ is commutative if and only if $\fscript\subseteq\fscript '$. It is shown in \cite{gud18} that $\escript$ is commutative if and only if $\escript$ has only one context and hence is isomorphic to a classical COSEA. We call $\escript '$ the \textit{center} of $\escript$. Thus, $\escript =\escript '$ if and only if $\escript$ is isomorphic to a classical COSEA and $\escript '$ is a commutative sub-COSEA of $\escript$. It is clear that
$\brac{\lambda 1\colon\lambda\in [0,1]}\subseteq\escript '$. We say that $\escript$ is a \textit{factor} if
$\escript '=\brac{\lambda 1\colon\lambda\in [0,1]}$.

We now define the \textit{direct sum} $\escript =\escript _1\oplus\escript _2$ of two COSEA's $(\escript _1,0_1,1_1,\oplus )$,
$(\escript _2,0_2,1_2,\oplus )$. We define $(\escript ,0,1,\oplus )$ by
\begin{equation*}
\escript =\escript _1\times\escript _2=\brac{(a_1,a_2)\colon a_1\in\escript _1,a_2\in\escript _2}
\end{equation*}
with $0=(0_1,0_2)$, $1=(1_1,1_2)$. If $a=(a_1,a_2)$, then $a'=(a'_1,a'_2)$. If $a=(a_1,a_2)$, $b=(b_1,b_2)$ then $a\perp b$ if
$a_1\perp b_1$, $a_2\perp b_2$ and
\begin{equation*} 
a\oplus b=(a_1\oplus b_1,a_2\oplus b_2)
\end{equation*}
For $\lambda =[0,1]$ define $\lambda (a_1,a_2)=(\lambda a_1,\lambda a_2)$ and we define
\begin{equation*} 
(a_1,a_2)\circ (b_1,b_2)=(a_1\circ b_1,a_2\circ b_2)
\end{equation*}
It is easy to check that $\escript$ is a COSEA. We have that $(a_1,a_2)\le (b_1,b_2)$ if and only if $a_1\le b_1$, $a_2\le b_2$ and
\begin{equation*} 
\escript '=\brac{(a_1,a_2)\colon a_1\in\escript '_1,a_2\in\escript '_2}
\end{equation*}
Clearly, $(a_1,a_2)\in S(\escript )$ if and only if $a_1\in S(\escript _1)$ and $a_2\in S(\escript _2)$.

\begin{lem}    
\label{lem31}
Let $\escript =\escript _1\oplus\escript _2$.
{\rm (i)}\enspace $(a_1,a_2)\in S_1(\escript )$ if and only if $a_1=0_1$ and $a_2\in S_1(\escript _2)$ or $a_2=0_2$ and
$a_1\in S_1(\escript _1)$.
{\rm (ii)}\enspace $\ascript\in\cscript (\escript )$ if and only if
\begin{equation*} 
\ascript =\brac{(a_i,0_2),(0_1,b_j)}
\end{equation*}
where $\brac{a_1}\in\cscript (\escript _1)$ and $\brac{b_j}\in\cscript (\escript _2)$
\end{lem}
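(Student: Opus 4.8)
The plan is to prove both parts of Lemma~\ref{lem31} by transferring the defining properties of one-dimensional elements and contexts across the product componentwise, using the already-established description of the order and of $S(\escript)$ on a direct sum. For part (i), first I would observe that if $(a_1,a_2)\in S_1(\escript)$ then in particular $(a_1,a_2)\in S(\escript)$, so $a_1\in S(\escript_1)$ and $a_2\in S(\escript_2)$. Next, using that $(a_1,0_2)\le(a_1,a_2)$ and $(0_1,a_2)\le(a_1,a_2)$ (which follows from the componentwise order), one-dimensionality forces $(a_1,0_2)=\lambda(a_1,a_2)=(\lambda a_1,0_2)$ and $(0_1,a_2)=\mu(a_1,a_2)=(0_1,\mu a_2)$ for some $\lambda,\mu\in[0,1]$; summing these (they are orthogonal and sum to $(a_1,a_2)$) yields $\lambda+\mu=1$. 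The point is that we cannot have both $a_1\ne 0_1$ and $a_2\ne 0_2$: if both were nonzero, take any $0<\nu<1$; then $(\nu a_1, a_2)\le (a_1,a_2)$ but $(\nu a_1,a_2)$ is not a scalar multiple of $(a_1,a_2)$ (the scalar would have to be both $\nu$ and $1$), contradicting $(a_1,a_2)\in S_1(\escript)$. Hence one component is zero; say $a_2=0_2$, and then the condition $b\le(a_1,0_1)\Rightarrow b=\lambda(a_1,0_1)$ reduces exactly to the statement that $a_1\in S_1(\escript_1)$ and $a_1\ne 0_1$. The converse direction is the routine check that $(a_1,0_2)$ with $a_1\in S_1(\escript_1)$ is nonzero, sharp, and that any $b\le(a_1,0_2)$ has the form $(b_1,0_2)$ with $b_1\le a_1$, hence $b_1=\lambda a_1$.

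For part (ii), the ``if'' direction is straightforward: given $\brac{a_i}\in\cscript(\escript_1)$ and $\brac{b_j}\in\cscript(\escript_2)$, each $(a_i,0_2)$ and $(0_1,b_j)$ lies in $S_1(\escript)$ by part (i), these elements are mutually orthogonal (the $a_i$'s are mutually orthogonal in $\escript_1$, similarly the $b_j$'s, and any $(a_i,0_2)\perp(0_1,b_j)$ trivially), and their orthogonal sum is $(\oplus_i a_i,\oplus_j b_j)=(1_1,1_2)=1$, so the collection is a finite context in $\escript$. For the ``only if'' direction, let $\ascript\in\cscript(\escript)$; by part (i) every element of $\ascript$ is of the form $(a,0_2)$ with $a\in S_1(\escript_1)$ or $(0_1,b)$ with $b\in S_1(\escript_2)$. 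Split $\ascript$ into those two types, writing the first type as $\brac{(a_i,0_2)}$ and the second as $\brac{(0_1,b_j)}$. The condition $\bigoplus(a_i,0_2)\oplus\bigoplus(0_1,b_j)=(1_1,1_2)$ decomposes componentwise into $\bigoplus_i a_i=1_1$ and $\bigoplus_j b_j=1_2$, which is precisely $\brac{a_i}\in\cscript(\escript_1)$ and $\brac{b_j}\in\cscript(\escript_2)$.

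The only genuine subtlety — and the step I would be most careful about — is confirming in part (i) that a one-dimensional element of the direct sum cannot ``straddle'' both summands; this is where the scalar-multiplication structure of the product is used essentially, and it is worth spelling out the $(\nu a_1,a_2)$ argument above rather than leaving it implicit. A secondary point to watch is that in the ``only if'' direction of (ii) one must make sure the decomposition of $\ascript$ into the two types is exhaustive and that the empty-family cases (one summand contributing no elements) are handled, which happens exactly when $1_1=0_1$ or $1_2=0_2$; since we tacitly assume the summands are nondegenerate COSEA's, each family is nonempty, but even without that the statement remains correct with the convention that an empty orthogonal sum is $0$. Everything else is bookkeeping with the componentwise definitions of $\oplus$, $\le$, scalar product, and $S(\escript)$ already recorded before the lemma.
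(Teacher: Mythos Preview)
Your proof is correct and follows essentially the same approach as the paper: for (i) you rule out the ``straddling'' case by exhibiting an element below $(a_1,a_2)$ that is not a scalar multiple (you use $(\nu a_1,a_2)$ where the paper uses $(a_1,0_2)$, but the idea is identical), and for (ii) you unpack what the paper records simply as ``this follows from (i).'' The extra care you take with the converse direction of (i) and the componentwise bookkeeping in (ii) is sound and just makes explicit what the paper leaves to the reader.
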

\begin{proof}
(i)\enspace Necessity is clear. For sufficiency, suppose that $(a_1,a_2)\in S_1(\escript )$ and $a_1\ne 0_1$, $a_2\ne 0_2$. Then
$(a_1,0_2)\le(a_1,a_2)$ but for $\lambda\in [0,1]$ we have that
\begin{equation*} 
(a_1,0_2)\ne (\lambda a_1,\lambda a_2)=\lambda (a_1,a_2)
\end{equation*}
which is a contradiction. Hence, $a_1=0_1$ or $a_2=0_2$. Clearly, if $a_1\ne 0$, then $a_1\in S_1(\escript _1)$ and if $a_2\ne 0$, then
$a_2\in S_1(\escript _2)$.
(ii)\enspace This follows from (i).
\end{proof}

We shall need the following lemma to prove Theorem~3.3.

\begin{lem}    
\label{lem32}
{\rm (i)}\enspace If $a\mid c$ and $a\mid (c\oplus d)$ then $a\mid d$.
{\rm (ii)}\enspace If $c\le b$ and $a\mid c$, $a\mid b$ then $a\mid (b\ominus c)$.
{\rm (iii)}\enspace If $c\le b$ then $b\ominus c=(c\oplus b')'$.
{\rm (iv)}\enspace If $\fscript$ is a sub-COSEA of $\escript$ and $b,c\in\fscript$ with $c\le b$, then $b\ominus c\in\fscript$.
\end{lem}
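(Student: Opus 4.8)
The plan is to establish the four parts in the order listed, using only the COSEA axioms (S1)–(S6), Theorem~\ref{thm21}, and the elementary effect-algebra identities recalled in Section~2 (in particular $a'' = a$, $a \le b \iff b' \le a'$, and $a \perp b \iff a \le b'$). Parts (i) and (iii) are the building blocks; (ii) combines them; (iv) is then immediate.

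For (i), suppose $a \mid c$ and $a \mid (c \oplus d)$. Write $c \oplus d = (c' \ominus d)'$ is not quite what I want; instead, the natural route is to use (S4): from $a \mid (c\oplus d)$ we get $a \mid (c \oplus d)'$, and $(c\oplus d)' = c' \ominus d$ (since $d \le c'$ when $c \perp d$). Hmm — that still references $\ominus$. The cleaner approach: since $a \mid c$ and $a \mid c'$ (by (S4)), and we are given $a \mid (c \oplus d)$, I would like to produce $d$ from $c$, $c'$, and $c\oplus d$ by $\oplus$'s and $'$'s and then invoke (S5) repeatedly. Concretely, observe $d = (c\oplus d) \ominus c = \big(c \oplus (c\oplus d)'\big)'$ by part~(iii) applied to $c \le c\oplus d$. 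So it suffices to know $a$ commutes with $c$, with $(c\oplus d)'$, and that the $\oplus$ of those two is defined; then (S5) gives $a \mid \big(c \oplus (c\oplus d)'\big)$, and (S4) gives $a \mid \big(c \oplus (c\oplus d)'\big)' = d$. This means I should prove (iii) first, and I also need $c \perp (c\oplus d)'$: indeed $(c\oplus d)' = 1 \ominus (c\oplus d) \le 1 \ominus c = c'$, so $c \perp (c \oplus d)'$. Thus the real content is just (iii) plus a bookkeeping application of (S4) and (S5).

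For (iii), with $c \le b$: I must show $b \ominus c = (c \oplus b')'$. First check the right-hand side is defined: $c \le b$ gives $b' \le c'$, i.e. $c \perp b'$, so $c \oplus b'$ exists. Now compute $(c \oplus b') \oplus (b\ominus c) = c \oplus (b\ominus c) \oplus b' = b \oplus b' = 1$ using associativity/commutativity (EA1),(EA2) and the defining property of $\ominus$. Hence $b \ominus c$ is the orthocomplement of $c \oplus b'$, and by uniqueness of orthocomplements (EA3), $b\ominus c = (c\oplus b')'$. I should be slightly careful that the intermediate sums are all defined — this follows from the cancellative/associative structure of effect algebras, and is the kind of routine verification I would state but not belabor.

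For (ii): given $c \le b$, $a \mid c$, $a \mid b$, I want $a \mid (b \ominus c)$. By (iii), $b \ominus c = (c \oplus b')'$. From $a \mid b$ and (S4), $a \mid b'$. We have $a \mid c$ and $c \perp b'$ (shown above), so (S5) yields $a \mid (c \oplus b')$, and then (S4) gives $a \mid (c \oplus b')' = b\ominus c$. Finally (iv): if $\fscript$ is a sub-COSEA and $b,c \in \fscript$ with $c \le b$, then since sub-COEA's are closed under $'$ and under $\oplus$ (of orthogonal pairs), $b' \in \fscript$, $c \oplus b' \in \fscript$, and $(c\oplus b')' \in \fscript$; by (iii) this equals $b\ominus c$. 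The main obstacle is purely organizational — making sure every $\oplus$ I write down is actually defined and that I invoke (S4)/(S5) in the right order — rather than any genuine difficulty; once (iii) is in place, (i), (ii), (iv) are short.
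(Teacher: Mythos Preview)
Your proposal is correct and uses the same mathematical content as the paper: the identity $b\ominus c=(c\oplus b')'$ combined with (S4) and (S5). The only difference is organizational. The paper proves (i) first by writing $b=c\oplus d$, observing $d=(c\oplus b')'$ (which is exactly (iii), noted afterward as ``follows from (i)''), and applying (S4)/(S5); then (ii) is obtained in one line from (i) by setting $d=b\ominus c$ in $b=c\oplus(b\ominus c)$. You instead prove (iii) first and then derive (i) and (ii) from it separately. This is fine, but your argument for (ii) duplicates the work of (i); once you have (i), (ii) is immediate since $b=c\oplus(b\ominus c)$ with $a\mid c$ and $a\mid b$.
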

\begin{proof}
(i)\enspace Let $b=c\oplus d$ so that $a\mid c$ and $a\mid b$. Now $c\oplus d\oplus b'=1$ so $d=(c\oplus b')'$. Since $a\mid b$, $a\mid b'$ and since $a\mid c$ we have that $a\mid c\oplus b'$. Hence, $a\mid d$.
(ii)\enspace Since $c\le b$ we have that $b=c\oplus (b\ominus c)$. Since $a\mid c$ and $a\mid b$, by (i) $a\mid (b\ominus c)$.
(iii)\enspace This follows from (i).
(iv)\enspace Since $b,c\in\fscript$ we have that $b'$ and $c\oplus b'\in\fscript$. Hence, by (iii).
\begin{equation*}
b\ominus c=(c\oplus b')'\in\fscript\qedhere
\end{equation*}
\end{proof}

\begin{thm}    
\label{thm33}
A COSEA $\escript$ is isomorphic to a direct sum of two COSEA's if and only if there exists an $a\in S(\escript )\cap\escript '$ with
$a\ne 0,1$.
\end{thm}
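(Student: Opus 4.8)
The plan is to prove the equivalence by building, from the element $a$, an explicit decomposition $\escript\cong\sqbrac{0,a}\oplus\sqbrac{0,a'}$. For the forward implication, suppose $\phi\colon\escript\to\escript _1\oplus\escript _2$ is a COSEA isomorphism with $\escript _1,\escript _2$ nontrivial. The element $(1_1,0_2)\in\escript _1\oplus\escript _2$ is sharp, since $(x_1,x_2)\in S(\escript _1\oplus\escript _2)$ iff $x_1\in S(\escript _1)$ and $x_2\in S(\escript _2)$; it is central, since the center of a direct sum is the product of the centers while $1_i$ and $0_i$ are always central; and it differs from $0$ and $1$ because $\escript _1,\escript _2$ are nontrivial. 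Since $\phi$ preserves $\circ$, it preserves sharpness (Theorem~\ref{thm21}(iii)) and the relation $\mid$, hence the center, so $a:=\phi ^{-1}(1_1,0_2)$ lies in $S(\escript )\cap\escript '$ and satisfies $a\ne 0,1$.

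For the converse, let $a\in S(\escript )\cap\escript '$ with $a\ne 0,1$; note that $a'\in S(\escript )\cap\escript '$ as well, sharpness and centrality passing from $a$ to $a'$. Put $\escript _1=\sqbrac{0,a}=\brac{b\in\escript\colon b\le a}$ with unit $a$, and $\escript _2=\sqbrac{0,a'}$ with unit $a'$, the complement of $b$ in $\escript _1$ being $a\ominus b$ and with $\oplus$, the scalar product and $\circ$ inherited from $\escript$. These are COSEA's: the effect-algebra and convexity axioms transfer routinely; $b\circ c\le b\le a$ (Theorem~\ref{thm21}(i)) keeps $\circ$ inside $\escript _1$; (S2) holds because $a\circ b=b$ whenever $b\le a$ by Theorem~\ref{thm21}(v); and the complement clause of (S4) follows from Lemma~\ref{lem32}(ii) using $a\in\escript '$. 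Since $a\ne 0,1$, both $\escript _i$ are nontrivial, so the decomposition will be genuine. Now define $\phi\colon\escript\to\escript _1\oplus\escript _2$ by $\phi (b)=(a\circ b,\,a'\circ b)$, which is well defined by Theorem~\ref{thm21}(i). The key identity is
\[
b=(a\circ b)\oplus(a'\circ b)\qquad\text{for all }b\in\escript ,
\]
obtained by applying additivity of $x\mapsto b\circ x$ from (S1): $b\circ a\oplus b\circ a'=b\circ(a\oplus a')=b\circ 1=b$, where $b\circ 1=b$ by Theorem~\ref{thm21}(v), and then rewriting $b\circ a=a\circ b$, $b\circ a'=a'\circ b$ by centrality of $a$. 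This identity yields injectivity of $\phi$ immediately, and shows that $\phi$ is onto with $\phi ^{-1}(b_1,b_2)=b_1\oplus b_2$ (a legitimate sum, since $b_1\le a\le b_2'$): indeed $a\circ(b_1\oplus b_2)=a\circ b_1\oplus a\circ b_2=b_1\oplus 0=b_1$, using $a\circ b_1=b_1$ (Theorem~\ref{thm21}(v)) and $a\circ b_2=0$ (from $b_2\le a'$, $a\perp a'$ and $a\in S(\escript )$, via Theorem~\ref{thm21}(ii),(iv)), and symmetrically in the second coordinate.

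It remains to verify that $\phi$ is a COSEA isomorphism. Clearly $\phi (0)=0$, $\phi (1)=(a,a')=1_{\escript _1\oplus\escript _2}$, and affineness is immediate from (S6). Additivity rests on the identity $a\circ c'=a\ominus(a\circ c)$ (from $a\circ c\oplus a\circ c'=a\circ 1=a$) together with its $a'$-analogue, and the implication $\phi (b)\perp\phi (c)\Rightarrow b\perp c$ follows by reassembling $b\oplus c$ from its four pieces via the key identity, using that anything $\le a$ is orthogonal to anything $\le a'$. The one genuinely computational point, which I expect to be the main obstacle, is multiplicativity: $a\circ(b\circ c)=(a\circ b)\circ(a\circ c)$, and likewise with $a'$. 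For this, $a\mid b$ and (S4) give $a\circ(b\circ c)=(a\circ b)\circ c$; writing $x=a\circ b\le a$, centrality gives $x\mid a$, so (S4) yields $(a\circ b)\circ(a\circ c)=x\circ(a\circ c)=(x\circ a)\circ c$, and $x\circ a=x$ by Theorem~\ref{thm21}(v) since $x\le a$ and $a$ is sharp; hence $(a\circ b)\circ(a\circ c)=x\circ c=(a\circ b)\circ c$ as well. The $a'$-identity is verbatim the same. Therefore $\phi$ is a COSEA isomorphism and $\escript\cong\escript _1\oplus\escript _2$.
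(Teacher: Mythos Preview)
Your proof is correct and follows essentially the same route as the paper: both directions use $(1_1,0_2)$ for the forward implication and the map $b\mapsto (a\circ b,\,a'\circ b)$ for the converse, with the key multiplicativity identity $a\circ(b\circ c)=(a\circ b)\circ(a\circ c)$ verified via (S4) and Theorem~\ref{thm21}(v). The only cosmetic difference is that you present $\escript_1$ as the interval $[0,a]$ with inherited operations, while the paper writes it as $\{a\circ b:b\in\escript\}$ with operations defined through representatives; since $a$ is sharp these coincide, and your framing actually sidesteps the well-definedness checks the paper's presentation implicitly requires.
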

\begin{proof}
If $\escript$ is isomorphic to a direct sum of two COSEA's, we can just as well assume that $\escript =\escript _1\oplus\escript _2$. We then have that $(1_1,0_2)\in S(\escript )\cap\escript '$ and $(1_1,0_2)\ne (1_1,1_2)=1$ and $(1_1,0_2)\ne (0_1,0_2)=0$. Conversely, suppose
$a\in S(\escript )\cap\escript '$ with $a\ne 0,1$. Let
\begin{equation*}
\escript _1=\brac{a\circ b\colon b\in\escript}
\end{equation*}
and define $0_1=a\circ 0=0$ and $1_1=a\circ 1=a$. For $a\circ b\in\escript _1$ define
\begin{equation*}
(a\circ b)'=a\circ b'\in\escript _1
\end{equation*}
Define $a\circ b_1\perp _1a\circ b_2$ if $b_1\perp b_2$ and in this case
\begin{equation*}
a\circ b_1\oplus _1a\circ b_2=a\circ (b_1\oplus b_2)=a\circ b_1\oplus a\circ b_2\in\escript _1
\end{equation*}
It is easy to check that $(\escript _1,0_1,1_1,\oplus _1)$ is an effect algebra. Letting $\lambda (a\circ b)=a\circ (\lambda b)$ makes
$\escript _1$ into a COSEA. Defining
\begin{equation*}
(a\circ b)\circ _1(a\circ b)=(a\circ b)\circ (a\circ c)=a\circ (b\circ c)\in\escript _1
\end{equation*}
we see that $a\circ b\mid _1a\circ c$. We now show that $(\escript _1,0_1,1_1,\oplus _1\circ _1)$ is a COSEA. It is easy to verify that (S1) and (S2) hold. To verify (S3) suppose that $(a\circ b)\circ _1(a\circ c)=0$. Then
\begin{equation*}
a\circ (b\circ c)=(a\circ b)\circ (a\circ c)=0
\end{equation*}
Hence, $a\circ b\mid a\circ c$ so $a\circ b\mid _1a\circ c$. To verify (S4) suppose that $a\circ b\mid _1a\circ c$. Then $a\circ b\mid a\circ c$. Since $a=a\circ c\oplus a\circ c'$ and $a\circ b\mid a$, $a\circ b\mid a\circ c$ it follows from Lemma~\ref{lem32}(i) that $a\circ b\mid a\circ c'$ so $a\circ b\mid (a\circ c)'$. Moreover, for all $d\in\escript$ we have
\begin{equation*}
(a\circ b)\circ\sqbrac{(a\circ c)\circ (a\circ d)}=\sqbrac{(a\circ b)\circ (a\circ c)}\circ (a\circ d)
\end{equation*}
The verification of (S5) and (S6) are straightforward. We conclude that $\escript _1$ is a COSEA. Now $a'\in S(\escript )$ with $a'\ne 0,1$ so letting $\escript _1=\brac{a'\circ b\colon b\in\escript}$ with similar definitions we have that $(\escript _2,0_2,1_2,\oplus _2,\circ _2)$ is a COSEA. Every element of $\escript$ has the unique representation $b=a\circ b\oplus a'\circ b$, $a\circ b\in\escript _1$,
$a'\circ b\in\escript _2$. Defining the map $J\colon\escript\to\escript _1\oplus\escript _2$ by $J(b)=(a\circ b,a'\circ b)$ it is straightforward to show that $J$ is an isomorphism.
\end{proof}

Since $\escript$ is spectral, every $b\in\escript$ has a representation $b=\lambda _1a_1\oplus\cdots\oplus\lambda _aa_n$ for some
$\brac{a_i}\in\cscript (\escript )$, $\lambda _i\in [0,1]$. We denote the set of effects that have such a representation relative to a context
$\ascript\in\cscript (\escript )$ by $\escript (\ascript )$. It is clear that $\escript (\ascript )$ is a commutative sub-COSEA of $\escript$. In fact, if 
$b$ is as above and $c=\mu a_1\oplus\cdots\oplus\mu _na_n$, $\mu _1\in [0,1]$, then $b\perp c$ if and only if $\lambda _i+\mu _i\le 1$, $i=1,\ldots ,n$ and in this case
\begin{equation*}
b\oplus c=(\lambda _1+\mu _1)a_1\oplus\cdots\oplus (\lambda _n+\mu _n)a_n
\end{equation*}
In general, we have
\begin{equation*}
b\circ c=(\lambda _1\mu _1)a_1\oplus\cdots\oplus (\lambda _n\mu _n)a_n
\end{equation*}
In the representation for $b\in\escript$, the $\lambda _i$ need not be distinct but since the sum of sharp elements is sharp, we can write
\begin{equation}                
\label{eq31}
b=\lambda '_1c_1\oplus\cdots\oplus\lambda '_mc_m
\end{equation}
where $c_i\in S(\escript )$, $\lambda '_i\ne\lambda'_j$, $i\ne $. The next result follows from Theorem~4.3 in \cite{gud18}.

\begin{thm}    
\label{thm34}
Any $b\in\escript$ has a unique representation \eqref{eq31} where $\lambda '_i\in [0,1]$, $\lambda '_i\ne\lambda '_j$, $i\ne j$,
$c_i\in S(\escript )$, $c_1\oplus\cdots\oplus c_m=1$ and $c_i\in\brac{b}''$.
\end{thm}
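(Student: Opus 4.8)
The plan is to build the representation \eqref{eq31} by starting from a spectral representation relative to some context and then aggregating the equal coefficients, and afterwards to establish uniqueness and the commutant condition by a separating-context argument. First I would invoke spectrality of $\escript$ to write $b=\lambda_1 a_1\oplus\cdots\oplus\lambda_n a_n$ for some $\ascript=\brac{a_1,\ldots,a_n}\in\cscript(\escript)$ with $\lambda_i\in[0,1]$. Grouping the indices according to the distinct values among the $\lambda_i$, say these distinct values are $\lambda'_1,\ldots,\lambda'_m$, I would set $c_k=\bigoplus_{\{i:\lambda_i=\lambda'_k\}}a_i$. Each $c_k$ is a finite orthogonal sum of elements of $S_1(\escript)\subseteq S(\escript)$, hence lies in $S(\escript)$ (the sum of sharp elements is sharp, as noted before \eqref{eq31}); the $c_k$ are mutually orthogonal and $c_1\oplus\cdots\oplus c_m=a_1\oplus\cdots\oplus a_n=1$; and regrouping the sum gives $b=\lambda'_1 c_1\oplus\cdots\oplus\lambda'_m c_m$ with the $\lambda'_k$ pairwise distinct. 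That yields existence of \eqref{eq31}.

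For the membership $c_k\in\brac{b}''$ I would argue as follows. Fix $d\in\brac{b}'$, i.e. $d\mid b$. Since all the $c_k$ and $b$ lie in the commutative sub-COSEA $\escript(\ascript)$, each $c_k$ is a function of $b$ in a precise sense: because the $\lambda'_k$ are distinct, $c_k$ can be recovered from $b$ by forming suitable orthogonal sums of the spectral projections of $b$. More usefully, I would note $c_k=\bigl(\text{an appropriate expression built from }b\text{ and }1\text{ using }\oplus,\ominus,\circ\bigr)$; concretely, if we reindex so $\lambda'_1<\cdots<\lambda'_m$, then for each $k$ the effect $c_k$ is obtained from $b$ by iterated use of the operations that Lemma~\ref{lem32} and (S4), (S5) show to preserve commutation with $d$. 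Since $d\mid b$ and $d\mid 1$, (S4) gives $d\mid b'$, (S5) gives closure of $\brac{d}'$ under $\oplus$ and $\circ$, and Lemma~\ref{lem32}(ii) gives closure under $\ominus$; hence $d\mid c_k$ for every $d\in\brac{b}'$, i.e. $c_k\in\brac{b}''$. (This is the step I expect to be the most delicate: one must exhibit each $c_k$ explicitly as a term in the operations $\oplus,\ominus,\circ,{}'$ applied to $b$, using only $b$ and $1$ and the distinctness of the eigenvalues, so that the commutation-closure lemmas apply.)

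Finally, for uniqueness suppose $b=\mu'_1 e_1\oplus\cdots\oplus\mu'_p e_p$ is another such representation with $e_j\in S(\escript)$, the $\mu'_j$ distinct, $e_1\oplus\cdots\oplus e_p=1$. By Theorem~4.3 of \cite{gud18} (from which the present theorem is stated to follow), or by a direct spectral argument, the two families $\brac{c_k}$ and $\brac{e_j}$ refine a common context, so I would pass to a context $\bscript$ with $b\in\escript(\bscript)$ refining both. In the commutative sub-COSEA $\escript(\bscript)$, states separate points, so evaluating a state $\omegahat$ dual to a one-dimensional summand $x$ of $\bscript$ at $b$ yields the unique coefficient of $x$; grouping the one-dimensional summands by the value of that coefficient forces both the $c_k$ and the $e_j$ to coincide with this canonical grouping, and the coefficients likewise agree. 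Hence $m=p$ and, after matching indices, $\lambda'_k=\mu'_k$ and $c_k=e_k$. This completes the proof; the bulk of it is the explicit-recovery step in the second paragraph, with existence and uniqueness being routine once spectrality and Theorem~4.3 of \cite{gud18} are in hand.
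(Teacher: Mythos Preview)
The paper does not actually prove this theorem: it simply records that the result follows from Theorem~4.3 in \cite{gud18}. Your existence paragraph is exactly the computation the paper sketches in the lines preceding \eqref{eq31}, so that part is fine and matches the paper.

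The substantive issue is your argument for $c_k\in\brac{b}''$. You assert that each $c_k$ can be written as ``an appropriate expression built from $b$ and $1$ using $\oplus,\ominus,\circ$'' and then appeal to the closure of $\brac{d}'$ under those operations. But you never produce such an expression, and in a bare COSEA it is not clear one exists. The natural candidate is a Lagrange interpolation polynomial $p_k$ with $p_k(\lambda'_j)=\delta_{jk}$, so that formally $p_k(b)=c_k$; however $p_k$ generally has coefficients outside $[0,1]$ and of mixed sign, so $p_k(b)$ is not directly a COSEA expression in $b$ via $\oplus$, $\ominus$, and scalar multiples in $[0,1]$. One must either (i) rewrite $p_k(b)$ carefully as iterated differences of elements already known to be comparable, checking at each step that the required $\le$ relations hold, or (ii) argue more structurally from the axioms (S1)--(S6), as is done in \cite{gud18}. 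You flag this as ``the most delicate'' step but leave it open; as written it is a genuine gap, not a routine verification.

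Your uniqueness paragraph has a related circularity: the claim that the two families $\brac{c_k}$ and $\brac{e_j}$ ``refine a common context'' is exactly what $c_k,e_j\in\brac{b}''$ would give you (via $c_k\mid e_j$), so it presupposes the commutant property you have not yet established. Falling back on Theorem~4.3 of \cite{gud18} at that point is of course what the paper itself does, but then your proof is no more self-contained than the paper's one-line citation.
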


\begin{thm}    
\label{thm35}
In a COSEA $\escript$, $a\mid b$ if and only if $a,b\in\escript (\ascript )$ for some $\ascript\in\cscript (\escript )$.
\end{thm}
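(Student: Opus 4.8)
The proof splits into an easy direction and a substantially harder converse, the latter resting on a spectral decomposition lemma for sharp effects.

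\textbf{The easy direction.} If $a,b\in\escript(\ascript)$ for some $\ascript=\brac{a_1,\ldots,a_n}\in\cscript(\escript)$, say $a=\lambda_1a_1\oplus\cdots\oplus\lambda_na_n$ and $b=\mu_1a_1\oplus\cdots\oplus\mu_na_n$, then the product formula recorded just before Theorem~\ref{thm34} gives $a\circ b=b\circ a=(\lambda_1\mu_1)a_1\oplus\cdots\oplus(\lambda_n\mu_n)a_n$, so $a\mid b$; equivalently, one simply invokes that $\escript(\ascript)$ is a commutative sub-COSEA.

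\textbf{A lemma on sharp effects.} The engine of the converse is the claim that every nonzero $e\in S(\escript)$ is a finite $\oplus$-sum of one-dimensional elements. Indeed, by spectrality write $e=\nu_1a_1\oplus\cdots\oplus\nu_na_n\in\escript(\ascript)$ with $\ascript=\brac{a_1,\ldots,a_n}\in\cscript(\escript)$ and $\nu_i\in[0,1]$; the product formula in $\escript(\ascript)$ then yields $e\circ e=\nu_1^2a_1\oplus\cdots\oplus\nu_n^2a_n$. Since $e$ is sharp, $e\circ e=e$ by Theorem~\ref{thm21}(iii), and applying the state $\ahat_i$ (which satisfies $\ahat_i(a_j)=\delta_{ij}$) to both sides forces $\nu_i=\nu_i^2$, hence $\nu_i\in\brac{0,1}$ and $e=\bigoplus_{i\,:\,\nu_i=1}a_i$, a sum of elements of $S_1(\escript)$.

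\textbf{The converse.} Assume $a\mid b$. By Theorem~\ref{thm34} write $a=\bigoplus_i\lambda_ic_i$ and $b=\bigoplus_j\mu_jd_j$ with $c_i\in S(\escript)\cap\brac{a}''$, $d_j\in S(\escript)\cap\brac{b}''$ and $\bigoplus_ic_i=\bigoplus_jd_j=1$. Since $a\mid b$ we have $b\in\brac{a}'$, hence $\brac{a}''\subseteq\brac{b}'$, so each $c_i\in\brac{b}'$; as $d_j\in\brac{b}''=(\brac{b}')'$ it follows that $c_i\mid d_j$ for all $i,j$. A short computation from (S4) and Theorem~\ref{thm21}(iii) shows each $g_{ij}:=c_i\circ d_j$ is sharp, and additivity of $\circ$ together with $c_i\mid d_j$ and $x\circ 1=x$ (Theorem~\ref{thm21}(v)) gives $\bigoplus_jg_{ij}=c_i$, $\bigoplus_ig_{ij}=d_j$, hence $\bigoplus_{i,j}g_{ij}=1$. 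Decomposing each sharp $g_{ij}$ by the lemma as $g_{ij}=\bigoplus_lf_{ijl}$ with $f_{ijl}\in S_1(\escript)$, we see that $\ascript:=\brac{f_{ijl}}$ is a context of $\escript$, and from $c_i=\bigoplus_{j,l}f_{ijl}$, $d_j=\bigoplus_{i,l}f_{ijl}$ and (CO3) we conclude $a=\bigoplus_{i,j,l}\lambda_if_{ijl}\in\escript(\ascript)$ and $b=\bigoplus_{i,j,l}\mu_jf_{ijl}\in\escript(\ascript)$.

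\textbf{Main obstacle.} The delicate point is that the diagonalizing sets of sharp effects ($c_i$, $d_j$, or $g_{ij}$) are generally not one-dimensional, so they do not themselves constitute a context; the sharp-effect lemma is precisely what lets us refine an arbitrary sharp resolution of the identity into a genuine context of $\escript$. The only other slightly technical items are the double-commutant bookkeeping producing $c_i\mid d_j$ and the (S4)-computation that a product of commuting sharp effects is sharp, both of which are routine.
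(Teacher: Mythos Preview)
Your proof is correct and follows essentially the same route as the paper's: both invoke Theorem~\ref{thm34} to obtain sharp resolutions $\{c_i\}$ and $\{d_j\}$ with $c_i\in\{a\}''$, $d_j\in\{b\}''$, deduce $c_i\mid d_j$ (the paper cites Theorem~\ref{thm34} tersely; your double-commutant argument spells this out), form the sharp products $c_i\circ d_j$ summing to $1$, and then refine to a context. Your explicit lemma that every nonzero sharp effect decomposes as an $\oplus$-sum of one-dimensional elements is exactly what the paper asserts in its final sentence (``we can construct a context $\ascript=\{c_k\}$ such that $e_i=\oplus c_{k_i}$'') without proof, so you have simply filled in that detail.
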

\begin{proof}
If $a,b\in\escript (\ascript )$, then clearly $a\mid b$. Conversely, suppose that $a\mid b$. By Theorem~\ref{thm34}, we have
$a=\oplus\lambda _ia_i$, $b=\oplus\mu _ib_i$, $\lambda _i\ne\lambda _j$, $\mu _i\ne\mu _j$, $i\ne j$, $a_i,b_i\in S(\escript )$ and
$\oplus a_i=\oplus b_i=1$. Moreover, by Theorem~\ref{thm34}, $a_i\mid b_j$ for all $i,j$. Then $a_i\circ b_j\in S(\escript )$ and
$\oplus a_i\circ b_j=1$. Letting $e_k$ be the nonzero $a_i\circ b_j$ we have that $e_k\in S(\escript )$ and $\oplus e_k=1$. Then
$a_i=\oplus\brac{e_k\colon e_k\le a_i}$ and similarly for the $b_i$. Reordering the $\lambda _i$ and $\mu _i$ if necessary we can write
$a=\oplus\lambda _ie_i$, $b=\oplus\mu _ie_i$. Finally, we can construct a context $\ascript =\brac{c_k}$ such that $e_i=\oplus c_{k_i}$ for all
$i$. Then $a=\oplus\lambda _ic_i$, $b=\oplus\mu _ic_i$ so that $a,b\in\escript (\ascript )$.
\end{proof}

\begin{lem}    
\label{lem36}
If $a\in S_1(\escript )$, $b\in S(\escript )$, then $a\mid b$ if and only if $a\circ b=0$ or $a\le b$.
\end{lem}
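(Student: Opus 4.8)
The plan is to treat the two implications separately, handling the ``if'' direction directly from the axioms and the ``only if'' direction by showing that the scalar in the one-dimensional representation of $a\circ b$ is forced to be idempotent. First I would dispose of the easy direction. If $a\circ b=0$, then $a\mid b$ is immediate from (S3). If $a\le b$, then since $b\in S(\escript)$, Theorem~\ref{thm21}(v) gives $a\circ b=b\circ a=a$; in particular $a\circ b=b\circ a$, so $a\mid b$.

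For the ``only if'' direction, suppose $a\mid b$. By Theorem~\ref{thm21}(i) we have $a\circ b\le a$, and since $a\in S_1(\escript)$ there is a $\lambda\in[0,1]$ with $a\circ b=\lambda a$. The key step is to show that $\lambda\in\brac{0,1}$. Since $a\mid b$, axiom (S4) applied with $c=b$ gives $(a\circ b)\circ b=a\circ(b\circ b)$, and $b\circ b=b$ because $b\in S(\escript)$ (Theorem~\ref{thm21}(iii)); hence $(a\circ b)\circ b=a\circ b$. Substituting $a\circ b=\lambda a$ and using (S6) followed by (CO1), the left-hand side equals $\lambda(a\circ b)=\lambda(\lambda a)=\lambda^2a$, so $\lambda a=\lambda^2a$. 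Since $a\in S_1(\escript)$ there exists a state $\ahat\in\Omega(\escript)$ with $\ahat(a)=1$, and states on a COEA are affine, so applying $\ahat$ gives $\lambda=\lambda^2$; thus $\lambda\in\brac{0,1}$.

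It then remains to read off the conclusion. If $\lambda=0$ then $a\circ b=0$. If $\lambda=1$ then $a\circ b=a$, and because $a\mid b$ we also have $b\circ a=a\circ b=a$, so Theorem~\ref{thm21}(v) yields $a\le b$. I do not expect a real obstacle; the only points needing care are checking that each scalar manipulation is licensed by (S6) and (CO1) and that the cancellation $\lambda a=\lambda^2a\Rightarrow\lambda=\lambda^2$ is justified via the affine state $\ahat$ rather than merely asserted. An alternative, slightly longer route would invoke Theorem~\ref{thm35} to place $a$ and $b$ in a common $\escript(\ascript)$, write $a=\oplus\lambda_ic_i$, $b=\oplus\mu_ic_i$, use sharpness of $a$ and $b$ to force $\lambda_i,\mu_i\in\brac{0,1}$, and then use one-dimensionality of $a$ to reduce $a$ to a single $c_i$; but the axiomatic argument above is shorter and avoids appealing to spectrality.
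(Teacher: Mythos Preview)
Your proof is correct and follows essentially the same route as the paper's: write $a\circ b=\lambda a$ from one-dimensionality, force $\lambda\in\{0,1\}$, and read off the two cases. The only variation is in how $\lambda^2 a=\lambda a$ is obtained---the paper notes that $a\circ b\in S(\escript)$ (the product of commuting sharp elements is sharp) and uses $(a\circ b)^2=a\circ b$, whereas you use (S4) together with $b\circ b=b$; your route is marginally more self-contained since it sidesteps that auxiliary sharpness fact, and your explicit justification of $\lambda^2 a=\lambda a\Rightarrow\lambda^2=\lambda$ via $\ahat$ fills in a step the paper leaves tacit.
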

\begin{proof}
If $a\circ b=0$ or $a\le b$, then by Theorem~\ref{thm21}, $a\mid b$. If $a\mid b$, then since $a\circ b\le a$ we have that $a\circ b=\lambda a$ for some $\lambda\in [0,1]$. Since $a\circ b\in S(\escript )$, $\lambda ^2a=\lambda a$ so $\lambda ^2=\lambda$. Hence, $\lambda =0$ or
$\lambda =1$. If $\lambda =0$, then $a\circ b=0$. If $\lambda =1$, then
\begin{equation*}
a=a\circ b=b\circ a\le b\qedhere
\end{equation*}
\end{proof}

\begin{thm}    
\label{thm37}
If $a\in S_1(\escript )$, then
\begin{equation}                
\label{eq32}
\brac{a}'=\brac{b\colon b=\lambda a\oplus\bigoplus\lambda _ia_i,\brac{a,a_1,\ldots ,a_n}\in\cscript (\escript ),\lambda,\lambda _i\in [0,1]}
\end{equation}
\end{thm}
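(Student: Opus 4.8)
The plan is to prove equality of the two sets by establishing the two inclusions separately. The $\supseteq$ direction should be routine: if $b=\lambda a\oplus\bigoplus\lambda_i a_i$ with $\{a,a_1,\ldots,a_n\}\in\cscript(\escript)$, then $a$ and $b$ both lie in $\escript(\ascript)$ for the context $\ascript=\{a,a_1,\ldots,a_n\}$, so by Theorem~\ref{thm35} (or directly, since $\escript(\ascript)$ is commutative) we get $a\mid b$, i.e. $b\in\{a\}'$.

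For the harder $\subseteq$ direction, suppose $b\in\{a\}'$, so $a\mid b$. First I would handle a general $b$ by reducing to the sharp case via the spectral representation. By Theorem~\ref{thm34}, write $b=\mu_1c_1\oplus\cdots\oplus\mu_m c_m$ with $c_i\in S(\escript)$, $\oplus c_i=1$, and $c_i\in\{b\}''$. Since $a\mid b$ and $c_i\in\{b\}''$, we get $a\mid c_i$ for each $i$ (because $a\in\{b\}'$ and $\{b\}''\subseteq\{b\}'''=\{b\}'$... more carefully: $a\mid b$ means $a\in\{b\}'$, and $c_i\in\{b\}''$ means $c_i$ commutes with everything in $\{b\}'$, hence with $a$). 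Now apply Lemma~\ref{lem36} to the pair $a\in S_1(\escript)$, $c_i\in S(\escript)$: for each $i$ either $a\circ c_i=0$, equivalently $a\perp c_i$ by Theorem~\ref{thm21}(iv), or $a\le c_i$. Since the $c_i$ are orthogonal and sum to $1$, and $a\neq 0$, exactly one index $i_0$ can satisfy $a\le c_{i_0}$ while $a\perp c_i$ for all $i\neq i_0$; I should check that this alternative is forced, using that if $a\perp c_i$ for all $i$ then $a\perp\oplus c_i=1$, forcing $a=0$ by (EA4), a contradiction.

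Next I would build the desired context. We have $a\le c_{i_0}$, so by Theorem~\ref{thm21}(v), $c_{i_0}\ominus a$ is defined; set $d=c_{i_0}\ominus a$. Then $a\oplus d\oplus\bigoplus_{i\neq i_0}c_i=1$. The remaining task is to refine this into a genuine context consisting of one-dimensional sharp effects: $a$ is already in $S_1(\escript)$; each $c_i$ ($i\neq i_0$) and $d$ is sharp (for $d$, note $d=(a\oplus c_{i_0}')'$ by Lemma~\ref{lem32}(iii) and the sum of sharp elements is sharp, or invoke that $c_{i_0}\ominus a$ stays sharp), so by spectrality each can be decomposed as an orthogonal sum of one-dimensional sharp effects. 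Concretely, working inside the sub-COSEA $\{a\}'$ (which contains $a,d$, and all $c_i$) I would pick a context of $\{a\}'$ refining the resolution $\{a,d,c_i:i\neq i_0\}$; call its elements $\{a,a_1,\ldots,a_n\}$, so these lie in $S_1(\escript)$ and sum to $1$. Finally, since $b=\oplus\mu_i c_i$ and each $c_i$ is an orthogonal sum of some of the $a_j$ (and $a\le c_{i_0}$ so $c_{i_0}=a\oplus(\text{some }a_j\text{'s})$), collecting terms gives $b=\lambda a\oplus\bigoplus\lambda_j a_j$ with $\lambda=\mu_{i_0}$ and $\lambda_j$ equal to whichever $\mu_i$ has $a_j\le c_i$. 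This exhibits $b$ in the right-hand side of \eqref{eq32}.

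The main obstacle I anticipate is the context-refinement step: I need to know that within the commutative sub-COSEA $\{a\}'$ (equivalently, within some $\escript(\ascript)$ by Theorem~\ref{thm35} applied to the family $\{a,d,c_i\}$), any resolution of $1$ into sharp effects can be refined to a context containing the prescribed one-dimensional element $a$. This relies on spectrality (every sharp effect decomposes into one-dimensional sharp pieces) together with the observation that $a$, being already one-dimensional, needs no further splitting and the pieces of $d$ and the $c_i$'s are automatically orthogonal to $a$ and to each other because they sit below orthogonal sharp effects. The bookkeeping of matching coefficients $\lambda_j$ back to the $\mu_i$ at the end is then purely mechanical.
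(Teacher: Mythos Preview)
Your proposal is correct and follows essentially the same route as the paper: use Theorem~\ref{thm34} to write $b=\oplus\mu_i c_i$ with $c_i\in\{b\}''$, deduce $a\mid c_i$, apply Lemma~\ref{lem36} to get $a\le c_{i_0}$ or $a\circ c_i=0$, and then refine to a context containing $a$. The only cosmetic difference is that the paper uses the spectral form with $\mu_i\ne 0$ (so the $c_i$ need not sum to $1$, giving a separate ``$a\circ c_i=0$ for all $i$'' case with $\lambda=0$), whereas you use the form with $\oplus c_i=1$ and argue that some $i_0$ must have $a\le c_{i_0}$; your worry about the context-refinement step is exactly the point the paper simply asserts (``form a context''), and your justification via spectrality is the right one.
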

\begin{proof}
If $b=\lambda a\oplus\bigoplus\lambda _ia_i$ as in \eqref{eq32}, then clearly $b\mid a$. Conversely suppose $b\mid a$. By
Theorem~\ref{thm34} we can write $b=\oplus\mu _ic_i$, $c_i\in S(\escript )$, $\mu _i\ne\mu _j$, $\mu _i\ne 0$, $c_i\circ c_j=0$, $i\ne j$. Also by Theorem~\ref{thm34} we have that $a\mid c_i$ for all $i$ so by Lemma~\ref{lem36} $a\circ c_i=0$ or $a\le c$. If $a\circ c_i=0$ for all $i$, then form a context $\brac{a,a_1,\ldots ,a_n}$ such that $b=0a\oplus\bigoplus\lambda _ia_i$. Otherwise, there is a $j$ such that
$a\le c_j$ and $a\circ c_i=0$ for all $i\ne j$. We again form a context $\brac{a,a_1,\ldots ,a_n}$ such that
$b=\lambda a\oplus\bigoplus\lambda _ia_i$.
\end{proof}

\begin{thm}    
\label{thm38}
A COSEA $\escript$ is a factor if and only if $\escript$ is not isomorphic to the direct sum of two COSEA's.
\end{thm}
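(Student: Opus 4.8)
The plan is to reduce the statement to Theorem~\ref{thm33} and then inspect the sharp elements of the center $\escript'$. By Theorem~\ref{thm33}, $\escript$ fails to be isomorphic to a direct sum of two COSEA's exactly when no $a\in S(\escript)\cap\escript'$ satisfies $a\ne 0,1$, i.e.\ exactly when $S(\escript)\cap\escript'\subseteq\brac{0,1}$. So it is enough to prove that $\escript$ is a factor if and only if $S(\escript)\cap\escript'\subseteq\brac{0,1}$.

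For the forward implication I would assume $\escript'=\brac{\lambda 1\colon\lambda\in[0,1]}$ and take $a\in S(\escript)\cap\escript'$, so $a=\lambda 1$ for some $\lambda\in[0,1]$. By Theorem~\ref{thm21}(iii), sharpness of $a$ gives $a=a\circ a$, and then (S2) and (S6) yield $\lambda 1=(\lambda 1)\circ(\lambda 1)=\lambda^{2}(1\circ 1)=\lambda^{2}1$. Applying any state $\omega$ of $\escript$ (one exists since every COSEA is spectral, hence possesses a context and therefore an element of $S_{1}(\escript)$ carrying an associated state) gives $\lambda=\omega(\lambda 1)=\omega(\lambda^{2}1)=\lambda^{2}$, so $\lambda\in\brac{0,1}$ and $a\in\brac{0,1}$. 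Thus $S(\escript)\cap\escript'\subseteq\brac{0,1}$, and by Theorem~\ref{thm33} $\escript$ is not isomorphic to a direct sum.

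For the converse I would assume $S(\escript)\cap\escript'\subseteq\brac{0,1}$. Since $\brac{\lambda 1\colon\lambda\in[0,1]}\subseteq\escript'$ always holds, only the reverse inclusion is needed. Given $b\in\escript'$, I would apply Theorem~\ref{thm34} to write $b=\lambda'_{1}c_{1}\oplus\cdots\oplus\lambda'_{m}c_{m}$ with $c_{i}\in S(\escript)$, the $\lambda'_{i}$ pairwise distinct, $c_{1}\oplus\cdots\oplus c_{m}=1$ and $c_{i}\in\brac{b}''$. The key point is that each $c_{i}$ is central: because $b\in\escript'$ every element of $\escript$ commutes with $b$, so $\brac{b}'=\escript$ and hence $\brac{b}''=\escript'$; as $c_{i}\in\brac{b}''$ this gives $c_{i}\in\escript'$, so $c_{i}\in S(\escript)\cap\escript'\subseteq\brac{0,1}$. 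Now $c_{1}\oplus\cdots\oplus c_{m}=1\ne 0$, so not all $c_{i}$ vanish; and two of the $c_{i}$ cannot both equal $1$, since $1\perp 1$ would force $1\le 1'=0$. Hence exactly one $c_{i}$, say $c_{k}$, equals $1$ and the rest are $0$, so $b=\lambda'_{k}1\in\brac{\lambda 1\colon\lambda\in[0,1]}$. Therefore $\escript'=\brac{\lambda 1\colon\lambda\in[0,1]}$, i.e.\ $\escript$ is a factor, which by Theorem~\ref{thm33} is equivalent to $\escript$ not being isomorphic to a direct sum.

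I expect the real content to lie entirely in the converse, and the crux there to be the observation that the sharp spectral components $c_{i}$ of a central effect are themselves central. This step works only because Theorem~\ref{thm34} delivers the double-commutant membership $c_{i}\in\brac{b}''$; stripped of that, the $c_{i}$ could not be placed in $\escript'$. The remaining pieces (that $\lambda^{2}=\lambda$ forces $\lambda\in\brac{0,1}$, and that an orthogonal sum of sharp effects equal to $1$, each of which is $0$ or $1$, has a single summand equal to $1$) are routine and rely only on the existence of a state and on $1\ne 0$.
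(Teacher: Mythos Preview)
Your proof is correct and follows essentially the same route as the paper: both directions are reduced to Theorem~\ref{thm33}, and the substantive converse step uses Theorem~\ref{thm34} together with the observation that $b\in\escript'$ forces $\brac{b}'=\escript$ and hence $\brac{b}''=\escript'$, so the sharp spectral components of a central element are themselves central. The only cosmetic difference is that the paper argues the converse contrapositively (start with $b\in\escript'$, $b\ne\lambda 1$, and extract a nontrivial sharp central element), whereas you argue directly (show every $b\in\escript'$ is a scalar); the key idea is identical.
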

\begin{proof}
Suppose $\escript$ is a factor. If $\escript$ is isomorphic to a direct sum of COSEA's $\escript _1,\escript _2$, then by Theorem~\ref{thm33} there is an $a\in S(\escript )\cap\escript '$ with $a\ne 0,1$. But then $a=\lambda 1$ for some $\lambda\in (0,1)$. Since $a^2=a$ we have that
$\lambda ^2=\lambda$ so $\lambda =0$ or $\lambda =1$ which is a contradiction. Conversely, suppose $\escript$ is not a factor so that
$\escript '\ne\brac{\lambda 1\colon\lambda\in [0,1]}$. Then there is a $b\in\escript '$ with $b\ne\lambda 1$ for any $\lambda\in [0,1]$. By Theorem~\ref{thm34}, there exists an $a\in S(\escript )\cap\brac{b}''$ with $a\ne 0,1$. Since $\brac{b}'=\escript$ we have that
$a\in\brac{b}''=\escript '$. By Theorem~\ref{thm33}, $\escript$ is isomorphic to the direct sum of two COSEA's.
\end{proof}

For $\fscript\subseteq\escript$, if $a\in\fscript\cap S(\escript )$ with $a\ne 0$, we say that $a$ is \textit{minimal sharp} in $\fscript$ if
$b\in\fscript\cap S(\escript )$ and $b\le a$, then $b=a$.
\begin{thm}    
\label{thm39}
$\fscript$ is a commutative sub-COSEA of $\escript$ if and only if there exist minimal sharp elements $a_1,\ldots ,a_n$ in $\fscript$ such that $a_1\oplus\cdots\oplus a_n=1$ and
\begin{equation}                
\label{eq33}
\fscript =\brac{\lambda _1a_1\oplus\cdots\oplus\lambda _na_n\colon\lambda _i\in [0,1]}
\end{equation}
\end{thm}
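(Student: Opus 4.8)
The plan is to prove the two implications separately, leaning on the structure theory recalled above.

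\emph{From $\fscript$ to \eqref{eq33}.} Suppose $\fscript$ is a commutative sub-COSEA of $\escript$. I would first observe that $\fscript$, with the inherited operations, is itself a COSEA: the axioms (EA1)--(EA4), (CO1)--(CO4) and (S1)--(S6) survive restriction because $\fscript$ contains $0,1$ and is closed under $'$, scalar multiplication, $\oplus$ and $\circ$; moreover $\fscript$ is finite-dimensional, since any infinite orthogonal family in $\fscript$ yields one in $\escript$ (each nonzero element dominating some element of $S_1(\escript)$). Being commutative, $\fscript$ then has exactly one context and hence is isomorphic, via an affine isomorphism $\phi$, to a classical COSEA $\escript(\Omega ,\ascript )$ with $\Omega =\brac{\omega _1,\ldots ,\omega _n}$ finite. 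In $\escript (\Omega ,\ascript )$ the minimal nonzero sharp elements are the characteristic functions $\chi _{\brac{\omega _i}}$, they satisfy $\bigoplus _i\chi _{\brac{\omega _i}}=1$, and every element is $\bigoplus _i\lambda _i\chi _{\brac{\omega _i}}$. Since $\phi$ is an affine order-isomorphism it preserves $0,1$, complements, $\oplus$, scalar products, hence the order, and therefore sharpness and minimality among sharp elements. Putting $a_i=\phi ^{-1}(\chi _{\brac{\omega _i}})$ thus gives minimal sharp elements of $\fscript$ with $a_1\oplus\cdots\oplus a_n=1$ and $\fscript =\phi ^{-1}\bigl(\escript (\Omega ,\ascript )\bigr)=\brac{\lambda _1a_1\oplus\cdots\oplus\lambda _na_n\colon\lambda _i\in [0,1]}$, which is \eqref{eq33}.

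\emph{From \eqref{eq33} to $\fscript$.} Conversely, suppose $\fscript$ has the form \eqref{eq33}. Rearranging $a_1\oplus\cdots\oplus a_n=1$ gives $a_i\perp a_j$ for $i\ne j$, so Theorem~\ref{thm21}(iii),(iv) yield $a_i\circ a_j=\delta _{ij}a_i$. Hence for $b=\bigoplus _i\lambda _ia_i\in\fscript$, (S1) and (S6) give $a_j\circ b=\bigoplus _i\lambda _i(a_j\circ a_i)=\lambda _ja_j$; and (S6) gives $a_j\mid\lambda _ia_i$, whence (S5) gives $a_j\mid b$ for all $j$, and applying this once more gives $b\mid c$ for all $b,c\in\fscript$, so $\fscript$ is commutative. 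Closure under the operations then follows: $0,1\in\fscript$ trivially; $(\bigoplus\lambda _ia_i)'=\bigoplus (1-\lambda _i)a_i\in\fscript$ and $\mu (\bigoplus\lambda _ia_i)=\bigoplus (\mu\lambda _i)a_i\in\fscript$ by (CO1)--(CO4); $b\circ c=\bigoplus _j\mu _j(b\circ a_j)=\bigoplus _j(\lambda _j\mu _j)a_j\in\fscript$ using $b\mid a_j$, (S1) and (S6); and if $b\perp c$, then regrouping $b\oplus c$ in the effect algebra exhibits it as $\bigoplus _j(\lambda _ja_j\oplus\mu _ja_j)$, so each $\lambda _ja_j\oplus\mu _ja_j$ is defined, hence $\lambda _j+\mu _j\le 1$ and $b\oplus c=\bigoplus _j(\lambda _j+\mu _j)a_j\in\fscript$.

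I expect the only non-bookkeeping point to be the last one: that $\lambda _ja_j\perp\mu _ja_j$ forces $\lambda _j+\mu _j\le 1$. This is a basic fact about convex effect algebras — if $\lambda +\mu >1$ then, with $\varepsilon =\lambda +\mu -1>0$, (CO2) and cancellation give $\varepsilon a_j=0$, and iterating $\oplus$ then forces $a_j=0$, contradicting $a_j\ne 0$ — so I would either cite it or dispatch it in a line. The forward implication has no real obstacle once one invokes the classification of finite-dimensional commutative COSEA's as classical; the only care needed there is the preliminary verification that $\fscript$ is itself a finite-dimensional COSEA.
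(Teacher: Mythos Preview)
Your proof is correct; the reverse implication fills in exactly the details the paper leaves as ``easy to show.'' The forward implication, however, takes a genuinely different route.

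The paper argues directly inside $\escript$: starting from any nonzero sharp $b\in\fscript$, a descending-chain argument (terminated by finite-dimensionality of $\escript$, using that each nonzero sharp element dominates some element of $S_1(\escript)$) produces a minimal sharp element of $\fscript$ below $b$; collecting these gives finitely many $a_1,\ldots,a_n$ with $\bigoplus a_i=1$, and then Theorem~\ref{thm34} is applied to an arbitrary $d\in\fscript$ to write it as a combination of the $a_i$. You instead promote $\fscript$ to a finite-dimensional COSEA in its own right and invoke the classification from \cite{gud18} that a commutative COSEA is classical, pulling the minimal sharp $a_i$ back along that isomorphism. Your route is cleaner and more conceptual, but it leans on a heavier external result and on the preliminary check that $\fscript$ is itself a finite-dimensional COSEA; the paper's descending-chain argument is essentially a self-contained, bare-hands construction of a context for $\fscript$, avoiding any appeal to the general commutative classification. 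One small point worth making explicit in your write-up: sharpness in $\fscript$ coincides with sharpness in $\escript$ (both are $a\circ a=a$, by Theorem~\ref{thm21}(iii)), so the minimal sharp elements your isomorphism produces really are ``minimal sharp in $\fscript$'' in the paper's sense, which is phrased using $S(\escript)$.
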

\begin{proof}
If $\fscript$ has the form \eqref{eq33}, since $a_i\mid a_j$, $\fscript\subseteq\fscript '$ and it is easy to show that $\fscript$ is a sub-COSEA. Conversely, suppose $\fscript$ is a commutative sub-COSEA of $\escript$. If $b\in\fscript\cap S(\escript )$ with $b\ne 0$ we show there exists a minimal sharp $a$ in $\fscript$ such that $a\le b$. If $b$ is minimal sharp in $\fscript$ we are finished. Otherwise, there exists an
$a_1\in\fscript\cap S(\escript )$ with $a_n\ne 0$ and $a_1<b$. If $a_1$ is minimal sharp in $\fscript$ we are finished. Otherwise, there exists an $a_2\in\fscript\cap S(\escript )$ with $a_2\ne 0$ and $a_2<a_1<b$. This process must end because if $a_1>a_2>a_2>\cdots$ with, $a_i\in\fscript\cap S(\escript )$, $a_i\ne 0$, then letting $b_i=a_i\ominus a_{i+1}$, $i=1,2,\ldots$, we have $b_i\in S(\escript )$ and $b_i\perp b_j$,
$i\ne j$. Since $\escript$ is spectral, there exist $c_i\in S_1(\escript )$ such that $c_i\le b_i$, $i=1,2,\ldots$, but this contradicts the finite-dimensionality of $\escript$. We conclude that for $b\in\fscript\cap S(\escript )$ with $b\ne 0$, there is a minimal sharp $a$ in $\fscript$ such that $a\le b$. Let $a_1,a_2,\ldots ,a_n$ be the minimal sharp elements of $\fscript$. Again, because of finite dimensionality there is a finite number of these. Moreover, we have $a_1\oplus\cdots a_n=1$. If $d\in\fscript$, then Theorem~\ref{thm34} there exist
$d_j\in\fscript\cap S(\escript )$ such that
\begin{equation*}
d=\lambda _1d_1\oplus\cdots\oplus\lambda _md_m
\end{equation*}
where $\lambda _j\in [0,1]$ and $d_1\oplus\cdots\oplus d_m=1$. By our previous work $d_j=\oplus a_{i_j}$ so that
$d=\mu _1a_1\oplus\cdots\oplus\mu _na_n$, $\mu _i\in [0,1]$.
\end{proof}

\begin{cor}    
\label{cor310}
There exist minimal sharp elements $a_1,\ldots ,a_n$ in $\escript '$ such that $a_1\oplus\cdots\oplus a_n=1$ and
\begin{equation*}
\escript '=\brac{\lambda _1a_1\oplus\cdots\oplus\lambda _na_n\colon\lambda _i\in [0,1]}
\end{equation*}
\end{cor}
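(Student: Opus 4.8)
The plan is to obtain this statement as an immediate application of Theorem~\ref{thm39} to the particular sub-COSEA $\fscript=\escript'$. Consequently the only thing that genuinely needs to be verified is that $\escript'$ meets the hypotheses of Theorem~\ref{thm39}, namely that $\escript'$ is a \emph{commutative} sub-COSEA of $\escript$.

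First I would recall that $\escript'$ is a sub-COSEA of $\escript$, as was already noted at the beginning of Section~3. Next I would check commutativity straight from the definition of the commutant: if $a,b\in\escript'$, then in particular $b\in\escript$, and since $a\mid c$ for every $c\in\escript$ we conclude $a\mid b$, that is, $a\circ b=b\circ a$. As this holds for all $a,b\in\escript'$, the set $\escript'$ is commutative (equivalently $\escript'\subseteq(\escript')'$), and hence $\escript'$ is a commutative sub-COSEA of $\escript$. Applying Theorem~\ref{thm39} with $\fscript=\escript'$ then yields minimal sharp elements $a_1,\ldots ,a_n$ in $\escript'$ with $a_1\oplus\cdots\oplus a_n=1$ and
\begin{equation*}
\escript '=\brac{\lambda _1a_1\oplus\cdots\oplus\lambda _na_n\colon\lambda _i\in [0,1]},
\end{equation*}
which is exactly the assertion.

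I do not anticipate any real obstacle: all the substance is carried by Theorem~\ref{thm39}, and the corollary is simply the instantiation $\fscript=\escript'$ once one observes that the center is commutative. The one point worth a clarifying sentence is the reading of ``minimal sharp in $\escript'$'': by the definition preceding Theorem~\ref{thm39} it means minimal among the nonzero elements of $\escript'\cap S(\escript)$, which is the sense intended in the statement of the corollary.
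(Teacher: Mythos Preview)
Your proposal is correct and matches the paper's approach exactly: the paper states Corollary~\ref{cor310} immediately after Theorem~\ref{thm39} with no separate proof, since (as noted earlier in Section~3) $\escript'$ is a commutative sub-COSEA of $\escript$, so the corollary is just Theorem~\ref{thm39} with $\fscript=\escript'$. Your explicit verification of commutativity and your remark on the meaning of ``minimal sharp in $\escript'$'' are fine clarifications but add nothing beyond what the paper already records.
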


\begin{lem}    
\label{lem311}
If $a$ is a minimal sharp element of $\escript '$ and $\fscript =\brac{a\circ b\colon b\in\escript}$, then $\fscript$ is a COSEA with unit $a$ and $\fscript$ is a factor.
\end{lem}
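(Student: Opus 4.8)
The plan is to obtain the COSEA structure on $\fscript$ from the construction already carried out in the proof of Theorem~\ref{thm33}, and then to compute the center of $\fscript$ directly. Since $a$ is minimal sharp in $\escript'$, we have $a\neq 0$. If $a=1$, then $1\circ b=b$ by (S2), so $\fscript=\escript$. If $a\neq 0,1$, then $a\in S(\escript)\cap\escript'$ with $a\neq 0,1$, and the recipe in the proof of Theorem~\ref{thm33} presents $\fscript$ as a COSEA with unit $1_\fscript=a\circ 1=a$, in which $a\circ b_1\oplus_1 a\circ b_2=a\circ(b_1\oplus b_2)$, $\lambda(a\circ b)=a\circ(\lambda b)$ and $(a\circ b)\circ_1(a\circ c)=a\circ(b\circ c)$. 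I would record at the outset, using Theorem~\ref{thm21}, that every $e\in\fscript$ has the form $e=a\circ c$, hence $e\le a$ by (i), hence $a\circ e=e\circ a=e$ by (v); in particular $a\mid e$, $e=a\circ e$, and $\fscript=\brac{x\in\escript\colon x\le a}$ (since $x\le a$ forces $x=a\circ x\in\fscript$).

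To prove that $\fscript$ is a factor I would show that its center (the commutant of $\fscript$ inside $\fscript$) equals $\brac{\lambda a\colon\lambda\in[0,1]}$; recall $a=1_\fscript$. Since $\brac{\lambda 1_\fscript}$ is always contained in the center, only the reverse inclusion has content, and the key step is to show that every $e$ in the center of $\fscript$ lies in $\escript'$. So fix such an $e$ and an arbitrary $b\in\escript$. Writing $e=a\circ e$ and applying the defining formula $(a\circ b_1)\circ_1(a\circ b_2)=a\circ(b_1\circ b_2)$ to the $\circ_1$-product of $e$ with $a\circ b\in\fscript$, the relation $e\circ_1(a\circ b)=(a\circ b)\circ_1 e$ (valid since $e$ is central in $\fscript$) becomes $a\circ(e\circ b)=a\circ(b\circ e)$. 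Combining this with the associativity (S4) (available because $a\mid e$) and with Theorem~\ref{thm21}(i),(ii),(v) — which give $b\circ e\le b\circ a=a\circ b\le a$, hence $a\circ(b\circ e)=b\circ e$ — yields
\begin{equation*}
e\circ b=(a\circ e)\circ b=a\circ(e\circ b)=a\circ(b\circ e)=b\circ e .
\end{equation*}
Thus $e$ commutes with every $b\in\escript$, i.e.\ $e\in\escript'$.

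To finish, I would invoke Corollary~\ref{cor310}: write $\escript'=\brac{\mu_1 a_1\oplus\cdots\oplus\mu_n a_n\colon\mu_i\in[0,1]}$, where $a_1,\dots,a_n$ are the minimal sharp elements of $\escript'$ and $a_1\oplus\cdots\oplus a_n=1$; since $a$ is a minimal sharp element of $\escript'$, we have $a=a_j$ for some $j$. Now let $e$ be in the center of $\fscript$. By the previous paragraph $e\in\escript'$, so $e=\bigoplus_k\mu_k a_k$, while $e\le a=a_j$ by the first paragraph. For $i\neq j$: the $a_k$ are pairwise orthogonal sharp elements, so $a_i\circ a_i=a_i$ and $a_i\circ a_k=0$ for $k\neq i$ by Theorem~\ref{thm21}(iii),(iv); additivity of $b\mapsto a_i\circ b$ (S1) together with (S6) then gives $a_i\circ e=\mu_i a_i$; but $e\le a_j$ gives $a_i\circ e\le a_i\circ a_j=0$ by Theorem~\ref{thm21}(ii), so $\mu_i a_i=0$ and hence $\mu_i=0$. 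Therefore $e=\mu_j a$, proving the center of $\fscript$ is contained in $\brac{\lambda a\colon\lambda\in[0,1]}$, so $\fscript$ is a factor.

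I expect the middle paragraph to be the main obstacle. Commutation inside $\fscript$ only delivers the weaker identity $a\circ(e\circ b)=a\circ(b\circ e)$, and one must strip off the leading $a\circ$; this is possible precisely because $b\circ e\le a$ and $a$ is sharp, so Theorem~\ref{thm21}(v) permits the cancellation. It is here that the sharpness of $a$ (and the concrete description $\fscript=\brac{x\le a}$) is used essentially, whereas the minimality of $a$ in $\escript'$ enters only at the very end, to collapse $\escript'\cap\brac{x\le a}$ down to $\brac{\lambda a}$.
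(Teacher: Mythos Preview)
Your proof is correct and complete, but it takes a somewhat different path than the paper's own argument. The paper first restricts attention to \emph{sharp} elements $a\circ b$ of the center of $\fscript$; it shows $a\circ b\in\escript'$ by writing an arbitrary $c\in\escript$ as $c=a\circ c\oplus a'\circ c$ and observing that $a\circ b$ commutes with each summand (the second because $(a\circ b)\circ(a'\circ c)=0$). Minimality then forces such a sharp element to be $0$ or $a$, and a spectral decomposition inside $\fscript$ finishes the job. Your route is more direct in one respect: you handle an arbitrary (not necessarily sharp) central $e$ at once, using the associativity identity $(a\circ e)\circ b=a\circ(e\circ b)$ together with the cancellation $a\circ(b\circ e)=b\circ e$ (valid because $b\circ e\le a$ and $a$ is sharp) to place $e$ in $\escript'$. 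You then invoke Corollary~\ref{cor310} to read off $e=\mu_j a$. So you trade the paper's ``$a$/$a'$ decomposition plus sharp-then-spectral'' step for an ``associativity/cancel plus structure-of-$\escript'$'' step; both are short, and yours has the mild advantage of never needing to pass through sharp elements of the center. The one thing worth making explicit in your write-up is that the product $\circ_1$ on $\fscript$ agrees with the ambient $\circ$ (this is used implicitly when you translate centrality in $\fscript$ into the equation $a\circ(e\circ b)=a\circ(b\circ e)$), though this is already contained in the proof of Theorem~\ref{thm33}.
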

\begin{proof}
We have shown in the proof of Theorem~\ref{thm33} that $\fscript$ is a COSEA with unit $a$. To show that $\fscript$ is a factor, we must show that $\fscript '\cap\fscript=\brac{\lambda a\colon\lambda\in [0,1]}$. If $a\circ b\in\fscript '\cap\fscript\cap S(\escript )$, then
$a\circ b\mid a\circ c$ for all $c\in\escript$. We also have that $(a\circ b)\circ (a'\circ c)=0$ so $a\circ b\mid a'\circ c$ for all $c\in\escript$. Since $c=a\circ c\oplus a'\circ c$ we have $a\circ b\mid c$ so $a\circ b\in\escript '$. Since $a\circ b\le a$ and $a$ is minimal sharp in
$\escript$ we conclude that if $b\ne 0$ then $a\circ b=a$. Hence, the only sharp elements of $\fscript '\cap\fscript$ are $0$ and $a$. Since every $c\in\fscript '\cap\fscript$ has the form $c=\lambda _1c_1\oplus\cdots\oplus\lambda _nc_n$, $\lambda _i\in [0,1]$, $c_i\in S(\fscript )$ we have that $c=\lambda a$, $\lambda\in [0,1]$. Therefore, $\fscript$ is a factor.
\end{proof}

We can extend the definition of direct sum to more than two summands. We define
\begin{equation*}
\escript _1\oplus\escript _2\oplus\escript _3=(\escript _1\oplus\escript _2)\oplus\escript _3
\end{equation*}
and of course, the placement of the parenthesis is immaterial. In a similar way, we define
$\escript =\escript _1\oplus\escript\oplus\cdots\oplus\escript _n$. For convenience, write $(a_1,\ldots ,a_n)\in\escript$ as
$a_1\oplus\cdots\oplus a_n$, $a_i\in\escript _i$, $i=1,\ldots ,n$. We then have $a_i\circ a_j=0$, $i\ne j$, and $1_1\oplus\cdots\oplus 1_n=1$. Also,
\begin{equation*}
\escript '=\brac{a_1\oplus\cdots\oplus a_n\colon a_i\in\escript '_i}
\end{equation*}

\begin{thm}    
\label{thm312}
Any finite-dimensional COSEA $\escript$ is isomorphic to the direct sum of a finite number of factors.
\end{thm}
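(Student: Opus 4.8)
The plan is to decompose $1$ as a direct sum using the minimal sharp elements of the center $\escript'$, then invoke Lemma~\ref{lem311} to see that each resulting summand is a factor. By Corollary~\ref{cor310} there exist minimal sharp elements $a_1,\ldots,a_n$ in $\escript'$ with $a_1\oplus\cdots\oplus a_n=1$. These satisfy $a_i\circ a_j=0$ for $i\ne j$: indeed, each $a_i$ lies in $\escript'$, so $a_i\mid a_j$; since $a_i\circ a_j\le a_i$ with $a_i$ minimal sharp in $\escript'$ and $a_i\circ a_j\in S(\escript)\cap\escript'$ (being a product of two sharp central elements, using that $\escript'$ is a sub-COSEA), minimality forces $a_i\circ a_j=0$ once we rule out $a_i\circ a_j=a_i$; and $a_i\circ a_j=a_i$ would give $a_i\le a_j$ by Theorem~\ref{thm21}(v), contradicting that $a_j$ is also minimal sharp (and $a_i\ne a_j$, as distinct summands).

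Next I would form, for each $i$, the set $\escript_i=\brac{a_i\circ b\colon b\in\escript}$. By Lemma~\ref{lem311}, each $\escript_i$ is a COSEA with unit $a_i$ and is a factor. The key point is then that $\escript$ is isomorphic to $\escript_1\oplus\cdots\oplus\escript_n$. Every $b\in\escript$ has the representation $b=a_1\circ b\oplus\cdots\oplus a_n\circ b$, with $a_i\circ b\in\escript_i$: this follows by induction on $n$ from the two-summand case established in the proof of Theorem~\ref{thm33}, using that $a_1\oplus\cdots\oplus a_n=1$ and that the $a_i$ are pairwise orthogonal sharp central elements, so that $(a_1\oplus\cdots\oplus a_{n-1})$ is itself sharp and central with complement $a_n$. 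Define $J\colon\escript\to\escript_1\oplus\cdots\oplus\escript_n$ by $J(b)=(a_1\circ b,\ldots,a_n\circ b)$. Arguing exactly as in Theorem~\ref{thm33} — additivity of $b\mapsto a_i\circ b$, the uniqueness of the decomposition $b=\oplus\, a_i\circ b$, compatibility with the scalar product via (S6), and with $\circ$ via $(a_i\circ b)\circ(a_i\circ c)=a_i\circ(b\circ c)$ since $a_i$ is central — one checks that $J$ is an affine isomorphism of COSEA's.

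The main obstacle, and the step deserving the most care, is verifying that the decomposition $b=a_1\circ b\oplus\cdots\oplus a_n\circ b$ is valid and unique; this is where centrality and pairwise orthogonality of the $a_i$ are essential, and where the inductive reduction to Theorem~\ref{thm33} must be spelled out (in particular that a partial sum of the $a_i$ is again a central sharp element). Once that is in hand, the remaining verifications that $J$ respects $\oplus$, $\circ$, scalar multiplication, and $0,1$ are routine bookkeeping identical in form to the two-summand argument, and the fact that each $\escript_i$ is a factor is handed to us by Lemma~\ref{lem311}. I would therefore write the proof as: cite Corollary~\ref{cor310}, establish pairwise orthogonality of the $a_i$, invoke Lemma~\ref{lem311} for each summand, prove the decomposition identity by induction on $n$ reducing to Theorem~\ref{thm33}, and conclude that $J$ is the desired isomorphism.
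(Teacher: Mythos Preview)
Your proposal is correct and follows exactly the paper's approach: invoke Corollary~\ref{cor310} to obtain the minimal sharp central elements $a_1,\ldots,a_n$ summing to $1$, apply Lemma~\ref{lem311} to see each $\escript_i=\brac{a_i\circ b\colon b\in\escript}$ is a factor, and use the decomposition $b=a_1\circ b\oplus\cdots\oplus a_n\circ b$ to exhibit the isomorphism $J(b)=(a_1\circ b,\ldots,a_n\circ b)$. The paper's proof is terse where yours is explicit (orthogonality of the $a_i$, the inductive reduction to Theorem~\ref{thm33}, verification that $J$ is a COSEA isomorphism), but the skeleton is identical.
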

\begin{proof}
By Corollary~\ref{cor310} there exist minimal sharp elements $a_1,\ldots ,a_n$ in $\escript '$ with $a_1\oplus\cdots\oplus a_n=1$. By
Lemma~\ref{lem311}, $\escript _i=\brac{a_i\circ b\colon b\in\escript}$ is a factor with unit $a_i$. Since every $b\in\escript$ has the form
\begin{equation*}
b=a_1\circ b\oplus\cdots\oplus a_n\circ b
\end{equation*}
it follows that $\escript$ is isomorphic to $\escript _1\oplus\cdots\oplus\escript _n$.
\end{proof}

We close this section with a result about the state space of the direct sum. If $V$ is a real vector space and $A_1,\ldots ,A_n\subseteq V$ we define the \textit{convex hull} of $a_1,\ldots ,A_n$ by
\begin{align*}
CH&(A_1,\ldots ,A_n)\\
&=\brac{\lambda _1v_1+\cdots +\lambda _nv_n\colon\lambda _i\le 0,\ \sum\lambda _i=1,\ v_i\in A_i,\ i=1,\ldots ,n}
\end{align*}

\begin{thm}    
\label{thm313}
$\Omega (\escript _1\oplus\cdots\oplus\escript _n)=CH\paren{\Omega (\escript _1),\ldots ,\Omega (\escript _n)}$
\end{thm}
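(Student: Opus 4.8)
The plan is to prove the two inclusions separately, identifying the states on the direct sum with convex combinations of states on the summands. It suffices to treat $n=2$, since the general case follows by induction using $\escript_1\oplus\cdots\oplus\escript_n=(\escript_1\oplus\cdots\oplus\escript_{n-1})\oplus\escript_n$ together with the associativity of the convex hull (a routine check that $CH(CH(A_1,\dots,A_{n-1}),A_n)=CH(A_1,\dots,A_n)$). So fix $\escript=\escript_1\oplus\escript_2$ with unit $1=(1_1,1_2)$.

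For the inclusion $\supseteq$, I would take $\omega_1\in\Omega(\escript_1)$, $\omega_2\in\Omega(\escript_2)$ and $\lambda\in[0,1]$, and define $\omega\colon\escript\to[0,1]$ by $\omega(a_1,a_2)=\lambda\omega_1(a_1)+(1-\lambda)\omega_2(a_2)$. One checks directly that $\omega(1)=\lambda\cdot 1+(1-\lambda)\cdot 1=1$, and that $\omega$ is additive: if $(a_1,a_2)\perp(b_1,b_2)$, i.e.\ $a_1\perp b_1$ and $a_2\perp b_2$, then using additivity of $\omega_1$ and $\omega_2$ we get $\omega\paren{(a_1,a_2)\oplus(b_1,b_2)}=\lambda\omega_1(a_1\oplus b_1)+(1-\lambda)\omega_2(a_2\oplus b_2)=\omega(a_1,a_2)+\omega(b_1,b_2)$. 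Hence $\omega$ is a state on $\escript$, so $CH\paren{\Omega(\escript_1),\Omega(\escript_2)}\subseteq\Omega(\escript)$.

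For the inclusion $\subseteq$, let $\omega\in\Omega(\escript)$. The idea is to restrict $\omega$ to the two ``blocks'' cut out by the central sharp elements $p_1=(1_1,0_2)$ and $p_2=(0_1,1_2)$, which satisfy $p_1\oplus p_2=1$. Set $\lambda=\omega(p_1)$, so $\omega(p_2)=\omega(1)-\omega(p_1)=1-\lambda$ since $p_1\perp p_2$. If $0<\lambda<1$, define $\omega_1\colon\escript_1\to[0,1]$ by $\omega_1(a_1)=\lambda^{-1}\omega(a_1,0_2)$ and $\omega_2\colon\escript_2\to[0,1]$ by $\omega_2(a_2)=(1-\lambda)^{-1}\omega(0_1,a_2)$; additivity is inherited from $\omega$ (using that $(a_1,0_2)\perp(b_1,0_2)$ iff $a_1\perp b_1$, and likewise on the second block), and $\omega_1(1_1)=\lambda^{-1}\omega(p_1)=1$, $\omega_2(1_2)=1$, so $\omega_i\in\Omega(\escript_i)$. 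Finally, for any $(a_1,a_2)\in\escript$ we have the decomposition $(a_1,a_2)=(a_1,0_2)\oplus(0_1,a_2)$, so $\omega(a_1,a_2)=\omega(a_1,0_2)+\omega(0_1,a_2)=\lambda\omega_1(a_1)+(1-\lambda)\omega_2(a_2)$, exhibiting $\omega$ as the required convex combination. The boundary cases $\lambda=0$ and $\lambda=1$ need separate but easy handling: if $\lambda=1$ then $\omega(0_1,a_2)\le\omega(p_2)=0$ forces $\omega(a_1,a_2)=\omega(a_1,0_2)$, so $\omega$ factors through the first summand via $\omega_1(a_1)=\omega(a_1,0_2)$, and one still writes $\omega$ as a (degenerate) convex combination with any fixed $\omega_2\in\Omega(\escript_2)$ and weight $1$ on $\omega_1$; symmetrically for $\lambda=0$. (Here I am implicitly using that $\Omega(\escript_i)\ne\emptyset$, which holds since each $\escript_i$ is a COEA and hence carries at least the states just constructed.)

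The main obstacle, such as it is, lies in the boundary cases $\lambda\in\brac{0,1}$: when $\lambda=0$ or $\lambda=1$ one cannot divide, so $\omega$ only sees one of the two summands and one must either supply an arbitrary auxiliary state on the invisible summand or phrase the convex hull so as to allow degenerate combinations. Once those cases are dispatched the argument is entirely formal, and the only genuinely load-bearing facts are the universal decomposition $(a_1,a_2)=(a_1,0_2)\oplus(0_1,a_2)$ and the fact that orthogonality in the product is coordinatewise — both recorded in the discussion of direct sums preceding Lemma~\ref{lem31}.
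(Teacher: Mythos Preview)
Your proposal is correct and follows essentially the same route as the paper's proof: reduce to $n=2$, check directly that a convex combination of states on the summands defines a state on $\escript_1\oplus\escript_2$, and conversely decompose an arbitrary state $\omega$ via the central idempotents $(1_1,0_2)$ and $(0_1,1_2)$, normalizing the restrictions $\omega(\,\cdot\,,0_2)$ and $\omega(0_1,\,\cdot\,)$ by their total masses. You are in fact slightly more careful than the paper in flagging that the degenerate cases $\lambda\in\{0,1\}$ require $\Omega(\escript_i)\ne\emptyset$ in order to write $\omega$ literally as an element of the convex hull as defined; your justification for nonemptiness is a bit circular as stated, but the paper supplies the needed fact elsewhere (each finite-dimensional COSEA has one-dimensional sharp elements and hence the associated states $\ahat$).
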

\begin{proof}
We shall show that $\Omega (\escript _1\oplus\escript _2)=CH\paren{\Omega (\escript _1),\Omega (\escript _2)}$ and the general result easily follows. If $\omega _1\in\Omega (\escript _1)$, $\omega _2\in\Omega (\escript _2)$, $\lambda\in [0,1]$,
$(a,b)\in\escript =\escript _1\oplus\escript _2$, define
\begin{equation*}
\omega (a,b)=\lambda\omega _1(a)+(1-\lambda )\omega _2(b)
\end{equation*}
To show that $\omega\in\Omega (\escript )$ we have that
\begin{align*}
\omega (1_1,1_2)&=\lambda\omega _1(1_1)+(1-\lambda )\omega _2(1_2)=1\\
\intertext{and}
\omega\sqbrac{(a_1,a_2)\oplus (b_1,b_2)}&=\omega\sqbrac{(a_1\oplus b_1,a_2\oplus b_2)}\\
  &=\lambda\omega _1(a_1\oplus b_1)+(1-\lambda )\omega _2(a_2\oplus b_2)\\
  &=\lambda\sqbrac{\omega _1(a_1)+\omega _1(b_1)}+(1-\lambda )\sqbrac{\omega _2(a_2)+\omega _2(b_2)}\\
  &=\sqbrac{\lambda\omega _1(a_1)+(1-\lambda )\omega _2(a_2)}+\sqbrac{\lambda\omega _1(b_1)+(1-\lambda )\omega _2(b_2)}\\
  &=\omega (a_1,a_2)+\omega (b_1,b_2)
\end{align*}
Hence, $CH\paren{\Omega (\escript _1),\Omega (\escript _2)}\subseteq\Omega (\escript _1\oplus\escript _2)$. To show that
$\Omega (\escript _1\oplus\escript _2)\subseteq CH\paren{\Omega (\escript _1),\Omega (\escript _2)}$, let
$\omega\in\Omega (\escript _1\oplus\escript _2)$. If $\omega (1_1,0)=0$ then for $b\in\escript _2$ define $\omega _2(b)=\omega (0_1,b)$. Since
$\omega (0_1,1_2)=1$, $\omega _2\in\Omega (\escript _2)$ and we have that
\begin{equation*}
\omega (a,b)=\omega\paren{(a,0_2)\oplus (0_1,b)}=\omega (0_1,b)=\omega _2(b)
\end{equation*}
Similarly, if $\omega (0_1,1_2)=0$, then letting $\omega _1(a)=\omega (a,0_2)$ we have that $\omega (a,b)=\omega _1(a)$. If
$\omega (1_1,0_2)$, $\omega (0_1,1_2)\ne 0$, define $\omega _1\in\Omega (\escript _1)$, $\omega _2\in\Omega (\escript _2)$ by
\begin{equation*}
\omega _1(a)=\frac{1}{\omega (1_1,0_2)}\,\omega (a,0_2),\quad\omega _2(b)=\frac{1}{\omega (0_1,1_2)}\,\omega (0_1,b)
\end{equation*}
Then $\omega (1_1,0_2)+\omega (0_1,1_2)=\omega (1)=1$ and
\begin{equation*}
\omega (a,b)=\omega (a,0_2)+\omega (0_1,b)=\omega (1_1,0_2)\omega _1(a)+\omega (0_1,1_2)\omega _2(b)\qedhere
\end{equation*}
\end{proof}

\section{Conditioning and Spectra} 
As before $\escript$ will denote a finite-dimensional COSEA and if $a\in S_1(\escript )$ then $\ahat$ is the unique state on $\escript$ such that
$\ahat (a)=1$. If $b\in\escript$ and $\omega\in\Omega (\escript )$ with $\omega (b)\ne 0$ we define the \textit{conditional probability for}
$\omega$ \textit{given} $b$ as $\omega (c\mid b)=\omega (b\circ c)/\omega (b)$ for every $c\in\escript$. Notice that $\omega (\ctimes\mid b)$ is indeed a state on $\escript$.

\begin{thm}    
\label{thm41}
Let $a\in S_1(\escript )$.
{\rm (i)}\enspace $\ahat$ is the unique state on $\escript$ such that $a\circ b=\ahat (b)a$ for all $b\in\escript$.
{\rm (ii)}\enspace $\ahat$ is the unique state on $\escript$ such that $\ahat (b)=\ahat (a\circ b)$ for all $b\in\escript$.
{\rm (iii)}\enspace If $\omega\in\Omega (\escript )$ with $\omega (a)\ne 0$, then $\omega (b\mid a)=\ahat (b)$ for all $b\in\escript$.
\end{thm}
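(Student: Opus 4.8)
The plan is to construct the state $\ahat$ directly out of the sequential product and then read all three assertions off that construction. For part~(i) the key observation is Theorem~\ref{thm21}(i): $a\circ b\le a$ for every $b\in\escript$, so since $a\in S_1(\escript)$ we get $a\circ b=\mu(b)\,a$ for some scalar $\mu(b)\in[0,1]$. I would first note that $\mu(b)$ is uniquely determined (so $b\mapsto\mu(b)$ is a well-defined map) by applying the affine state $\ahat$: if $\lambda a=\nu a$ then $\lambda=\lambda\ahat(a)=\ahat(\lambda a)=\ahat(\nu a)=\nu$. Then I would verify that $\mu$ is a state. First $a\circ 1=a$ — this uses $a\le 1$ together with Theorem~\ref{thm21}(v) — so $\mu(1)=1$. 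Additivity is obtained by applying (S1) to $a\circ(b\oplus c)$ and then applying $\ahat$ to the resulting identity $\mu(b\oplus c)\,a=\mu(b)\,a\oplus\mu(c)\,a$. Finally $a\circ a=a$ by Theorem~\ref{thm21}(iii), so $\mu(a)=1$; state-uniqueness then gives $\mu=\ahat$, which is exactly $a\circ b=\ahat(b)\,a$. Uniqueness of such a state is automatic: any $\omega\in\Omega(\escript)$ with $a\circ b=\omega(b)\,a$ for all $b$ satisfies $\omega(b)\,a=\ahat(b)\,a$, hence (again by applying $\ahat$) $\omega=\ahat$.

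Part~(ii) should follow from part~(i) and affineness of $\ahat$: $\ahat(a\circ b)=\ahat\paren{\ahat(b)\,a}=\ahat(b)\ahat(a)=\ahat(b)$. For the uniqueness clause I would test a candidate state $\omega$ at $b=a'$: the complement $a'$ is sharp and $a\perp a'$, so Theorem~\ref{thm21}(iv) yields $a\circ a'=0$, whence $\omega(a')=\omega(a\circ a')=0$, so $\omega(a)=1$, and state-uniqueness forces $\omega=\ahat$.

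Part~(iii) is then a one-line consequence: for $\omega\in\Omega(\escript)$ with $\omega(a)\ne 0$, affineness of $\omega$ together with part~(i) give
\begin{equation*}
\omega(b\mid a)=\frac{\omega(a\circ b)}{\omega(a)}=\frac{\ahat(b)\,\omega(a)}{\omega(a)}=\ahat(b).
\end{equation*}

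The only genuinely delicate point is the well-definedness and additivity of the scalar function $\mu$ in part~(i); everything there is handled by pushing the operator identities through the affine state $\ahat$ and using $\ahat(a)=1$. After that, the whole argument rests only on the already-established parts of Theorem~\ref{thm21} and on the fact that every COSEA is state-unique.
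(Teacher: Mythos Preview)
Your proposal is correct and follows essentially the same route as the paper: construct the scalar function from $a\circ b\le a$, identify it with $\ahat$ via state-uniqueness, and read off (ii) and (iii). The only noticeable difference is in the uniqueness clause of (ii): the paper tests at $b=1$ (using $a\circ 1=a$ to get $\omega(a)=\omega(1)=1$ in one step), whereas you test at $b=a'$ and invoke Theorem~\ref{thm21}(iv); both work, but the paper's choice is a touch cleaner since it avoids checking that $a'$ is sharp.
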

\begin{proof}
(i)\enspace Since $a\circ b\le a$, there exists $\lambda _a(b)\in [0,1]$ such that $a\circ b=\lambda _a(b)a$. Applying $\ahat$ to both sides gives
$\lambda _a(b)=\ahat (a\circ b)$. It is clear that $\lambda _a\in\Omega (\escript )$ and $\lambda _a(a)=1$. Hence, $\lambda _a=\ahat$ so that
$a\circ b=\ahat (b)a$ for all $b\in\escript$. If $\omega\in\Omega (\escript )$ satisfies $a\circ b=\omega (b)a$ for all $b\in\escript$, letting $b=a$ gives
\begin{equation*}
a=a\circ a=\omega (a)a
\end{equation*}
Hence, $\omega (a)=1$ so $\omega =\ahat$. Thus, $\ahat$ is unique.
(ii)\enspace By (i) we have that
\begin{equation*}
\ahat (a\circ b)=\ahat (b)\ahat (a)=\ahat (b)
\end{equation*}
for all $b\in\escript$. If $\omega\in\Omega (\escript )$ satisfies $\omega (b)=\omega (a\circ b)$ for all $b\in\escript$, letting $b=1$ gives
$\omega (a)=\omega (1)=1$ so that $\omega =\ahat$.
(iii)\enspace If $\omega (a)\ne 0$, applying (i) gives
\begin{equation*}
\omega (b\mid a)=\frac{\omega (a\circ b)}{\omega (a)}=\frac{\omega\paren{\ahat (b)a}}{\omega (a)}=\ahat (b)\qedhere
\end{equation*}
\end{proof}

From Theorem~\ref{thm41}(iii) we have that $\ahat (b)=\omega (b\mid a)$ for all $\omega\in\Omega (\escript )$ with $\omega (a)\ne 0$. We conclude that $\ahat$ is the \textit{universal} conditional probability given $a$.

Let $\Omegahat (\escript )=\Omega (\escript )\cup\brac{0}$ where $0(b)=0$ for all $b\in\escript$. For all $a\in\escript$ we define the \textit{conditional probability map} $\gamma _a\colon\Omegahat (\escript )\to\Omegahat (\escript )$ by $\gamma _a(0)=0$ and for $\omega\ne 0$
\begin{equation*}
\gamma _a(\omega )=\begin{cases}\omega (\ctimes\mid a)&\hbox{if }\omega (a)\ne 0\\0&\hbox{if }\omega (a)=0\end{cases}
\end{equation*}
It is clear that $\gamma _0(\omega )=0$ and $\gamma _1(\omega )=\omega$ for all $\omega\in\Omegahat (\escript )$. The next result summarizes properties of $\gamma$.

\begin{lem}    
\label{lem42}
{\rm (i)}\enspace If $a\in S_1(\escript )$, the $\ahat$ is the unique nonzero fixed point of $\gamma _a$; that is, $\gamma _a\omega =\omega$,
$\omega\ne 0$ implies that $\omega =\ahat$.
{\rm (ii)}\enspace If $a\perp b$, $c\mid a$, $c\mid b$ then for all $\omega\in\Omega (\escript )$ we have that
\begin{equation}                
\label{eq41}
\omega (a\oplus b)\gamma _{a\oplus b}(\omega )(c)=\omega (a)\gamma _a(\omega )(c)+\omega (b)\gamma _b(\omega )(c)
\end{equation}
{\rm (iii)}\enspace If $a\mid b$, then for all $\omega\in\Omegahat (\escript )$ we have that
\begin{equation}                
\label{eq42}
\omega (a')\gamma _{a'}(\omega )(b)=\omega (b)-\omega (a)\gamma _a(\omega )(b)
\end{equation}
{\rm (iv)}\enspace For all $\omega\in\Omegahat (\escript )$ and $c\in\escript$ we have that
\begin{align}                
\label{eq43}
\omega (a\circ b)\gamma _{a\circ b}(\omega )(c)=\omega\sqbrac{(a\circ b)\circ c}\\
\intertext{and}
\label{eq44}   
\omega (a\circ b)\sqbrac{\gamma _b\gamma _a(\omega )}(c)=\omega\sqbrac{a\circ (b\circ c)}
\end{align}
\end{lem}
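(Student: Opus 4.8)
The plan is to reduce all four parts to the single identity
\begin{equation*}
\omega (d)\,\gamma _d(\omega )(c)=\omega (d\circ c)\qquad\text{for all }d,c\in\escript ,\ \omega\in\Omegahat (\escript );
\end{equation*}
call it $(\ast)$. I would prove $(\ast)$ by following the cases in the definition of $\gamma _d$: if $\omega =0$ or $\omega (d)=0$ then $\gamma _d(\omega )=0$, so the left side is $0$, while $d\circ c\le d$ by Theorem~\ref{thm21}(i) forces $0\le\omega (d\circ c)\le\omega (d)=0$, so the right side is $0$ as well; and if $\omega\ne 0$ with $\omega (d)\ne 0$ then $\gamma _d(\omega )(c)=\omega (d\circ c)/\omega (d)$ by definition, and multiplying through gives $(\ast)$.

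For (i): since $\ahat (a)=1\ne 0$, $(\ast)$ yields $\gamma _a(\ahat )(c)=\ahat (a\circ c)=\ahat (c)$ for all $c$ by Theorem~\ref{thm41}(ii), so $\gamma _a(\ahat )=\ahat$, which is nonzero. Conversely, if $\omega\ne 0$ and $\gamma _a(\omega )=\omega$, then $\omega\in\Omega (\escript )$, and $\omega (a)=0$ would give $\gamma _a(\omega )=0\ne\omega$; hence $\omega (a)\ne 0$, and Theorem~\ref{thm41}(iii) gives $\gamma _a(\omega )(b)=\omega (b\mid a)=\ahat (b)$ for all $b$, so $\omega =\gamma _a(\omega )=\ahat$.

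For (ii) and (iii) I would use $(\ast)$ to turn each statement into an effect-algebra identity. Applying $(\ast)$ with $d=a\oplus b$, $d=a$ and $d=b$, the two sides of \eqref{eq41} become $\omega\paren{(a\oplus b)\circ c}$ and $\omega (a\circ c)+\omega (b\circ c)$, so (ii) reduces to $(a\oplus b)\circ c=(a\circ c)\oplus(b\circ c)$: since $c\mid a$, $c\mid b$ and $a\perp b$, (S5) gives $c\mid (a\oplus b)$, hence $(a\oplus b)\circ c=c\circ (a\oplus b)=(c\circ a)\oplus(c\circ b)=(a\circ c)\oplus(b\circ c)$ by (S1), and applying the additive $\omega$ finishes it. Likewise $(\ast)$ turns \eqref{eq42} into $\omega (a'\circ b)=\omega (b)-\omega (a\circ b)$, so (iii) reduces to $(a\circ b)\oplus(a'\circ b)=b$: from $a\mid b$ we get $b\mid a$, hence $b\mid a'$ by (S4), so $(b\circ a)\oplus(b\circ a')=b\circ (a\oplus a')=b\circ 1=b$ by (S1) and Theorem~\ref{thm21}(v) (as $1\in S(\escript )$ and $b\le 1$), and $b\circ a=a\circ b$, $b\circ a'=a'\circ b$.

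For (iv): equation \eqref{eq43} is exactly $(\ast)$ with $d=a\circ b$. For \eqref{eq44} I would set $\nu =\gamma _a(\omega )\in\Omegahat (\escript )$; then $(\ast)$ applied to $\nu$ reads $\nu (b)\,\sqbrac{\gamma _b\gamma _a(\omega )}(c)=\nu (b\circ c)$, and multiplying by $\omega (a)$ and invoking $(\ast)$ twice more (with $d=a$, once with $c$ replaced by $b$ and once with $c$ replaced by $b\circ c$) to rewrite $\omega (a)\nu (b)=\omega (a\circ b)$ and $\omega (a)\nu (b\circ c)=\omega\paren{a\circ (b\circ c)}$ gives \eqref{eq44}. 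I do not expect a genuine obstacle here: the one point that needs care is that $(\ast)$ is engineered to absorb every degenerate case ($\omega =0$, $\omega (a)=0$, $\nu (b)=0$) uniformly, so that no case analysis is needed in (iii) and (iv); the rest is the three short effect-algebra identities above, each a one-line use of (S1), (S4), (S5) and Theorem~\ref{thm21}.
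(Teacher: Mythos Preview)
Your proof is correct and follows essentially the same route as the paper: both arguments reduce parts (ii)--(iv) to the effect-algebra identities $(a\oplus b)\circ c=(a\circ c)\oplus(b\circ c)$, $(a\circ b)\oplus(a'\circ b)=b$, and the definition of $\gamma$, and handle (i) via Theorem~\ref{thm41}. The one organizational difference is that you isolate the identity $(\ast)$ $\omega(d)\gamma_d(\omega)(c)=\omega(d\circ c)$ once at the outset and prove it uniformly over $\Omegahat(\escript)$, whereas the paper re-derives the relevant instance of $(\ast)$ (and the accompanying degenerate-case check $\omega(d)=0\Rightarrow\omega(d\circ c)=0$) separately inside each of (ii), (iii), (iv). Your packaging is a bit cleaner---in particular your derivation of \eqref{eq44} by applying $(\ast)$ to $\nu=\gamma_a(\omega)$ and then twice to $\omega$ absorbs all the case splits the paper does by hand---but the underlying computations are identical.
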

\begin{proof}
Conditions \eqref{eq41}--\eqref{eq44} clearly hold if $\omega =0$. We thus assume that $\omega\in\Omega (\escript )$.
(i)\enspace We have from Theorem~\ref{thm41}(ii) that
\begin{equation*}
\gamma _a(\ahat )(b)=\ahat (b\mid a)=\ahat (a\circ b)=\ahat (b)
\end{equation*}
Hence, $\gamma _a(\ahat )=\ahat$. Now if $\gamma _a\omega =\omega$, then $\omega (a)\ne 0$ and for every $b\in\escript$ we have that
\begin{equation*}
\omega (b)=\gamma _a(\omega )(b)=\frac{\omega (a\circ b)}{\omega (a)}
\end{equation*}
We conclude that $\omega (a)=1$ so that $\omega =\ahat$.
(ii)\enspace If $\omega (a\oplus b)=0$, then $\omega (a)=\omega (b)=0$ so both sides of \eqref{eq41} are $0$. If $\omega (a\oplus b)\ne 0$, then
\eqref{eq41} is equivalent to
\begin{align*}
\omega\sqbrac{(a\oplus b)\circ c}&=\omega\sqbrac{c\circ (a\oplus b)}=\omega (c\circ a\oplus c\circ b)\\
&=\omega (c\circ a)+\omega (c\circ b)=\omega (a\circ c)+\omega (b\circ c)
\end{align*}
(iii)\enspace If $\omega (a')=0$, then the left side of \eqref{eq42} is $0$ and the right side is $\omega (b)=\omega (a\circ b)$. But
$b=b\circ a\oplus b\circ a'$ and since $b\circ a'=a'\circ b\le a'$ we have that $\omega (b\circ a')=0$. Hence, $\omega (b)=\omega (a\circ b)$ so the right side is also $0$. If $\omega (a')\ne 0$, then \eqref{eq42} is equivalent to
\begin{align*}
\omega (a'\circ b)&=\omega (b\circ a')=\omega (b)-\omega (b\circ a)=\omega (b)-\omega (a\circ b)\\
&=\omega (b)=\omega (a)\gamma _a(\omega )(b)
\end{align*}
(iv)\enspace If $\omega (a\circ b)=0$, then both sides of \eqref{eq43} are $0$. If $\omega (a\circ b)\ne 0$, then \eqref{eq43} follows directly. Since $b\circ c\le b$, we have that $a\circ (b\circ c)\le a\circ b$. Thus, if $\omega (a\circ b)=0$ then both sides of \eqref{eq44} are $0$. If
$\omega (a\circ b)\ne 0$, then
\begin{equation*}
\omega (a\circ b)\sqbrac{\gamma _b\gamma _a(\omega )}(c)=\frac{\omega (a\circ b)\gamma _a(\omega )(b\circ c)}{\gamma _a(\omega )(b)}
=\omega\sqbrac{a\circ (b\circ c)}\qedhere
\end{equation*}
\end{proof}

If $\omega (a\oplus b)\ne 0$, then \eqref{eq41} shows that on $\brac{a,b}'$ we have that $\gamma _{a\oplus b}$ is a convex combination
\begin{equation*}
\gamma _{a\oplus b}=\frac{\omega (a)}{\omega (a)+\omega (b)}\,\gamma _a+\frac{\omega (b)}{\omega (a)+\omega (b)}\,\gamma _b
\end{equation*}
If $\omega (a')\ne 0$, then \eqref{eq42} implies that on $\brac{a}'$ we have that
\begin{equation*}
\gamma _{a'}=\frac{\omega -\omega (a)\gamma _a(\omega )}{1-\gamma (a)}
\end{equation*}
If $a\mid b$, then \eqref{eq43} and \eqref{eq44} imply that
\begin{equation*}
\gamma _b\gamma _a=\gamma _a\gamma _b=\gamma _{a\circ b}
\end{equation*}

We know that for $a\in S_1(\escript )$ there exists a unique $\omega\in\Omega (\escript )$ such that $\omega (a)=1$. We now consider whether there are other effects with this property.

\begin{thm}    
\label{thm43}
There exists a unique $\omega\in\Omega (\escript )$ for which $\omega (a)=1$ if and only if there is a context $\brac{a_i}$ such that
\begin{equation}                
\label{eq45}
a=a_1\oplus\lambda _2a_2\oplus\cdots\oplus\lambda _na_n
\end{equation}
where $\lambda _i\in\sqparen{0,1}$.
\end{thm}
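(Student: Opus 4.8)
The plan is to treat the two implications separately, leaning on three facts already established in the excerpt: every COSEA is spectral, every COSEA is state-unique, and for a context $\{a_1,\dots,a_n\}$ one has $\ahat_i(a_j)=\delta_{ij}$ (together with the additivity and affineness of states).

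For the sufficiency direction, suppose $a=a_1\oplus\lambda_2a_2\oplus\cdots\oplus\lambda_na_n$ with $\{a_1,\dots,a_n\}\in\cscript(\escript)$ and $\lambda_i\in[0,1)$. First I would produce a state taking the value $1$ on $a$: applying $\ahat_1$ and using $\ahat_1(a_j)=\delta_{1j}$ gives $\ahat_1(a)=\ahat_1(a_1)=1$. For uniqueness, let $\omega\in\Omega(\escript)$ with $\omega(a)=1$. Since $\omega$ is additive over $\oplus$ and affine, $\omega(a)=\omega(a_1)+\sum_{i\ge 2}\lambda_i\omega(a_i)$, while $\omega(a_1)+\sum_{i\ge 2}\omega(a_i)=\omega(1)=1$ because $a_1\oplus\cdots\oplus a_n=1$. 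Subtracting yields $\sum_{i\ge 2}(1-\lambda_i)\omega(a_i)=1-\omega(a)=0$, and since each $1-\lambda_i>0$ and each $\omega(a_i)\ge 0$ this forces $\omega(a_i)=0$ for $i\ge 2$, hence $\omega(a_1)=1$. State-uniqueness then gives $\omega=\ahat_1$, so the state is unique.

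For the necessity direction, assume there is a unique $\omega\in\Omega(\escript)$ with $\omega(a)=1$. Since $\escript$ is spectral, choose a context $\{a_1,\dots,a_n\}$ and scalars $\mu_i\in[0,1]$ with $a=\mu_1a_1\oplus\cdots\oplus\mu_na_n$; then $\ahat_i(a)=\mu_i$. The plan is to exclude two degenerate cases. If $\mu_i<1$ for every $i$, then for an arbitrary state $\omega'$ we get $\omega'(a)=\sum_i\mu_i\omega'(a_i)\le\sum_i\omega'(a_i)=1$ with equality only if $\omega'(a_i)=0$ for all $i$, which is impossible since $\sum_i\omega'(a_i)=1$; thus no state attains $1$ on $a$, contradicting existence. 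If instead $\mu_i=\mu_j=1$ for some $i\ne j$, then $\ahat_i(a)=\ahat_j(a)=1$ while $\ahat_i\ne\ahat_j$ (because $\ahat_i(a_i)=1\ne 0=\ahat_j(a_i)$), contradicting uniqueness. Hence exactly one $\mu_i$ equals $1$; after relabeling so that $\mu_1=1$, the expression becomes $a=a_1\oplus\mu_2a_2\oplus\cdots\oplus\mu_na_n$ with $\mu_i\in[0,1)$ for $i\ge 2$, which is precisely \eqref{eq45}.

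There is no serious obstacle; the only point needing care is the bookkeeping in the necessity direction, namely that spectrality really does supply a context $\{a_i\}$ with $a=\mu_1a_1\oplus\cdots\oplus\mu_na_n$, that $\ahat_i(a)=\mu_i$ follows from $\ahat_i(a_j)=\delta_{ij}$ together with additivity and affineness, and that ``exactly one eigenvalue equals $1$'' is exactly the condition matching the asymmetric shape of \eqref{eq45} (coefficient $1$ on $a_1$, coefficients strictly below $1$ on the remaining $a_i$).
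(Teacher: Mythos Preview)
Your proof is correct and follows essentially the same approach as the paper: both directions use spectrality to write $a$ over a context, the relations $\ahat_i(a_j)=\delta_{ij}$, and state-uniqueness to pin down $\omega=\ahat_1$. The only cosmetic difference is that in the necessity direction the paper applies the given $\omega$ directly to force some coefficient to equal $1$, whereas you instead rule out ``all $\mu_i<1$'' by showing no state could then reach $1$; the two arguments are equivalent.
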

\begin{proof}
If $a$ has the form \eqref{eq45}, then $\ahat _1(a)=1$. If $\omega\in\Omega (\escript )$ with $\omega (a)=1$, then
\begin{equation*}
\omega (a_1)+\sum _{i=2}^n\lambda _i\omega (a_i)=1
\end{equation*}
If $\omega (a_j)\ne 0$ for some $j=2,\ldots ,n$ then
\begin{equation*}
1=\omega (a_1)+\sum _{i=2}^n\lambda _i\omega (a_i)<\omega (a_1)+\sum _{i=2}^n\omega (a_i)=1
\end{equation*}
which is a contradiction. Hence, $\omega (a_j)=0$, $j=2,\ldots ,n$. We conclude that $\omega (a_1)=1$ so $\omega =\ahat _1$ and $\ahat _1$ is the unique state such that $\ahat _1(a)=1$. Conversely, suppose there exists a unique $\omega \in\Omega (\escript )$ such that $\omega (a)=1$. Let $a=\oplus\lambda _ia_i$ for some $\brac{a_i}\in\cscript (\escript )$, $\lambda _i\in [0,1]$. Since $\omega (a)=1$ we have that
\begin{equation*}
\sum\lambda _i\omega (a_i)=\omega (a)=1
\end{equation*}
If $\omega (a_j)\ne 0$ and $\lambda _j<1$, then
\begin{equation*}
1=\sum\lambda _i\omega (a_i)<\sum\omega (a_i)=1
\end{equation*}
which is a contradiction. Since $\omega (a_j)\ne 0$ for some $j$ we have $\lambda _j=1$ for some $j$. We can assume that $j=1$ and write $a$ in the form \eqref{eq45}. We have that $\lambda _i<1$, $i=2,\ldots ,n$ because if $\lambda _i=1$ then $\ahat _1(a)=\ahat _i(a)=1$ which contradicts the uniqueness of $\omega$.
\end{proof}

\begin{cor}    
\label{cor44}
If $a\in S(\escript )$ , then there exists a unique $\omega\in\Omega (\escript )$ such that $\omega (a)=1$ if and only if $a\in S_1(\escript )$.
\end{cor}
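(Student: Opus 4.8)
The plan is to derive the corollary from Theorem~\ref{thm43} together with the sharpness criterion of Theorem~\ref{thm21}(iii), so that essentially no new machinery is needed.

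For the ``if'' direction, suppose $a\in S_1(\escript)$. The result of \cite{gud18} quoted above provides a state $\ahat\in\Omega(\escript)$ with $\ahat(a)=1$, and state-uniqueness of $\escript$ guarantees it is the only such state; sharpness of $a$ plays no role in this direction.

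For the ``only if'' direction, suppose $a\in S(\escript)$ and that there is a unique $\omega\in\Omega(\escript)$ with $\omega(a)=1$. By Theorem~\ref{thm43} fix a context $\ascript=\brac{a_1,\ldots,a_n}$ with
\[
a=a_1\oplus\lambda_2a_2\oplus\cdots\oplus\lambda_na_n,\qquad\lambda_i\in\sqparen{0,1}.
\]
Then $a\in\escript(\ascript)$, and since the sequential product in the commutative sub-COSEA $\escript(\ascript)$ acts coefficientwise, $a\circ a=a_1\oplus\lambda_2^2a_2\oplus\cdots\oplus\lambda_n^2a_n$. Because $a$ is sharp, Theorem~\ref{thm21}(iii) gives $a\circ a=a$. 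Applying $\ahat_i$ (recall $\ahat_i(a_j)=\delta_{ij}$) to both sides shows $\lambda_i^2=\lambda_i$, hence $\lambda_i\in\brac{0,1}$ for every $i$; since $\lambda_i<1$ for $i\ge 2$, all of $\lambda_2,\ldots,\lambda_n$ vanish and $a=a_1\in S_1(\escript)$.

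The only delicate point is the coefficient comparison: one must note that the elements $a_i$ of a context are nonzero with $\ahat_i(a_j)=\delta_{ij}$, so a representation $\bigoplus_j\mu_ja_j$ relative to a fixed context is uniquely determined and the identity $a\circ a=a$ really does hold entry by entry. (Alternatively, one can avoid the product entirely: $\bigoplus_i\min(\lambda_i,1-\lambda_i)a_i$ is a lower bound for both $a$ and $a'$ in $\escript$, hence equals $0$ by sharpness, again forcing each $\lambda_i\in\brac{0,1}$.) I do not expect any genuine obstacle beyond this bookkeeping.
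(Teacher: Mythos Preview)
Your proof is correct and follows the route the paper intends: the corollary is stated without proof, as an immediate consequence of Theorem~\ref{thm43}, and your argument supplies precisely the missing step---using the idempotence criterion $a\circ a=a$ from Theorem~\ref{thm21}(iii) (or the alternative $a\wedge a'=0$ argument) to force the coefficients $\lambda_2,\ldots,\lambda_n\in\sqparen{0,1}$ to vanish. The only cosmetic difference is that for the ``if'' direction you invoke state-uniqueness directly rather than reapplying Theorem~\ref{thm43}, which is equally valid and arguably cleaner.
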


We say that $b\in\escript$ is \textit{dispersion-free} relative to $\omega\in\Omega (\escript )$ if $\omega (b^2)=\omega (b)^2$. Notice that if
$b\in S(\escript )$, then $\omega (b^2)=\omega (b)^2$ if and only if $\omega (b)=0$ or $\omega (b)=1$. This terminology is due to the definition of dispersion as
\begin{equation*}
\omega\sqbrac{\paren{b-\omega (b)1}^2}=\omega (b^2)-\omega (b)^2\ge 0
\end{equation*}
We say that $b$ is \textit{constant almost everywhere} $\omega\sqbrac{{\rm a.e.}(\omega )}$ if $b=\lambda a\oplus c$, $\lambda\in [0,1]$, where
$a\in S(\escript )$, $a\circ c=0$, $\omega (a)=1$.

\begin{thm}    
\label{thm45}
An effect $b$ is dispersion-free relative to $\omega\in\Omega (\escript )$ if and only if $b$ is constant {\rm a.e.}$(\omega )$.
\end{thm}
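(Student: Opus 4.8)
The plan is to prove the two implications separately, in each case collapsing the sequential product to coordinatewise multiplication along a spectral decomposition, after which the statement reduces to an elementary fact about finite probability distributions.

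For the forward direction, suppose $\omega(b^2)=\omega(b)^2$. Since $\escript$ is spectral I would invoke Theorem~\ref{thm34} to write $b=\bigoplus_{i=1}^m\lambda_ic_i$ with $c_i\in S(\escript)$, the $\lambda_i$ pairwise distinct, $c_1\oplus\cdots\oplus c_m=1$, and $c_i\in\brac{b}''$; in particular $c_i\perp c_j$, hence $c_i\circ c_j=0$ for $i\ne j$ by Theorem~\ref{thm21}(iii),(iv), and $b\mid c_i$. Using (S1), (S6) and $c_i\circ c_i=c_i$ one gets $b\circ c_i=c_i\circ b=\lambda_ic_i$, and therefore $b^2=b\circ b=\bigoplus\lambda_i^2c_i$. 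Writing $p_i=\omega(c_i)$, the numbers $p_i\ge 0$ sum to $\omega(1)=1$, so $\omega(b)=\sum\lambda_ip_i$ and $\omega(b^2)=\sum\lambda_i^2p_i$, and the hypothesis says exactly that the variance $\sum_ip_i\paren{\lambda_i-\sum_j\lambda_jp_j}^2$ is zero. Since each summand is nonnegative and the $\lambda_i$ are pairwise distinct, there is exactly one index $j$ with $p_j\ne 0$, whence $\omega(c_j)=1$ and $\omega(c_i)=0$ for $i\ne j$. Taking $a=c_j$, $\lambda=\lambda_j$ and $c=\bigoplus_{i\ne j}\lambda_ic_i$ then exhibits $b=\lambda a\oplus c$ with $a\in S(\escript)$, $\omega(a)=1$ and $a\circ c=\bigoplus_{i\ne j}\lambda_i(c_j\circ c_i)=0$; that is, $b$ is constant {\rm a.e.}$(\omega)$.

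For the converse, suppose $b=\lambda a\oplus c$ with $a\in S(\escript)$, $a\circ c=0$, $\omega(a)=1$. First I would record the commutation and size facts: from $a\circ c=0$ and (S3) we have $a\mid c$, hence $c\circ a=a\circ c=0$ and $c\le a'$ by Theorem~\ref{thm21}(iv); since $\omega(a')=1-\omega(a)=0$ this forces $\omega(c)=0$ and so $\omega(b)=\lambda\omega(a)+\omega(c)=\lambda$. Next, $a\mid\lambda a$ by (S6), $a\mid c$, and $\lambda a\perp c$ (as $b$ is defined), so $a\mid b$ by (S5) and then $b\mid a'$ by (S4). Now expand only the second factor of $b\circ b$ via (S1), (S6): $b^2=\lambda(b\circ a)\oplus(b\circ c)$, where $b\circ a=a\circ b=a\circ(\lambda a\oplus c)=\lambda a$ (using $a\circ a=a$ and $a\circ c=0$), while $b\circ c\le b\circ a'=a'\circ b=a'\circ(\lambda a\oplus c)=c$ by Theorem~\ref{thm21}(ii),(iv),(v) (note $a'$ is also sharp and $a'\circ a=0$), so $\omega(b\circ c)\le\omega(c)=0$. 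Hence $\omega(b^2)=\lambda\omega(b\circ a)+\omega(b\circ c)=\lambda^2\omega(a)=\lambda^2=\omega(b)^2$, and $b$ is dispersion-free relative to $\omega$.

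The one point needing care is that $\circ$ is additive only in its second argument, so at every step where a product of sums is expanded one must have already established that the elements involved do not interfere ($b\mid c_i$ in the first part, $a\mid b$ and $b\mid a'$ in the second) in order to move the factor being split into the second slot. Granting that bookkeeping, the forward direction is just the observation that a finite distribution of zero variance is a point mass, and the converse is a short direct computation.
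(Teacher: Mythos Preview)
Your proof is correct and follows essentially the same approach as the paper: reduce via a spectral decomposition to the elementary fact that a finite distribution has zero variance iff it is a point mass, and handle the converse by a direct expansion of $b^2$. The only cosmetic differences are that you use the Theorem~\ref{thm34} decomposition with \emph{distinct} $\lambda_i$ (so variance zero forces a single surviving index immediately), whereas the paper uses a full context $\{a_i\}$ and groups the $a_i$ sharing the common value into $a$; and in the converse the paper computes $b^2=\lambda^2a\oplus c^2$ directly while you bound $b\circ c\le c$ via $a'$, which is an equally valid route to $\omega(b^2)=\lambda^2$.
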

\begin{proof}
If $b$ is constant a.e.$(\omega )$, then $b=\lambda a\oplus c$, $a\in S(\escript )$, $a\circ c=0$, $\omega (a)=1$. Then $a\mid c$ and we have that $b^2=\lambda ^2 a\oplus c^2$. Since
\begin{equation*}
a=a\circ c\oplus a\circ c'=a\circ c'=c'\circ a\le c'
\end{equation*}
we have that $1=\omega (a)\le\omega (c')$. Hence, $\omega (c')=1$ so that $\omega (c)=0$. Since $c^2\le c$ and $\omega (c)=0$ we conclude that $\omega (c^2)=0$. Hence,
\begin{equation*}
\omega (b^2)=\lambda ^2\omega (a)=\lambda ^2=\omega (b)^2
\end{equation*}
Conversely, suppose $\omega (b^2)=\omega (b)^2$. Let $b=\lambda _1a_1\oplus\cdots\oplus\lambda _na_n$, $\lambda _i\in [0,1]$,
$\brac{a_i}\in\cscript (\escript )$. Define the random variable $f(a_i)=\lambda _i$ with distribution $\omega (a_i)$. Then the expectation of $f$ becomes
\begin{align*}
E_\omega (f)&=\sum\lambda _i\omega (a_i)=\omega (b)\\
\intertext{and}
E_\omega (f^2)&=\sum\lambda _i^2\omega (a_i)=\omega (b^2)=\omega (b)^2=E_\omega (f)^2
\end{align*}
Hence,
\begin{equation*}
E_\omega\sqbrac{\paren{f-E_\omega (f)}^2}=E_\omega (f^2)-E(f)^2=0
\end{equation*}
Since $\paren{f-E_\omega (f)}^2\ge 0$, $f=E_\omega (f){\rm a.e.}(\omega )$. Therefore,
\begin{equation*}
f(a_i)=E_\omega (f)=\omega (b){\rm a.e.}(\omega )
\end{equation*}
We can assume that
\begin{equation*}
f(a_1)=\cdots =f(a_m)=\omega (b)
\end{equation*}
and $\omega (a_{m+1})=\cdots =\omega (a_n)=0$. Letting $a=a_1\oplus\cdots\oplus a_n$ and
\begin{equation*}
c=\lambda _{m+1}a_{m+1}\oplus\cdots\oplus\lambda _na_n
\end{equation*}
we have that $b=\omega (b)a\oplus c$ where $a\in S(\escript )$, $a\circ c=0$, $\omega (a)=1$.
\end{proof}

It follows from the proof of Theorem~\ref{thm45} that if $a$ is constant a.e.$(\omega )$ then the constant is $\omega (a)$.

We say that $b\in\escript$ has \textit{eigeneffect} $a\in S_1(\escript )$ if $b\mid a$. Notice that $b\mid a$ if and only if $b\circ a=\ahat (b)a$. We call $\ahat (b)$ the \textit{eigenvalue} corresponding to eigeneffect $a$. The set of eigeneffects for $b$ is the \textit{eigenspace} $S_1(b)$ and the set of eigenvalues for $b$ is the \textit{spectrum} $\sigma (b)$. Since $\escript$ is spectral, every $b\in\escript$ can be written as
$b=\lambda _1a_1\oplus\cdots\oplus\lambda _na_n$, $\lambda _i\in [0,1]$, $\brac{a_i}\in\cscript (\escript )$. Since $b\mid a_i$, it follows that
$a_i\in S_1(b)$ and $\lambda _i=\ahat _i(b)\in\sigma (b)$, $i=1,\ldots ,n$. The different eigenvalues of $b$ are unique but there may be various eigeneffects corresponding to the same eigenvalues. For example, if $\lambda _1=\lambda _2$, then $a_1,a_2\in S_1(\escript )$ correspond to
$\lambda _1$. More generally, in this case if $c\in S_1(\escript )$ and $c\le a_1\oplus a_2$ then $c$ corresponds to $\lambda _1$. It is also clear that if $a,b\in S_1(b)$ correspond to different eigenvalues, then $a\circ b=0$. Moreover, $b\in S(\escript )$ if and only if
$\sigma (b)\subseteq\brac{0,1}$ and $b\in S_1(\escript )$ if and only if $1\in\sigma (b)$ and $S_1(b)=\brac{b}$.

We define $m(b)=\min\brac{\lambda\colon\lambda\in\sigma (b)}$ and $M(b)=\max\brac{\lambda\colon\lambda\in\sigma (b)}$. Of course,
$0\le m(b)\le M(b)\le 1$. We define the \textit{numerical range} $r(b)=\sqbrac{m(b),M(b)}$ and the \textit{norm} $\doubleab{b}=M(b)$. It is clear that
$\sigma (\lambda b)=\lambda\sigma (b)$, $r(\lambda b)=\lambda r(b)$ and $\doubleab{\lambda b}=\lambda\doubleab{b}$ for all $b\in\escript$,
$\lambda\in[0,1]$.

\begin{lem}    
\label{lem46}
$r(b)=\brac{\omega (b)\colon\omega\in\Omega (\escript )}$
\end{lem}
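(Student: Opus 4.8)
The plan is to combine the spectral decomposition of $b$ with the convexity of the state space $\Omega(\escript)$. Since $\escript$ is finite-dimensional and spectral, I fix a context $\brac{a_1,\ldots,a_n}\in\cscript(\escript)$ with $b=\lambda_1a_1\oplus\cdots\oplus\lambda_na_n$, $\lambda_i\in[0,1]$, and recall from the preceding discussion that each $\lambda_i=\ahat_i(b)\in\sigma(b)$.

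First I would prove the inclusion $\brac{\omega(b)\colon\omega\in\Omega(\escript)}\subseteq r(b)$. For any state $\omega$, additivity together with the affineness of states on a COEA gives $\omega(b)=\sum_{i=1}^n\lambda_i\omega(a_i)$; since $\omega(a_i)\ge 0$ and $\sum_i\omega(a_i)=\omega(1)=1$, the value $\omega(b)$ is a convex combination of $\lambda_1,\ldots,\lambda_n$, hence lies in $\sqbrac{\min_i\lambda_i,\max_i\lambda_i}$. Because each $\lambda_i\in\sigma(b)$, this interval is contained in $\sqbrac{m(b),M(b)}=r(b)$.

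Applying this last observation to the particular states $\ahat$ with $a\in S_1(\escript)$ shows $\sigma(b)\subseteq\sqbrac{\min_i\lambda_i,\max_i\lambda_i}$; since also each $\lambda_i\in\sigma(b)$, this forces $m(b)=\min_i\lambda_i$ and $M(b)=\max_i\lambda_i$, so both endpoints of $r(b)$ are in fact attained, namely $\ahat_k(b)=m(b)$ and $\ahat_j(b)=M(b)$ for suitable indices $k,j$. For the reverse inclusion, given $\mu\in\sqbrac{m(b),M(b)}$ I write $\mu=t\,m(b)+(1-t)M(b)$ with $t\in[0,1]$ and put $\omega=t\ahat_k+(1-t)\ahat_j$. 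A routine check shows that a convex combination of states is again a state (it maps into $[0,1]$, sends $1$ to $1$, and is additive), and $\omega(b)=t\,m(b)+(1-t)M(b)=\mu$, so $\mu\in\brac{\omega(b)\colon\omega\in\Omega(\escript)}$.

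The argument is largely routine; the only step requiring a little care is the identification of the abstractly defined spectrum $\sigma(b)$ (taken over all eigeneffects) with the finite coefficient set $\brac{\lambda_i}$ of one spectral decomposition — more precisely, showing that $\sigma(b)$ is squeezed between $\min_i\lambda_i$ and $\max_i\lambda_i$ — which is exactly what makes $m(b)$ and $M(b)$ honestly attained extrema. The other ingredient, convexity of $\Omega(\escript)$, is immediate.
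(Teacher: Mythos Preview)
Your proof is correct and follows essentially the same route as the paper: both directions use a spectral decomposition $b=\bigoplus\lambda_i a_i$, the bound $m(b)\le\omega(b)\le M(b)$ comes from writing $\omega(b)$ as a convex combination of the $\lambda_i$, and every intermediate value is realized by a convex combination of two extremal states $\ahat_k,\ahat_j$. The only difference is that the paper simply asserts $\sigma(b)=\{\lambda_i\}$, whereas you supply the extra squeeze argument showing $m(b)=\min_i\lambda_i$ and $M(b)=\max_i\lambda_i$; this is a welcome bit of care but not a different method.
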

\begin{proof}
Let $a_1,a_2\in S_1(b)$ with $b\circ a_1=m(b)a_1$ and $b\circ a_2=M(b)a_2$. For $\lambda\in [0,1]$ we define $\omega _\lambda\in\Omega (\escript )$ by $\omega _\lambda=\lambda\ahat _1+(1-\lambda )\ahat _2$. We then have
\begin{align*}
r(b)&=\sqbrac{m(b),M(b)}=\brac{\lambda m(b)+(1-\lambda )M(b)\colon\lambda\in [0,1]}\\
  &=\brac{\lambda\ahat _1(b)+(1-\lambda )\ahat _2(b)\colon\lambda\in [0,1]}=\brac{\omega _\lambda (b)\colon\lambda\in [0,1]}\\
  &\subseteq\brac{\omega (b)\colon\omega\in\Omega (\escript )}
\end{align*}
Conversely, if $b=\lambda _1a_1\oplus\cdots\oplus\lambda _na_n$, $\brac{a_i}\in\cscript (\escript )$ then $\sigma (b)=\brac{\lambda _i}$.
If $\omega\in\Omega (\escript )$, then $\omega (b)=\sum\lambda _i\omega (a_i)$. Since $\sum\omega (a_i)=1$ we have that
\begin{equation*}
m(b)=\sum m(b)\omega (a_i)\le\sum\lambda _i\omega (a_i)\le\sum M(b)\omega (a_i)=M(b)
\end{equation*}
Hence $m(b)\le\omega (b)\le M(b)$ and we conclude that
\begin{equation*}
\brac{\omega (b)\colon\omega\in\Omega (\escript )}\subseteq r(b)\qedhere
\end{equation*}
\end{proof}

\begin{thm}    
\label{thm47}
{\rm (i)}\enspace $\doubleab{b}=\max\brac{\omega (b)\colon\omega\in\Omega (\escript )}$.
{\rm (ii)}\enspace If $b_1\perp b_2$ then
\begin{equation*}
\doubleab{b_1\oplus b_2}\le\doubleab{b_1}+\doubleab{b_2}
\end{equation*}
{\rm (iii)}\enspace $\doubleab{b}=0$ if and only if $b=0$
{\rm (iv)}\enspace If $a\le b$ then $\doubleab{a}\le\doubleab{b}$ and for all $a\in\escript$, $a\le\doubleab{a}1$.
{\rm (v)}\enspace $\doubleab{a\circ b}\le\doubleab{a}\,\doubleab{b}$.
\end{thm}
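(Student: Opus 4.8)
The plan is to reduce all five parts to Lemma~\ref{lem46}, which identifies the numerical range $r(b)=\sqbrac{m(b),M(b)}$ with the set of state values $\brac{\omega(b)\colon\omega\in\Omega(\escript)}$, supplemented by the spectral decomposition of an effect and the conditional-probability identity $\omega(a\circ b)=\omega(a)\,\gamma_a(\omega)(b)$.

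First, (i) is immediate: $\doubleab{b}=M(b)$ is by definition the right endpoint of the interval $r(b)$, so by Lemma~\ref{lem46} it equals $\max\brac{\omega(b)\colon\omega\in\Omega(\escript)}$. For (ii) I would use that every state is additive, so for each $\omega$ we have $\omega(b_1\oplus b_2)=\omega(b_1)+\omega(b_2)\le\doubleab{b_1}+\doubleab{b_2}$ by~(i); taking the maximum over $\omega$ and applying~(i) once more gives the triangle inequality. For (iii), $\doubleab{0}=0$ is clear, and conversely if $\doubleab{b}=0$ then by~(i) $\omega(b)=0$ for all $\omega\in\Omega(\escript)$; writing $b=\lambda_1a_1\oplus\cdots\oplus\lambda_na_n$ with $\brac{a_i}\in\cscript(\escript)$ and using $\ahat_i(a_j)=\delta_{ij}$ forces $\lambda_i=\ahat_i(b)=0$ for every $i$, hence $b=0$.

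For (iv), if $a\le b$ write $b=a\oplus c$; then $\omega(b)=\omega(a)+\omega(c)\ge\omega(a)$ for every state, so $\doubleab{a}\le\doubleab{b}$ by~(i). To get $a\le\doubleab{a}1$ I would take a spectral decomposition $a=\lambda_1a_1\oplus\cdots\oplus\lambda_na_n$ with $\brac{a_i}\in\cscript(\escript)$; by~(i) and Lemma~\ref{lem46}, $\lambda_i=\ahat_i(a)\le\doubleab{a}$, so (CO2) gives $\doubleab{a}a_i=\lambda_ia_i\oplus\paren{\doubleab{a}-\lambda_i}a_i$; since the $a_i$ are pairwise orthogonal with $a_1\oplus\cdots\oplus a_n=1$, commutativity and associativity of $\oplus$ together with (CO3) let me regroup the iterated sum and conclude $\doubleab{a}1=\doubleab{a}a_1\oplus\cdots\oplus\doubleab{a}a_n=a\oplus\sqbrac{\paren{\doubleab{a}-\lambda_1}a_1\oplus\cdots\oplus\paren{\doubleab{a}-\lambda_n}a_n}$, i.e.\ $a\le\doubleab{a}1$.

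Finally, for (v) I would fix $\omega\in\Omega(\escript)$ and split on whether $\omega(a)=0$: if so, $a\circ b\le a$ from Theorem~\ref{thm21}(i) forces $\omega(a\circ b)=0$; if not, $\gamma_a(\omega)=\omega(\ctimes\mid a)$ is a state and $\omega(a\circ b)=\omega(a)\,\gamma_a(\omega)(b)\le\doubleab{a}\,\doubleab{b}$, using $\omega(a)\le\doubleab{a}$ and $\gamma_a(\omega)(b)\le\doubleab{b}$ from~(i). In both cases $\omega(a\circ b)\le\doubleab{a}\,\doubleab{b}$, so maximizing over $\omega$ and invoking~(i) yields $\doubleab{a\circ b}\le\doubleab{a}\,\doubleab{b}$. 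The only step that needs genuine care is the $\oplus$-bookkeeping in~(iv): regrouping $\bigoplus_i\sqbrac{\lambda_ia_i\oplus\paren{\doubleab{a}-\lambda_i}a_i}$ as $\bigl(\bigoplus_i\lambda_ia_i\bigr)\oplus\bigl(\bigoplus_i\paren{\doubleab{a}-\lambda_i}a_i\bigr)$ relies on the orthogonality of the context $\brac{a_i}$ and repeated application of (EA1), (EA2); everything else is a direct appeal to Lemma~\ref{lem46} and the definitions.
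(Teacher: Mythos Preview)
Your argument is correct throughout, and for parts (i)--(iv) it is essentially the paper's proof: (i) is read off Lemma~\ref{lem46}; (ii) comes from additivity of states plus (i); (iii) and the first half of (iv) are the same monotonicity-through-states computations (your spectral argument for (iii) is just an explicit version of the paper's one-line ``$\sigma(b)=\brac{0}$''). In the second half of (iv) the paper writes the inequality $a=\bigoplus\lambda_ia_i\le M(a)\bigoplus a_i=\doubleab{a}1$ in one step, whereas you unpack it via (CO2), (CO3), (EA1), (EA2); that extra bookkeeping is fine and is exactly what justifies the paper's shortcut.

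Part (v) is where you genuinely diverge. The paper stays inside the order structure: from (iv) it has $b\le\doubleab{b}1$, then Theorem~\ref{thm21}(ii) and (S6) give $a\circ b\le a\circ(\doubleab{b}1)=\doubleab{b}a$, and a second application of (iv) together with $\doubleab{\lambda a}=\lambda\doubleab{a}$ finishes. You instead work state-by-state via the conditional-probability map, using that $\gamma_a(\omega)\in\Omega(\escript)$ when $\omega(a)\ne 0$ so that $\omega(a\circ b)=\omega(a)\,\gamma_a(\omega)(b)\le\doubleab{a}\,\doubleab{b}$, and handling $\omega(a)=0$ separately via $a\circ b\le a$. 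Both routes are short and valid; the paper's is slightly more self-contained (it only needs (iv) and Theorem~\ref{thm21}(ii)), while yours highlights the probabilistic interpretation of $a\circ b$ and avoids invoking monotonicity of $a\circ(\ctimes)$.
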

\begin{proof}
(i) follows from Lemma~\ref{lem46}.
(ii)\enspace By (i) we have that 
\begin{align*}
\doubleab{b_1\oplus b_2}&=\max\brac{\omega (b_1\oplus b_2)\colon\omega\in\Omega (\escript )}
=\max\brac{\omega (b_1)+\omega (b_2)\colon\omega\in\Omega (\escript )}\\
&\le\max\brac{\omega (b_1)\colon\omega\in\Omega (\escript )}+\max\brac{\omega (b_2)\colon\omega\in\Omega (\escript )}\\
&=\doubleab{b_1}+\doubleab{b_2}
\end{align*}
(iii)\enspace We have that $b=0$ if and only if $\sigma (b)=\brac{0}$ which is equivalent to $\doubleab{b}=0$.
(iv)\enspace If $a\le b$, then there exists a $c\in\escript$ such that $b=a\oplus c$. Hence, for all $\omega\in\Omega (\escript )$ we have that
\begin{equation*}
\omega (a)\le\omega (a)+\omega (c)=\omega (b)
\end{equation*}
It follows from (i) that $\doubleab{a}\le\doubleab{b}$. Since $a=\lambda _1a_1\oplus\cdots\oplus\lambda _na_n$, $\brac{a_i}\in\cscript (\escript )$,
$\sigma (a)=\brac{\lambda _i}$ we have that
\begin{equation*}
a=\lambda _1a_2\oplus\cdots\oplus\lambda _na_n\le M(a)(a_1\oplus\cdots\oplus a_n)=M(a)1=\doubleab{a}1
\end{equation*}
(v)\enspace By (iv) we have $b\le\doubleab{b}1$ and hence, $a\circ b\le\doubleab{b}a$. Again by (iv) we conclude that
\begin{equation*}
\doubleab{a\circ b}\le\doubleab{\,\doubleab{b}a}=\doubleab{a}\doubleab{b}\qedhere
\end{equation*}
\end{proof}

\section{Representation Theorems} 
Let $\escript$ be a finite-dimensional spectral COSEA. For $\ascript =\brac{a_i}\in\cscript (\escript )$ define the complex linear space
\begin{equation*}
\hscript (\ascript )=\brac{\sum\alpha _i\ahat _i\colon\alpha _i\in\complex}
\end{equation*}
For $x,y\in\hscript (\ascript )$ with $x=\sum\alpha _i\ahat _i$, $y=\sum\beta _1\ahat _i$ define the inner product
$\elbows{x,y}=\sum\alphabar _i\beta _i$. Thus, $\hscript (\ascript )$ is a complex Hilbert space that we call the \textit{state space for context}
$\ascript$. Of course, $\hscript (\ascript )$ has orthonormal basis $\ascripthat =\brac{\ahat _i\colon i=1,\ldots ,n}$ and $\dim\hscript (\ascript )=n$.
The elements of $\ascripthat$ can be thought of as states in $\Omega (\escript )$ or as unit vectors in $\hscript (\ascript )$ which again correspond to Hilbert space pure states. We now show that this dual role is consistent. For $b\in\escript$ define the linear operator $L_b$ on
$\hscript (\ascript )$ by $L_b=\sum\ahat _j(b)P(\ahat _j)$. Notice that $L_b$ is a positive operator, $L_0=0$, $L_1=I$, $L_{b'}=I-L_b$ and if
$b\perp c$ then $L_{b\oplus c}=L_b+L_c$. We then have that
\begin{equation*}
\elbows{\ahat _i,L_b\ahat _i}=\elbows{\ahat _i,\sum\ahat _j(b)P(\ahat _j)\ahat _i}=\elbows{\ahat _i,\ahat _i(b)\ahat _i}=\ahat _i(b)
\end{equation*}
so the dual roles are consistent. It is easy to see that $L\colon\escript\to\escript\paren{\hscript (\ascript )}$ need not be injective or surjective and does not preserve sharpness. Moreover, all the $L_b$, $b\in\escript$, commute so they do not convey quantum interference. One can say that $L$ gives a distorted partial view of $\escript$. The reason for this is that we are only employing a single context $\ascript$. Each context gives a partial view and in order to obtain a total view, they must all be considered.

In order to consider several contexts together, we introduce a method to compare them. We say that $\escript$ is \textit{comparable} if for every
$\ascript ,\bscript\in\cscript (\escript )$ there exists a unitary operator $U_{\ascript\bscript}\colon\hscript (\ascript )\to\hscript (\bscript )$ such that
\begin{equation}                
\label{eq51}
\ab{\elbows{U_{\ascript\bscript}\ahat ,\bhat}}^2=\ahat (b)
\end{equation}
for all $a\in\ascript$, $b\in\bscript$ and
\begin{equation}                
\label{eq52}
U_{\bscript\cscript}U_{\ascript\bscript}=U_{\ascript\cscript}
\end{equation}
for all $\cscript\in\cscript (\escript )$. Notice that if $\escript$ is comparable, then any two contexts in $\escript$ have the same cardinality.

We now justify why we assume that $\hscript (\ascript )$ is a complex Hilbert space instead of a real space which may seem to be more natural. In many situations, there is an underlying symmetry group that we would like to represent on $\escript$. This is most accurately accomplished by employing a unitary representation of the group on $\hscript (\ascript )$ for some $\ascript\in\cscript (\escript )$. For a unitary representation, we need $\hscript (\ascript )$ to be complex. Moreover, it is desirable for the representation to be context independent. This motivates requiring that
$\escript$ is comparable because in this case the representations for different contexts are unitarily equivalent.

\begin{lem}    
\label{lem51}
If $\escript$ is comparable, then
{\rm (i)}\enspace $U_{\ascript\ascript}=I$,
{\rm (ii)}\enspace $U_{\ascript\bscript}=U_{\bscript\ascript}^*$,\newline
{\rm (iii)}\enspace $\ab{\elbows{U_{\ascript\bscript}\ahat,U_{\cscript\bscript}\chat\,}}^2=\ahat (c)$.
\end{lem}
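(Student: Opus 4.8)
The plan is to derive all three parts directly from the two defining properties \eqref{eq51} and \eqref{eq52} of a comparable COSEA, treating (i) and (ii) as bookkeeping consequences of \eqref{eq52} and then getting (iii) essentially for free. First I would establish (i): setting $\ascript=\bscript=\cscript$ in \eqref{eq52} gives $U_{\ascript\ascript}U_{\ascript\ascript}=U_{\ascript\ascript}$, and since each $U_{\ascript\bscript}$ is unitary (hence invertible), cancelling one factor forces $U_{\ascript\ascript}=I$. Next, for (ii), put $\cscript=\ascript$ in \eqref{eq52} to obtain $U_{\bscript\ascript}U_{\ascript\bscript}=U_{\ascript\ascript}=I$ by part (i), so $U_{\bscript\ascript}$ is the inverse of $U_{\ascript\bscript}$; because $U_{\ascript\bscript}$ is unitary, its inverse equals its adjoint, giving $U_{\ascript\bscript}=U_{\bscript\ascript}^{-1}=U_{\bscript\ascript}^*$, i.e.\ $U_{\ascript\bscript}=U_{\bscript\ascript}^*$. (I would be mildly careful here that the paper's convention is $U_{\ascript\bscript}\colon\hscript(\ascript)\to\hscript(\bscript)$, so the composition in \eqref{eq52} is indeed the one I wrote; the adjoint identity then lives in the appropriate pair of spaces.)

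For part (iii), the idea is to rewrite the left-hand inner product so that \eqref{eq51} applies. Using (ii) we have $U_{\cscript\bscript}^*=U_{\bscript\cscript}$, so
\begin{equation*}
\elbows{U_{\ascript\bscript}\ahat,\,U_{\cscript\bscript}\chat}=\elbows{U_{\cscript\bscript}^*U_{\ascript\bscript}\ahat,\,\chat}=\elbows{U_{\bscript\cscript}U_{\ascript\bscript}\ahat,\,\chat}=\elbows{U_{\ascript\cscript}\ahat,\,\chat}
\end{equation*}
where the last equality is \eqref{eq52} with $\bscript$ playing the role of the intermediate context. Taking squared modulus and applying \eqref{eq51} (with the pair of contexts $\ascript,\cscript$) yields $\ab{\elbows{U_{\ascript\bscript}\ahat,U_{\cscript\bscript}\chat}}^2=\ab{\elbows{U_{\ascript\cscript}\ahat,\chat}}^2=\ahat(c)$, which is the claim.

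I do not expect a genuine obstacle here; the only point requiring a little attention is matching the composition order in \eqref{eq52} to the domain/codomain convention when I move operators across the inner product in (iii) — specifically, that $U_{\cscript\bscript}^*$ maps $\hscript(\bscript)$ back to $\hscript(\cscript)$ so that $U_{\bscript\cscript}U_{\ascript\bscript}$ is the legitimate composite $\hscript(\ascript)\to\hscript(\bscript)\to\hscript(\cscript)$ appearing in \eqref{eq52}. Once the bookkeeping of which space each vector lives in is tracked, every step is a one-line manipulation, so the writeup will be short.
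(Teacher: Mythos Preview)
Your proposal is correct and follows essentially the same approach as the paper: both derive (i) from $U_{\ascript\ascript}^2=U_{\ascript\ascript}$ and unitarity, (ii) from \eqref{eq52} with $\cscript=\ascript$ combined with (i), and (iii) by moving $U_{\cscript\bscript}$ across the inner product as its adjoint $U_{\bscript\cscript}$ and then applying \eqref{eq52} followed by \eqref{eq51}. Your additional remarks about tracking domains and codomains are a reasonable precaution but do not change the argument.
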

\begin{proof}
(i)\enspace Applying \eqref{eq52} gives $U_{\ascript\ascript}U_{\ascript\ascript}=U_{\ascript\ascript}$ Multiplying by $U_{\ascript\ascript}^*$ gives $U_{\ascript\ascript}=I$.
(ii)\enspace By \eqref{eq52} we have that
\begin{equation*}
U_{\bscript\ascript}U_{\ascript\bscript}=U_{\ascript\ascript}=I
\end{equation*}
Hence, $U_{\ascript\bscript}=U_{\bscript\ascript}^*$.
(iii)\enspace Applying \eqref{eq51}, \eqref{eq52} and (ii) gives
\begin{align*}
\ab{\elbows{U_{\ascript\bscript}\ahat,U_{\cscript\bscript}\chat\,}}^2
&=\ab{\elbows{U_{\cscript\bscript}^*U_{\ascript\bscript}\ahat ,\chat\,}}^2
=\ab{\elbows{U_{\bscript\cscript}U_{\ascript\bscript}\ahat ,\chat\,}}^2\\
&=\ab{\elbows{U_{\ascript\cscript}\ahat ,\chat\,}}^2=\ahat (c)\qedhere
\end{align*}
\end{proof}

For $b\in\escript (\bscript )$ with $b=\lambda _1b_1\oplus\cdots\oplus\lambda _nb_n$, define $\btilde\in\hscript (\bscript )$ by
$\btilde =\sum\lambda _iP(\,\bhat _i)$. For comparable $\escript$ define
$\caputilde _{\bscript\ascript}\colon\escript\paren{\hscript (\bscript)}\to\escript\paren{\hscript (\ascript )}$ by
\begin{equation*}
\caputilde _{\bscript\ascript}=U_{\bscript\ascript}BU_{\ascript\bscript}=U_{\bscript\ascript}BU_{\bscript\ascript}^*
\end{equation*}
We say that $\escript$ is \textit{strongly comparable} if $\escript$ is comparable and if $b_1\perp b_2$ with $b_1\in\escript (\ascript )$,
$b_2\in\escript (\bscript )$, $b_1\oplus b_2\in\escript (\cscript )$, then
\begin{equation}                
\label{eq53}
(b_1\oplus b_2)^\sim =\caputilde _{\ascript\cscript}\btilde _1\oplus\caputilde _{\bscript\cscript}\btilde _2
\end{equation}
We see that \eqref{eq53} is a reasonable requirement which postulates that $\oplus$ is independent of its Hilbert space representation.

\begin{thm}    
\label{thm52}
A finite-dimensional COEA $\escript$ is isomorphic to a finite-dimensional Hilbertian sub-COEA if and only if $\escript$ is spectral and strongly comparable.
\end{thm}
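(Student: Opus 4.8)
The plan is to prove both directions, with the substantive work being the ``if'' direction (spectral + strongly comparable $\Rightarrow$ Hilbertian). For the ``only if'' direction, suppose $\escript$ is COEA-isomorphic to a sub-COEA $\fscript$ of $\escript(H)$ for some finite-dimensional complex $H$. I would first recall from the discussion in Section~2 that any Hilbertian COEA is spectral (spectral theorem) and that its one-dimensional sharp elements are exactly the rank-one projections $P(\phi)$; since sub-COEA structure is inherited and $\escript$ is finite-dimensional, $\escript$ is spectral. For comparability, given two contexts $\ascript,\bscript\in\cscript(\escript)$, transport them through the isomorphism to orthonormal bases $\{\phi_i\}$, $\{\psi_j\}$ of the subspace they span; define $U_{\ascript\bscript}$ on $\hscript(\ascript)$ by sending $\ahat_i$ to the vector in $\hscript(\bscript)$ whose coordinates in the basis $\bscripthat$ are $(\langle\psi_j,\phi_i\rangle)_j$. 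Then $\ab{\elbows{U_{\ascript\bscript}\ahat_i,\bhat_j}}^2 = \ab{\langle\psi_j,\phi_i\rangle}^2 = \ahat_i(b_j)$ by the transition-probability formula, giving \eqref{eq51}, and the cocycle identity \eqref{eq52} follows because the map is induced by composing the fixed Hilbert-space change-of-basis matrices. Finally \eqref{eq53} holds because in $\escript(H)$ the operator $\btilde$ of an effect $b\in\escript(\bscript)$ is literally the operator $b$ itself (written in the basis $\bscript$), and $\caputilde_{\bscript\ascript}$ is just conjugation by the unitary identifying $\hscript(\bscript)$ with $\hscript(\ascript)$, so \eqref{eq53} reduces to $b_1\oplus b_2 = b_1 + b_2$ inside $\escript(H)$.

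For the ``if'' direction, fix one context $\ascript_0\in\cscript(\escript)$ and set $H = \hscript(\ascript_0)$. The idea is to send each $b\in\escript$ to an operator on $H$ by first representing $b$ on the Hilbert space of ``its own'' context and then transporting that operator to $H$ via the unitaries $\caputilde$. Precisely: for $b\in\escript$, choose (spectrality) a context $\bscript$ with $b\in\escript(\bscript)$, form $\btilde\in\escript(\hscript(\bscript))$, and define
\begin{equation*}
\Phi(b) = \caputilde_{\bscript\ascript_0}(\btilde) = U_{\bscript\ascript_0}\,\btilde\,U_{\bscript\ascript_0}^*.
\end{equation*}
The first task is \emph{well-definedness}: if $b\in\escript(\bscript)\cap\escript(\cscript)$ for two contexts, one must show $\caputilde_{\bscript\ascript_0}\btilde = \caputilde_{\cscript\ascript_0}\tilde b$ (the same $b$, two representations). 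I expect this to follow by writing both in the common refinement context obtained from Theorem~\ref{thm35} — since $b$ commutes with itself trivially, $b$ lies in $\escript(\dscript)$ for a context $\dscript$ refining both the sharp parts of the two representations — and then using \eqref{eq53} together with Lemma~\ref{lem51}(iii) and the cocycle law \eqref{eq52} to reconcile the two images. Concretely, write $b = \bigoplus\lambda_i e_i$ with $e_i\in S(\escript)$ sharp and each $e_i = \bigoplus_j c_{ij}$ for $c_{ij}\in\ascript$-refining one-dimensional sharp elements forming a context $\dscript$; then $\btilde = \caputilde_{\dscript\bscript}(\text{operator of }b\text{ on }\hscript(\dscript))$ by repeated application of \eqref{eq53}, and transporting by $U_{\bscript\ascript_0}$ and using \eqref{eq52} shows $\Phi(b)$ equals the same thing computed from $\dscript$, independent of the starting context.

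Granting well-definedness, the remaining checks are: (a) $\Phi(b)$ is a positive operator with $0\le\Phi(b)\le I$, immediate since $\btilde = \sum\lambda_i P(\bhat_i)$ with $\lambda_i\in[0,1]$ and conjugation by a unitary preserves the spectrum; (b) $\Phi(0)=0$, $\Phi(1)=I$, $\Phi(a')=I-\Phi(a)$ — the last because for $a,a'$ one may use a \emph{common} context (if $a\in\escript(\bscript)$ then so is $a'$), reducing to $\widetilde{a'} = I - \atilde$ on $\hscript(\bscript)$; (c) additivity: if $a\perp b$ pick contexts $\ascript\ni a$, $\bscript\ni b$, $\cscript\ni a\oplus b$, and apply \eqref{eq53} after transporting everything to $H$ via \eqref{eq52}, getting $\Phi(a\oplus b) = \Phi(a)+\Phi(b)$; (d) affineness, $\Phi(\lambda b) = \lambda\Phi(b)$, since $\widetilde{\lambda b} = \lambda\btilde$ in the same context; (e) injectivity: if $\Phi(a)=\Phi(b)$ then, choosing a common refining context $\dscript$ (again via Theorem~\ref{thm35}, as $a\mid b$ is forced once their images commute — no, more carefully: compute $\langle\ahat,\Phi(b)\ahat\rangle$ for $a\in S_1(\escript)$ and use \eqref{eq51} plus Lemma~\ref{lem51}(iii) to recover $\ahat(b)$ from $\Phi(b)$, so $\Phi(a)=\Phi(b)$ forces $\ahat(a)=\ahat(b)$ for all one-dimensional $a$; since the states $\ahat$, $a\in S_1(\escript)$, are order-determining, $a=b$); and (f) $\Phi(a)\perp\Phi(b)\Rightarrow a\perp b$, i.e. $\Phi$ reflects orthogonality, which again comes down to reading off $\ahat(a)+\ahat(b) = \langle\ahat,(\Phi(a)+\Phi(b))\ahat\rangle \le 1$ for all one-dimensional $a$ and invoking the characterization $a\perp b \iff \omega(a)+\omega(b)\le 1$ for all states. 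The image $\escript(\ascript_0)\subseteq$ range, so $\Phi(\escript)$ is a sub-COEA of $\escript(H)$, and $\escript$ is COEA-isomorphic onto it.

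The main obstacle, as flagged, is \emph{well-definedness of $\Phi$} — showing the transported operator does not depend on which context one uses to represent $b$. This is where strong comparability (condition \eqref{eq53}) and the cocycle law \eqref{eq52} must be combined just so, using the refinement-to-a-common-context technique from the proof of Theorem~\ref{thm35}; everything after that is bookkeeping with positive operators and the order-determining property of the states $\{\ahat : a\in S_1(\escript)\}$.
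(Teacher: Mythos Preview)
Your overall architecture matches the paper's: define the map $\Phi(b)=\caputilde_{\bscript\ascript_0}(\btilde)$ for $b\in\escript(\bscript)$, then verify well-definedness, additivity via \eqref{eq53}, affineness, and injectivity. The ``only if'' direction is also essentially the same. But two of your key steps have genuine gaps.

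\textbf{Well-definedness.} Your plan is to invoke Theorem~\ref{thm35} to produce a context $\dscript$ ``refining both'' representations of $b$, then repeatedly apply \eqref{eq53}. This does not work: Theorem~\ref{thm35} is a COSEA statement (it uses the sequential product and Theorem~\ref{thm34}), while the present theorem is about COEAs; and even in a COSEA, applying it to the triviality $b\mid b$ only returns \emph{some} context containing $b$, not one that simultaneously refines two given contexts $\bscript$ and $\cscript$. The paper's argument avoids all of this with a one-line trick: if $b\in\escript(\bscript)\cap\escript(\cscript)$, set $b_1=0\in\escript(\bscript)$ and $b_2=b\in\escript(\bscript)$, so $b_1\oplus b_2=b\in\escript(\cscript)$; then \eqref{eq53} gives $\btilde_{\cscript}=\caputilde_{\bscript\cscript}(\btilde_{\bscript})$ directly, and composing with $\caputilde_{\cscript\ascript_0}$ and the cocycle law finishes. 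You flagged well-definedness as the ``main obstacle'' but missed that strong comparability already encodes it in this degenerate instance of \eqref{eq53}.

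\textbf{Injectivity and the monomorphism property.} You argue that $\Phi(b)=\Phi(c)$ forces $\ahat(b)=\ahat(c)$ for every $a\in S_1(\escript)$ and then invoke ``the states $\ahat$ are order-determining''. That last claim is not among the hypotheses (spectral, strongly comparable) and is not established anywhere in the paper for a general COEA; knowing all diagonal values $\ahat_i(c)$ in one context $\{a_i\}$ does not force $c\in\escript(\{a_i\})$. The paper instead argues at the operator level: from $J(b)=J(c)$ one gets $U_{\bscript\cscript}\btilde=\ctilde\,U_{\bscript\cscript}$, so each $U_{\bscript\cscript}\bhat_i$ is an eigenvector of $\ctilde$ with eigenvalue $\lambda_i$; matching eigenvectors with the $\chat_j$ and reading off $\bhat_i(c_j)=\ab{\elbows{U_{\bscript\cscript}\bhat_i,\chat_j}}^2=\delta_{ij}$ via \eqref{eq51} then forces $b_i=c_j$ and $b=c$. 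The same issue infects your step (f): the equivalence $a\perp b\iff\omega(a)+\omega(b)\le1$ for all $\omega$ needs the full state space, not just the $\ahat$'s.
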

\begin{proof}
Suppose $\escript$ is isomorphic to a Hilbertian sub-COEA $\fscript$. For simplicity we can assume that $\escript =\fscript$. It is clear that
$\fscript$ is state-unique. By the spectral theorem, if $b\in\fscript$, then $b=\sum\lambda _ia_i$ where $a_i\in S_1\paren{\escript (H)}$ are polynomial functions of $b$. Hence, $a_i\in\fscript$ so $\escript$ is spectral. To show that $\escript$ is comparable, let $\ascript =\brac{a_i}$,
$\bscript =\brac{b_i}$ be contexts in $\escript$. Then $\brac{\ahat _i}$, $\brac{\bhat _i}$ are orthonormal bases of $\hscript$. Define
$U_{\ascript\bscript}\colon\hscript (\ascript )\to\hscript (\bscript )$ by
\begin{equation*}
U_{\ascript\bscript}\ahat _i=\sum _j\elbows{\bhat _j,\ahat _i}\bhat _j
\end{equation*}
and extend by linearity. It is clear that $U_{\ascript\bscript}$ is unitary. Also, \eqref{eq51} holds because
\begin{equation*}
\ab{\elbows{U_{\ascript\bscript}\ahat _i,\bhat _j}}^2=\ab{\elbows{\bhat _j,\ahat _i}}^2=\ahat _i(b_j)
\end{equation*}
If $\cscript =\brac{c_i}$ is another context, we have that
\begin{align*}
U_{\bscript\cscript}U_{\ascript\bscript}\ahat _i&=\sum _j\elbows{\bhat _j,\ahat _i}U_{\bscript\cscript}\bhat _j
  =\sum _{j,k}\elbows{\bhat _j,\ahat _i}\elbows{\chat _k,\bhat _j}\chat _k\\
    &=\sum _k\elbows{\chat _k,\ahat _i}\chat _k=U_{\ascript\cscript}\ahat _i
\end{align*}
Hence, \eqref{eq52} holds so $\escript$ is comparable. In this case, if $a\in\escript$ then $a=\atilde$ and
$\caputilde _{\ascript\cscript}=\caputilde _{\bscript\cscript}=I$ so clearly $\escript$ is strongly comparable.

Conversely, suppose $\escript$ is spectral and strongly comparable. Fix $\ascript\in\cscript (\escript )$ and let\linebreak 
$J\colon\escript\to\escript\paren{\hscript (\ascript )}$ be defined by
\begin{equation*}
J(b)=\caputilde _{\bscript\ascript}(\btilde )
\end{equation*}
where $b\in\escript (\bscript )$, $\bscript\in\cscript (\escript )$. We first show that $J(b)$ is well-defined. That is, we need to show $J(b)$ is independent of the context $\bscript$ containing $b$. Suppose $b\in\escript (\bscript )\cap\escript (\cscript )$. Letting $b_1=0$, $b_2=b$, we have that $b_1\in\escript (\bscript )$, $b_2\in\escript (\bscript )$ and $b_1\oplus b_2=b\in\escript (\cscript )$. By \eqref{eq53} we have that
\begin{equation*}
\btilde =(b_1\oplus b_2)^\sim=\caputilde _{\bscript\cscript}(\btilde _1)\oplus\caputilde _{\bscript\cscript}(\btilde _2)=\caputilde _{\bscript\cscript}(\btilde )
\end{equation*}
Therefore
\begin{equation*}
\caputilde _{\cscript\ascript}(\btilde )=\caputilde _{\cscript\ascript}\caputilde _{\bscript\cscript}(\btilde )=\caputilde _{\bscript\ascript}(\btilde )
\end{equation*}
Hence, $J(b)$ is well-defined. We now show that $J$ is injective. Let $b\in\escript (\bscript )$ with
$b=\lambda _1b_1\oplus\cdots\oplus\lambda _nb_n$, $c\in\escript (\cscript )$ with $c=\mu _1c_1\oplus\cdots\oplus\mu _nc_n$ and suppose that $J(b)=J(c)$. Then $\caputilde _{\bscript\ascript}(\btilde )=\caputilde _{\cscript\ascript}(\ctilde )$ or equivalently
\begin{equation*}
U_{\bscript\ascript}\btilde U_{\ascript\bscript}=U_{\cscript\ascript}\ctilde U_{\ascript\cscript}
\end{equation*}
This implies that
\begin{equation*}
\btilde =U_{\ascript\bscript}U_{\cscript\ascript}\ctilde U_{\ascript\cscript}U_{\bscript\ascript}=U_{\cscript\bscript}\ctilde U_{\bscript\cscript}
\end{equation*}
which gives $U_{\bscript\cscript}\btilde =\ctilde U_{\bscript\cscript}$. We conclude that
\begin{equation*}
\ctilde (U_{\bscript\cscript}\bhat _i)=(U_{\bscript\cscript}\btilde )\bhat _i=\lambda _iU_{\bscript\cscript}\bhat _i
\end{equation*}
Hence, $U_{\bscript\cscript}\bhat _i$ an eigeneffect of $\ctilde$ with corresponding eigenvalue $\lambda _i$. But the eigenvalues of $\ctilde$ are
$\mu _j$ with corresponding eigeneffects $\chat _j$. Therefore, $\lambda _i=\mu _j$ for some $j$ and $U_{\bscript\cscript}\bhat _i=\chat _j$. Since
\begin{equation*}
\bhat _i(c_k)=\ab{\elbows{U_{\bscript\cscript}\bhat _i,\chat _k}}^2=\ab{\elbows{\chat _j,\chat _k}}^2=\delta _{ij}
\end{equation*}
we have that $\bhat _i(c_j)=1$. We conclude that $b_j=c_i$ for all $i$ so $b=c$. We now show that $J(b_1\oplus b_2)=J(b_1)\oplus J(b_2)$. Suppose that $b_1\perp b_2$ with $b_1\in\escript (\bscript )$, $b_2\in\escript (\dscript )$, $b_1\oplus b_2\in\escript (\cscript )$. By strong comparability we have that
\begin{equation*}
(b_1\oplus b_2)^\sim =\caputilde _{\bscript\cscript}(\btilde _1)\oplus\caputilde _{\dscript\cscript}(\btilde _2)
\end{equation*}
Hence,
\begin{align*}
J(b_1\oplus b_2)&=\caputilde _{\cscript\ascript}\sqbrac{(b_1\oplus b_2)^\sim}
=\caputilde _{\cscript\ascript}\sqbrac{\caputilde _{\bscript\cscript}(\btilde _1)\oplus\caputilde _{\dscript\cscript}(\btilde _2)}\\
&=\caputilde _{\cscript\ascript}\caputilde _{\bscript\cscript}(\btilde _2)\oplus\caputilde _{\cscript\ascript}\caputilde _{\dscript\cscript}(\btilde _2)
=\caputilde _{\bscript\ascript}(\btilde _1)\oplus\caputilde _{\dscript\ascript}(\btilde _2)\\
&=J(b_1)\oplus J(b_2)
\end{align*}
If $\lambda\in [0,1]$, $b\in\escript (\bscript )$, then
\begin{equation*}
J(\lambda b)=\caputilde _{\bscript\ascript}\paren{(\lambda b)^\sim}=\caputilde _{\bscript\ascript}(\lambda\btilde )
=\lambda\caputilde _{\bscript\ascript}(\btilde )=\lambda J(b)
\end{equation*}
It is easy to check that the range of $J$ is a sub-COEA of $\escript\paren{\hscript (\ascript )}$.
\end{proof}

We now consider representations of a finite-dimensional COSEA $\escript$. We first need some preliminary lemmas. We saw in
Theorem~\ref{thm34} that any $a\in\escript$ with $a\ne 0$ has a unique representation $a=\lambda _1c_1\oplus\cdots\oplus\lambda _nc_n$,
$\lambda _i\ne 0$, $\lambda _i\ne\lambda _j$, $i\ne j$, and $c_i\in S(\escript )$. We denote by $\ceils{a}$ the smallest sharp element that dominates $a$.

\begin{lem}    
\label{lem53}
$\ceils{a}$ exists and $\ceils{a}=c_1\oplus\cdots\oplus c_n$.
\end{lem}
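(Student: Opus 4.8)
The plan is to prove that $q:=c_1\oplus\cdots\oplus c_n$ is the smallest sharp element dominating $a$; this simultaneously shows that $\ceils{a}$ exists and equals $q$. (We may assume $a\ne0$, the case $a=0$ being trivial.) Three facts must be established: $q\in S(\escript)$, $a\le q$, and $q\le d$ for every $d\in S(\escript)$ with $a\le d$.

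The first two are routine. By Theorem~\ref{thm34} the $c_i$ are sharp, lie in $\brac{a}''$, and are mutually orthogonal, being a subfamily of a family whose $\oplus$-sum is $1$; since a finite $\oplus$-sum of sharp elements is sharp and $\brac{a}''$ is a sub-COSEA, we get $q\in S(\escript)\cap\brac{a}''$. Since $\lambda_ic_i\le c_i$ by (CO2), the mutual orthogonality of the $c_i$ lets one form $a\oplus\bigl[(1-\lambda_1)c_1\oplus\cdots\oplus(1-\lambda_n)c_n\bigr]$ and check it equals $q$, so $a\le q$.

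For the third fact, let $d\in S(\escript)$ with $a\le d$. By Theorem~\ref{thm21}(v) we have $a\circ d=d\circ a$, i.e.\ $d\in\brac{a}'$, hence also $d'\in\brac{a}'$; since $c_i\in\brac{a}''$ this gives $c_i\mid d'$. Now $\lambda_ic_i\le a\le d$, so on the one hand $(\lambda_ic_i)\circ d'\le\lambda_ic_i\le d$ by Theorem~\ref{thm21}(i)--(ii), while on the other hand $(\lambda_ic_i)\circ d'=d'\circ(\lambda_ic_i)\le d'$ using $c_i\mid d'$ and (S6). As $d$ is sharp, $d\wedge d'=0$, so $(\lambda_ic_i)\circ d'=\lambda_i(c_i\circ d')=0$; since $\lambda_i\ne0$ and $\doubleab{\lambda_i(c_i\circ d')}=\lambda_i\doubleab{c_i\circ d'}$, Theorem~\ref{thm47}(iii) yields $c_i\circ d'=0$, whence $c_i\perp d'$, i.e.\ $c_i\le d$, by Theorem~\ref{thm21}(iv). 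Finally $c_i\circ d=d\circ c_i=c_i$ by Theorem~\ref{thm21}(v), and since $q\mid d$ (because $q\in\brac{a}''$ and $d\in\brac{a}'$) we obtain $q\circ d=d\circ q=\bigoplus_i(d\circ c_i)=q$, so $q\le d$ by Theorem~\ref{thm21}(v).

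The only real obstacle is this last point, and within it the passage from $\lambda_ic_i\le d$ to $c_i\le d$: in a general convex effect algebra a scalar multiple of $c_i$ lying below $d$ tells us nothing about $c_i$ itself, so one must genuinely exploit that $c_i$ commutes with $d$ — which is available precisely because Theorem~\ref{thm34} places $c_i$ in $\brac{a}''$ while $a\le d$ places $d$ in $\brac{a}'$ — and then remove the nonzero scalar $\lambda_i$ via the identity $\doubleab{\lambda b}=\lambda\doubleab{b}$ together with Theorem~\ref{thm47}(iii). Everything else is bookkeeping with the effect-algebra and sequential-product axioms.
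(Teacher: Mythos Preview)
Your proof is correct, and it shares with the paper's the key observation that $a\le d$ forces $d\in\brac{a}'$ (via Theorem~\ref{thm21}(v)), which combined with $c_i\in\brac{a}''$ from Theorem~\ref{thm34} gives $c_i\mid d$. From that point the two arguments diverge. The paper expands $a=a\circ b=\bigoplus_i\lambda_i(c_i\circ b)$ and compares it term by term with $a=\bigoplus_i\lambda_ic_i$: since $c_i\circ b\le c_i$ for each $i$ and the two sums agree, effect-algebra cancellation forces $c_i\circ b=c_i$, whence $c_i\le b$ directly. You instead pass to the complement $d'$, sandwich $(\lambda_ic_i)\circ d'$ below both $d$ and $d'$, invoke $d\wedge d'=0$ to kill it, and then strip the nonzero scalar $\lambda_i$ using the norm from Section~4. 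The paper's route is shorter and stays entirely within the effect-algebra order structure; yours makes the role of sharpness ($d\wedge d'=0$) and of the hypothesis $\lambda_i\ne 0$ more explicit, at the cost of importing the machinery of Theorem~\ref{thm47} where a one-line cancellation argument suffices.
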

\begin{proof}
Let $c=c_1\oplus\cdots\oplus c_n$. Then $c\in S(\escript )$ and $a\le c$. Suppose $b\in S(\escript )$ and $a\le b$. Then $a\circ b=b\circ a=a$. Hence, $b\mid c_i$ and
\begin{equation}                
\label{eq54}
\lambda _1c_1\oplus\cdots\oplus\lambda _nc_n=a=a\circ b=\lambda _1c_1\circ b\oplus\cdots\oplus\lambda _nc_n\circ b
\end{equation}
Now $c_i\circ b\le c_i$ and if $c_i\circ b<c_i$ we would contradict \eqref{eq54}. Hence, $c_i\circ b=c_i$ so that
\begin{equation*}
c\circ b=\oplus (c_i\circ b)=\oplus c_i=c
\end{equation*}
It follows that $c\le b$ so that $c=\ceils{a}$.
\end{proof}

We say that $a\in\escript$ is \textit{pseudo-invertible} if there exists a $b\in\escript$ such that $\ceils{b}=\ceils{a}$, $\doubleab{b}=1$ and
\begin{equation*}
a\circ b=b\circ a=\lambda\ceils{a}
\end{equation*}
for some $\lambda\in [0,1]$. We then call $b$ a \textit{pseudo-inverse} for $a$. (A slightly different definition as well as a version of the next lemma are given in \cite{wet181}.) We denote the smallest nonzero eigenvalue of $a$ by $\lambda (a)$.

\begin{lem}    
\label{lem54}
If $a\ne 0$, then $a$ has a unique pseudo-inverse and $\lambda =\lambda (a)$.
\end{lem}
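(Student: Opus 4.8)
The plan is to use the spectral decomposition of $a$ supplied by Theorem~\ref{thm34} and Lemma~\ref{lem53} to both exhibit a pseudo-inverse explicitly and show it is forced. Write $a=\lambda_1c_1\oplus\cdots\oplus\lambda_nc_n$ with $\lambda_i\ne 0$, $\lambda_i\ne\lambda_j$ for $i\ne j$, $c_i\in S(\escript)$, and (reordering) $\lambda_1=\lambda(a)=\min\sigma(a)\setminus\brac{0}$. Set $\ceils{a}=c_1\oplus\cdots\oplus c_n$ by Lemma~\ref{lem53}. The natural candidate for the pseudo-inverse is
\begin{equation*}
b=c_1\oplus\tfrac{\lambda_1}{\lambda_2}c_2\oplus\cdots\oplus\tfrac{\lambda_1}{\lambda_n}c_n,
\end{equation*}
since each $\lambda_1/\lambda_i\in(0,1]$ as $\lambda_1$ is the smallest. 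Then $b$ lies in $\escript(\ascript)$ for any context refining the $c_i$, so it is a genuine effect; $\ceils{b}=c_1\oplus\cdots\oplus c_n=\ceils{a}$ because every coefficient $\lambda_1/\lambda_i$ is nonzero; $\doubleab{b}=M(b)=\lambda_1/\lambda_1=1$ since $c_1$ carries eigenvalue $1$; and because $a,b$ both live in the commutative sub-COSEA generated by $\brac{c_i}$, the product is computed coordinatewise, giving $a\circ b=b\circ a=\lambda_1c_1\oplus\cdots\oplus\lambda_1c_n=\lambda_1\ceils{a}=\lambda(a)\ceils{a}$. This establishes existence with $\lambda=\lambda(a)$.

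For uniqueness, suppose $b$ is any pseudo-inverse: $\ceils{b}=\ceils{a}$, $\doubleab{b}=1$, and $a\circ b=b\circ a=\mu\ceils{a}$ for some $\mu\in[0,1]$. From $a\circ b=b\circ a$ we get $a\mid b$, so by Theorem~\ref{thm35} both $a$ and $b$ lie in $\escript(\dscript)$ for a common context $\dscript=\brac{d_k}$; hence $a=\oplus\alpha_k d_k$ and $b=\oplus\beta_k d_k$ with products taken coordinatewise. Since $\ceils{a}=\oplus\brac{d_k:\alpha_k\ne 0}$ and $\ceils{b}=\oplus\brac{d_k:\beta_k\ne 0}$, the condition $\ceils{b}=\ceils{a}$ forces $\alpha_k\ne 0$ iff $\beta_k\ne 0$. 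The relation $a\circ b=\mu\ceils{a}$ reads $\alpha_k\beta_k=\mu$ for every $k$ with $\alpha_k\ne 0$, so $\beta_k=\mu/\alpha_k$ on that support and $\beta_k=0$ elsewhere; thus $b$ is determined by $a$ and $\mu$. Finally $\doubleab{b}=1$ means $\max_k\beta_k=1$, i.e.\ $\max\brac{\mu/\alpha_k:\alpha_k\ne 0}=1$, i.e.\ $\mu=\min\brac{\alpha_k:\alpha_k\ne 0}=\lambda(a)$; so $\mu=\lambda(a)$ is forced and then $b$ is uniquely pinned down. Grouping the $d_k$ by the value of $\alpha_k$ recovers the $c_i$'s and the explicit $b$ above, confirming consistency.

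The main obstacle is the bookkeeping in matching the two spectral representations: one must be careful that the decomposition $a=\oplus\alpha_kd_k$ over the common context $\dscript$ refines the canonical decomposition $a=\oplus\lambda_ic_i$ of Theorem~\ref{thm34}, so that ``$\alpha_k$ runs over the $\lambda_i$ with multiplicity.'' Once that identification is in place — using that sums of sharp elements with equal coefficients collapse, exactly as in the passage preceding Theorem~\ref{thm34} — the coordinatewise computation of $a\circ b$ inside the commutative sub-COSEA $\escript(\dscript)$ and the characterization $\doubleab{b}=M(b)$ from Section~4 make both existence and uniqueness routine. I would also note in passing that the argument shows $b\mid a$ automatically, so the hypothesis $a\circ b=b\circ a$ in the definition of pseudo-inverse is not an extra constraint beyond the coordinatewise formula.
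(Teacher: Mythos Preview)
Your proof is correct and follows essentially the same approach as the paper: construct the pseudo-inverse explicitly from the spectral decomposition $a=\oplus\lambda_ic_i$ as $b=\oplus(\lambda(a)/\lambda_i)c_i$, then for uniqueness use commutativity of a candidate pseudo-inverse with $a$ to force the coordinatewise relation $\beta_k=\mu/\alpha_k$ and read off $\mu=\lambda(a)$ from $\doubleab{b}=1$. The only noteworthy difference is that in the uniqueness step the paper simply asserts $d=\mu_1c_1\oplus\cdots\oplus\mu_nc_n$ in terms of the same sharp components $c_i$, whereas you justify this more carefully by invoking Theorem~\ref{thm35} to obtain a common context $\dscript$ and then observing that the relation $\alpha_k\beta_k=\mu$ forces $\beta_k$ to be constant on each block $\brac{d_k:\alpha_k=\lambda_i}$; this extra care is welcome but does not change the argument's substance.
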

\begin{proof}
If $a\ne 0$, as before $a$ has the unique representation $a=\lambda _1c_1\oplus\cdots\oplus\lambda _nc_n$, $\lambda _i\ne 0$,
$\lambda _i\ne\lambda _j$, $i\ne j$, $c_i\in S(\escript )$. Letting
\begin{equation*}
b=\lambda (a)\paren{\frac{1}{\lambda _1}\,c_1\oplus\cdots\oplus\frac{1}{\lambda _n}\,c_n}
\end{equation*}
we have from Lemma~\ref{lem53} that
\begin{equation*}
a\circ b=b\circ a=\lambda (a)(c_1\oplus\cdots\oplus c_n)=\lambda (a)\ceils{a}
\end{equation*}
Moreover, $\doubleab{b}=1$, $\ceils{b}=\ceils{a}=c_1\oplus\cdots\oplus c_n$. For uniqueness, suppose $\ceils{d}=\ceils{a}$, $\doubleab{d}=1$ and $a\circ d=d\circ a=\lambda\ceils{a}$. Then $d=\mu _1c_1\oplus\cdots\oplus\mu _nc_n$ and
\begin{equation*}
\mu _1\lambda _1c_1\oplus\cdots\oplus\mu _n\lambda _nc_n=a\circ d=\lambda\ceils{a}
\end{equation*}
This implies that $\mu _1\lambda _i=\lambda$ for all $i$. Hence, $\mu _i=\lambda/\lambda _i$. Since $\doubleab{d}=1$ we have $M(d)=1$ which implies that
\begin{equation*}
\frac{\lambda}{\lambda (a)}=\frac{\lambda}{\min (\lambda _i}=\max\paren{\frac{\lambda}{\lambda _i}}=\max (\mu _i)=\doubleab{d}=1
\end{equation*}
Therefore, $\lambda (a)=\lambda$ and $\mu _i=\lambda (a)/\lambda _i$ so $d=b$.
\end{proof}

We denote the unique pseudo-inverse of $a$ by $a^{-1}$. If $a\ne 0$, $\mu >0$ and $\mu a\in\escript$, then it is easy to show that
$(\mu a)^{-1}=a^{-1}$. It follows that $(a^{-1})^{-1}=a/\doubleab{a}$ and $\paren{(a^{-1})^{-1}}^{-1}=a^{-1}$. We can interpret $a^{-1}$ operationally as the effect that reverses $a$ without interference but with a reduction of intensity by a factor $\lambda (a)$. If $\ceils{a}=1$, we say that $a$ is \textit{invertible} and $a^{-1}$ is the \textit{inverse} of $a$. We say that $\escript$ is \textit{inverse-preserving} if whenever $a$ and $b$ are invertible, then $a\circ b$ is as well and $(a\circ b)^{-1}=a^{-1}\circ b^{-1}$. Notice that the order of $a^{-1}$ and $b^{-1}$ on the right is a bit unexpected but this is the correct order for a sequential product $a\circ b$ in which $a$ is measured first. It is clear that a classical COSEA is inverse-preserving. That a Hilbertian sub-COSEA is also will be shown in Theorem~5.6.

\begin{lem}    
\label{lem55}
{\rm (i)}\enspace $a\in\escript$ is invertible if and only if $a$ does not have a zero eigenvalue.
{\rm (ii)}\enspace If $a\perp b$ and $a$ is invertible then $a\oplus b$ is invertible.
\end{lem}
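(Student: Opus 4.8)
The plan is to characterize invertibility through the spectrum and then exploit the spectral representation. For part (i), recall that $\ceils{a}$ is the smallest sharp element dominating $a$, so by Lemma~\ref{lem53}, if $a$ has the unique representation $a=\lambda _1c_1\oplus\cdots\oplus\lambda _nc_n$ with $\lambda _i\ne 0$, $\lambda _i\ne\lambda _j$ for $i\ne j$, and $c_i\in S(\escript )$, then $\ceils{a}=c_1\oplus\cdots\oplus c_n$. The key observation is that $\ceils{a}=1$ precisely when $c_1\oplus\cdots\oplus c_n=1$, i.e.\ when the sharp parts in the representation of $a$ exhaust the unit. Now, every $b\in\escript$ can be written relative to a context $\brac{d_j}\in\cscript(\escript)$ as $b=\oplus\mu _jd_j$, and $\sigma(b)=\brac{\mu _j}$. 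I would argue that $0\in\sigma(a)$ exactly when the context-expansion of $a$ has some coefficient equal to $0$, which happens exactly when $\oplus\brac{d_j\colon\mu _j\ne 0}\ne 1$; comparing with the sharp representation (whose sharp elements $c_i$ are sums of such $d_j$ for the nonzero eigenvalues) shows this is equivalent to $\ceils{a}\ne 1$. Hence $a$ is invertible iff $0\notin\sigma(a)$.

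For part (ii), suppose $a\perp b$ and $a$ is invertible. I want to show $a\oplus b$ has no zero eigenvalue, then apply (i). Since $a\le a\oplus b$, Theorem~\ref{thm47}(iv) is not quite what's needed, but the useful fact is monotonicity: for any $\omega\in\Omega(\escript)$, $\omega(a)\le\omega(a\oplus b)$. By Lemma~\ref{lem46}, $r(a)=\brac{\omega(a)\colon\omega\in\Omega(\escript)}$, so $m(a)=\min\brac{\omega(a)\colon\omega\in\Omega(\escript)}$, and similarly $m(a\oplus b)=\min\brac{\omega(a\oplus b)\colon\omega\in\Omega(\escript)}$. From invertibility of $a$ and part (i), $0\notin\sigma(a)$, so in particular $m(a)=\lambda(a)>0$. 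Then for every $\omega$, $\omega(a\oplus b)\ge\omega(a)\ge m(a)>0$, whence $m(a\oplus b)\ge m(a)>0$, so $0\notin r(a\oplus b)\supseteq\sigma(a\oplus b)$. Thus $a\oplus b$ has no zero eigenvalue and is invertible by (i).

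The main obstacle I anticipate is part (i): establishing cleanly that $0\in\sigma(a)\iff\ceils{a}\ne 1$ requires reconciling two different representations of $a$ — the sharp-coefficient representation from Theorem~\ref{thm34} and a context representation $\oplus\mu _jd_j$. One must check that the sharp elements $c_i$ appearing in Theorem~\ref{thm34} are precisely obtained by grouping the one-dimensional $d_j$ according to equal eigenvalues (which follows since $c_i\in\brac{a}''$ and the $d_j$ with $\mu_j=\lambda_i$ all lie under $c_i$), and that the eigenvalue $0$ corresponds exactly to the ``missing'' sharp complement $1\ominus\ceils{a}$. The subtlety is that $0$ is an eigenvalue of $a$ iff $1\ominus\ceils{a}\ne 0$, i.e.\ iff there is a nonzero $c\in S_1(\escript)$ with $c\le\ceils{a}'$, which gives $a\circ c=0=0\cdot c$ by Theorem~\ref{thm21}(iv). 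Conversely if $\ceils{a}=1$ then the context underlying the sharp representation has all coefficients among the nonzero $\lambda_i$, so no zero eigenvalue appears. Once this equivalence is in hand, part (ii) is the short monotonicity argument above.
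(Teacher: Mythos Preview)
Your argument is correct. For part~(i) you are doing exactly what the paper does, just with more detail: the paper simply asserts in one line that $0\in\sigma(a)\iff\ceils{a}\ne 1$, whereas you spell out why this equivalence holds by matching the context representation against the unique sharp representation of Theorem~\ref{thm34} and Lemma~\ref{lem53}. The extra care is fine but not strictly needed.

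For part~(ii) you take a genuinely different route. The paper argues by contradiction: if $a\oplus b$ were not invertible then $\ceils{a\oplus b}\ne 1$, so one can find $c\in S_1(\escript)$ with $c\circ(a\oplus b)=0$; additivity then forces $c\circ a=0$, contradicting $\ceils{a}=1$. Your approach instead uses the state-space characterization of the numerical range (Lemma~\ref{lem46}): from $\omega(a\oplus b)\ge\omega(a)$ for every state $\omega$ you get $m(a\oplus b)\ge m(a)>0$, whence $0\notin r(a\oplus b)\supseteq\sigma(a\oplus b)$. Both are short; the paper's argument stays entirely within the algebra and exhibits a concrete witness, while yours is a clean monotonicity argument that leverages the order-determining set of states and gives the quantitative byproduct $m(a\oplus b)\ge m(a)$.
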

\begin{proof}
(i)\enspace If $0\in\sigma (a)$, then $\ceils{a}\ne 1$ so $a$ is not invertible. If $0\notin\sigma (a)$, then $\ceils{a}=1$ so $a$ is invertible.
(ii)\enspace If $a$ is invertible, the $\ceils{a}=1$. Suppose $a\oplus b$ is not invertible. Then $\ceils{a\oplus b}\ne 1$ so there exists a $c\in S_1(\escript$ such that
\begin{equation*}
c\circ a\oplus c\circ b=c\circ (a\oplus b)=(a\oplus b)\circ c=0
\end{equation*}
Hence, $c\circ a=0$ which contradicts $\ceils{a}=1$.
\end{proof}

When we consider a sub-Hilbertian COSEA $\fscript\subseteq\escript (H)$ we are assuming the standard sequential product
$A\circ B=A^{1/2}BA^{1/2}$ on $\fscript$.

\begin{thm}    
\label{thm56}
A finite-dimensional COSEA $\escript$ is isomorphic to a finite-dimensional sub-Hilbertian COSEA $\fscript\subseteq\escript (H)$ if and only if
$\escript$ is strongly comparable and inverse-preserving.
\end{thm}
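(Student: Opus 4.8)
The plan is to build on Theorem~\ref{thm52}, which already establishes the statement at the level of convex effect algebras, and to use the inverse-preserving hypothesis to promote the resulting COEA isomorphism to a COSEA isomorphism. For \emph{necessity}, suppose $\escript$ is isomorphic to a finite-dimensional sub-Hilbertian COSEA; we may assume $\escript=\fscript\subseteq\escript(H)$ with $A\circ B=A^{1/2}BA^{1/2}$ and $\dim H<\infty$. Forgetting $\circ$, $\fscript$ is a sub-Hilbertian COEA, so by Theorem~\ref{thm52} $\escript$ is spectral and strongly comparable. To see that $\escript(H)$, hence every sub-COSEA, is inverse-preserving, fix an invertible $A\in\fscript$ and write its spectral decomposition $A=\mu_1Q_1\oplus\cdots\oplus\mu_kQ_k$ with distinct nonzero $\mu_i$; by Theorem~\ref{thm34} the $Q_i$ lie in $\fscript$, so $b:=\bigoplus_i\paren{\lambda(A)/\mu_i}Q_i\in\fscript$, and one checks $\ceils{b}=\ceils{A}=I$, $\doubleab{b}=1$ and $A\circ b=b\circ A=\lambda(A)I$, whence $A^{-1}=b$ by Lemma~\ref{lem54}. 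Using $\sigma(A\circ B)=\sigma(AB)$ together with the operator identity $\paren{A^{1/2}BA^{1/2}}^{-1}=A^{-1/2}B^{-1}A^{-1/2}$, a direct computation then yields $(A\circ B)^{-1}=A^{-1}\circ B^{-1}$ for all invertible $A,B$, which is the inverse-preserving property.

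For \emph{sufficiency}, assume $\escript$ is strongly comparable and inverse-preserving. Every COSEA is spectral, so Theorem~\ref{thm52} supplies, after fixing a context $\ascript\in\cscript(\escript)$, an affine COEA isomorphism $J\colon\escript\to\fscript$ onto a sub-COEA $\fscript$ of $\escript\paren{\hscript(\ascript)}$, with $J(b)=\caputilde_{\bscript\ascript}(\btilde)$ for $b\in\escript(\bscript)$. The whole problem reduces to showing that $J$ also intertwines the two sequential products, i.e.\ $J(a\circ b)=J(a)^{1/2}J(b)J(a)^{1/2}$ for all $a,b\in\escript$; this simultaneously proves $\fscript$ is closed under the standard product (so is a sub-Hilbertian COSEA) and that $J$ is a COSEA isomorphism. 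I would first treat the compatible case: if $a\mid b$ then by Theorem~\ref{thm35} $a,b\in\escript(\bscript)$ for a common context $\bscript$, on which $\circ$ is the coordinatewise product; since $J$ sends $\escript(\bscript)$ onto operators simultaneously diagonalized by the orthonormal basis $\brac{U_{\bscript\ascript}\bhat_i}$ and commuting positive operators satisfy $A^{1/2}BA^{1/2}=AB$, the desired identity follows, and in particular $J$ carries $S_1(\escript)$ into the rank-one projections and $S(\escript)$ into the projections.

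For a general, interfering pair $a,b$ the plan is to use the inverse-preserving axiom to collapse to the compatible case. When $a$ is invertible, composition with $a^{-1}$ (which is spectrally compatible with $a$, as in the proof of necessity) inverts composition with $a$ up to the scalar $\lambda(a)$; expressing $b$ through $a\circ b$ and $a^{-1}$, applying $J$, and invoking the compatible case together with $(a\circ b)^{-1}=a^{-1}\circ b^{-1}$ forces $J(a\circ b)$ to equal $J(a)^{1/2}J(b)J(a)^{1/2}$. A limiting argument inside the finite-dimensional $\escript$ removes the assumption that $a$ be invertible, and additivity of $\circ$ in the second variable combined with the $\oplus$- and scalar-preservation of $J$ yields the identity for all $a,b$. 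The main obstacle is precisely this last reduction: since $\circ$ is \emph{not} additive in its first argument, $J(a\circ b)$ cannot be read off from a spectral decomposition of $a$, and the inverse-preserving property is the only available handle on how the first factor ``conjugates'' the second; carrying out the reduction carefully — tracking the scalars $\lambda(a)$ and approximating non-invertible elements — is where the real work lies.
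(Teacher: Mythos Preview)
Your necessity argument matches the paper's: identify $\escript$ with a sub-COSEA of $\escript(H)$, get strong comparability from Theorem~\ref{thm52}, and verify inverse-preservation via $(A^{1/2}BA^{1/2})^\wedge=\widehat{A}^{1/2}\widehat{B}\widehat{A}^{1/2}$ together with $A^{-1}=\lambda(A)\widehat{A}$.

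For sufficiency, however, your route diverges from the paper's and carries a real gap. The paper does \emph{not} try to verify $J(a\circ b)=J(a)^{1/2}J(b)J(a)^{1/2}$ directly. Instead it transports the product, defining $J(a)\ctimes J(b):=J(a\circ b)$ on $\fscript$, checks (citing \cite{gud18}) that $(\fscript,\ctimes)$ is again a COSEA, verifies that $\ctimes$ inherits the inverse-preserving property, and then invokes an external uniqueness result (Theorem~5.19 of \cite{wet181}) asserting that an inverse-preserving sequential product on a sub-COSEA of $\escript(H)$ must be the standard one; this forces $\ctimes=\circ$ and hence $J$ is a COSEA isomorphism.

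Your attempt to bypass that uniqueness theorem does not go through as sketched. Using $a^{-1}\mid a$ and (S4) you do get $a^{-1}\circ(a\circ b)=\lambda(a)\,b$, but applying $J$ yields $J\!\paren{a^{-1}\circ(a\circ b)}=\lambda(a)J(b)$, and evaluating the left side requires exactly the identity $J(a^{-1}\circ c)=J(a^{-1})^{1/2}J(c)J(a^{-1})^{1/2}$ with $c=a\circ b$, i.e.\ the very statement you are trying to prove. Likewise, combining $(a\circ b)^{-1}=a^{-1}\circ b^{-1}$ with the fact that $J$ preserves pseudo-inverses gives only $J(a^{-1}\circ b^{-1})=J(a\circ b)^{-1}$, which is equivalent to the target identity, not a reduction of it. So the ``reduction to the compatible case'' is circular. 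Finally, the ``limiting argument'' for non-invertible $a$ presupposes continuity of the sequential product in its first variable, which is not among the COSEA axioms and is not otherwise established here. The substance you are missing is precisely the content of Van~de~Wetering's uniqueness theorem; either cite it (as the paper does) or supply an independent proof of it.
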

\begin{proof}
Suppose $\escript$ is COSEA isomorphic to $\fscript\subseteq\escript (H)$. For simplicity, we can assume that $\escript =\fscript$. We have shown in Theorem~\ref{thm52} that $\escript$ is strongly comparable. To show that $\escript$ is inverse-preserving, suppose that $A,B\in\escript$ are invertible. It follows from Lemma~\ref{lem55}(i)) that $A$ and $B$ are invertible in the usual operator sense. To avoid confusion, denote the usual operator inverse of $A$ by $\capahat$. We then have that
\begin{equation}                
\label{eq55}
(A\circ B)^\wedge =(A^{1/2}BA^{1/2})^\wedge =\capahat ^{1/2}\capbhat\capahat ^{1/2}=\capahat\circ\capbhat
\end{equation}
Writing $A^{-1}$ as we previously define it we have that
\begin{equation*}
A\circ A^{-1}=A^{-1}\circ A=\lambda (A)I
\end{equation*}
Therefore, $A^{-1}=\lambda (A)\capahat$. Hence, $\lambda (A)\capahat\in\escript$ although $\capahat\notin\escript$ in general. Similarly, $B^{-1}=\lambda (B)\capbhat\in\escript$ and we can rewrite \eqref{eq55} as
\begin{equation*}
A^{-1}\circ B^{-1}=\lambda (A)\lambda (B)\capahat\circ\capbhat =\lambda (A)\lambda (B)(A\circ B)^\wedge =(A\circ B)^{-1}
\end{equation*}
Hence, $(A\circ B)^{-1}$ exists and equals $\lambda (A)\lambda (B)(A\circ B)^\wedge$.

Conversely, suppose $\escript$ is strongly comparable and inverse-preserving. We have previously observed that $\escript$ is automatically spectral. Applying Theorem~\ref{thm52} there exists a COSEA isomorphism $J$ from $\escript$ onto a Hilbertian sub-COSEA $\fscript$ of
$\escript (H)$. Define the product $J(a)\ctimes J(b)=J(a\circ b)$ on $\fscript$. It is shown in \cite{gud18} that $\fscript$ becomes a COSEA under this product. If $a\in\escript$ is invertible, then $J(a)$ is invertible with $J(a)^{-1}=J(a^{-1})$. Indeed, $\doubleab{J(a^{-1})}=1$, $J(a^{-1})\mid J(a)$ and
\begin{equation*}
J(a)\ctimes J(a^{-1})=J(a\circ a^{-1})=J(\lambda (a)1)=\lambda (a)I
\end{equation*}
If $\escript$ is inverse preserving, then $\ctimes$ is also inverse preserving because if $J(a)$ and $J(b)$ are invertible, then $a$ and $b$ are invertible and
\begin{align*}
\sqbrac{J(a)\ctimes J(b)}^{-1}&=\sqbrac{J(a\circ b)}^{-1}=J\sqbrac{(a\circ b)^{-1}}=J(a^{-1}\circ b^{-1})\\
&=J(a^{-1})\ctimes J(b^{-1})=J(a)^{-1}\ctimes J(b)^{-1}
\end{align*}
We conclude that $\fscript\subseteq\escript (H)$ is an inverse preserving COSEA with sequence product. But $\fscript$ is also an inverse preserving COSEA under the standard sequential product $\circ$. It follows from Theorem~5.19 in \cite{wet181} that
$J(a)\ctimes J(b)=J(a)\circ J(b)$. Hence, $J\colon\escript\to\fscript$ is a COSEA isomorphism.
\end{proof}

\section{Closing Comments} 
A natural question the reader may ask is: ``What is the relationship between contexts as discussed here and the concept of contextuality considered in the literature \cite{ab11,lwe18,spe05}?'' We shall devote a few sentences to this question and leave a more complete investigation to a future work. The notion of contextuality is based on an ontological model for a quantum system. Such a model is described by a measurable space
$(\Lambda,\Sigma )$ where $\Lambda$ is the set of pure states for the system. Preparation procedures, state transformations and measurements are defined by stochastic maps on $\Lambda$ that satisfy certain conditions. One of the main assumptions is that these maps combine to reproduce the experimental statistics of the system in terms of conditional probabilities. We define preparation, transformation and measurement non-contextuality when these stochastic maps satisfy injectiveness properties. Our point is that the concept of contexts can be employed to construct such ontological models by defining the stochastic maps on contexts. Conversely, the stochastic maps for an ontological model will have their supports precisely on the contexts that we have defined in this paper.

Finally, we should mention that other approaches to the mathematical foundations of quantum mechanics have been recently explored. In particular, there have been recent efforts to provide a new foundation for the Hilbert space framework of quantum theory \cite{cdp11,coe10,har12}. The main difference is that these works emphasize the role of composite systems and general transformations, while the COSEA formalism focuses on individual systems and on transformations induced by conditioning with sharp effects.

\end{document}